\newtheorem{definition}{Definition}[section]
\newtheorem{theorem}{Theorem}
\newtheorem{lemma}{Lemma}
\def\D{{\mathbf D}}
\def\R{{\mathbf R}}
\def\J{{\mathbf J}}
\def\U{{\mathbf U}}
\def\S{{\mathbf S}}
\def\x{{\mathbf x}}
\def\bk{{\mathbf k}}
\def\y{{\mathbf y}}
\def\k{{\mathbf k}}
\def\h{{\mathbf h}}
\def\f{{\mathbf f}}
\def\g{{\mathbf g}}
\def\n{{\mathbf n}}
\def\q{{\mathbf q}}
\def\0{{\mathbf 0}}
\def\Sigmab{{\boldsymbol \Sigma}}
\def\lambdab{{\boldsymbol \lambda}}
\def\xib{{\boldsymbol \xi}}
\newcommand{\cov}{\ensuremath{\mathrm{cov}}}
\newcommand{\bn}{\ensuremath{\mathbf{n}}}
\title{Detecting Directionality in Random Fields Using the Monogenic Signal}
\author{Sofia Olhede, David Ram\'irez,~\IEEEmembership{Member,~IEEE} and Peter J. Schreier,~\IEEEmembership{Senior Member,~IEEE}
\thanks{This work was presented in part at the 2012 IEEE International Conference on Image Processing (ICIP).}
}
\begin{document}

\maketitle

\begin{abstract}
Detecting and analyzing directional structures in images is important in many applications since one-dimensional patterns often correspond to important features such as object contours or trajectories. Classifying a structure as directional or non-directional requires a measure to quantify the degree of directionality and a threshold, which needs to be chosen based on the statistics of the image. In order to do this, we model the image as a random field. So far, little research has been performed on analyzing directionality in random fields. In this paper, we propose a measure to quantify the degree of directionality based on the random monogenic signal, which enables a unique decomposition of a 2D signal into local amplitude, local orientation, and local phase. We investigate the second-order statistical properties of the monogenic signal for isotropic, anisotropic, and unidirectional random fields. We analyze our measure of directionality for finite-size sample images, and determine a threshold to distinguish between unidirectional and non-unidirectional random fields, which allows the automatic classification of images.
\end{abstract}
\vspace{0.1in}
\begin{keywords} Anisotropy, monogenic signal, quaternions, Riesz transform, stationary random field, unidirectional.
\end{keywords}

\section{Introduction}

The detection and analysis of directional structure in images is crucial to many applications since one-dimensional patterns often correspond to important image features such as object contours or trajectories. Detecting one-dimensional patterns and estimating their orientation is particularly important; see, for instance, the detection of ship wakes in Synthetic Aperture Radar (SAR) images \cite{Copeland:1995}, optical flow estimation \cite{Felsberg:05,Biguen:1991} and its application to myocardial motion estimation \cite{Alessandrini:2013}, the analysis of texture by estimating multidimensional orientation \cite{Biguen:1991}, the efficient coding of local differential structures in images \cite{Silvan:1999}, image encoding, labeling and reconstruction \cite{Barth:1993}, and the analysis of superimposed directional patterns, which may occur in X-ray projection imaging \cite{Aach:2006}.

Many of the techniques developed for orientation estimation assume that there is indeed a directional structure to be estimated. In the absence of such a directional structure, they will therefore still produce an estimated direction---which would be meaningless. In order to address this problem, several measures for the degree of directionality have been defined: In \cite{Krieger1996} the authors propose a measure called ``intrinsic dimensionality,'' which is related to the degrees of freedom of an image \cite{zetzsche:1990}. The intrinsic dimensionality of a constant image is zero; if the image can be expressed as a univariate function of a linear combination of the coordinates, its dimensionality is one; otherwise it is two. Another measure related to the intrinsic dimensionality is the Gaussian curvature \cite{spivak1975comprehensive}. A different measure aimed at detecting edges and corners is presented in \cite{Harris88acombined}. Finally, \cite{multiresolution_monogenic} introduces a measure called ``coherency'' because it is normalized to values between 0 and 1. The coherency is closely related to the measure proposed in our paper, yet \cite{multiresolution_monogenic} does not provide a detailed statistical analysis. Such an analysis is needed because having a normalized measure for the degree of directionality alone may not help us classify a structure as directional or nondirectional: For example, is a structure with degree of directionality equal to, say, 0.6 directional or not? A meaningful answer to this question can only be given based on sound statistical arguments, for a specified null model. Moreover, having a threshold based on statistical arguments will allow us to \emph{automatically} classify images, which is essential when dealing with large amounts of data, for instance, in the detection of ship wakes in the ocean \cite{Copeland:1995}. In order to derive a threshold, we model the image as a {\em stationary random field} and then phrase our problem as a hypothesis test: ``Is there a directional structure or not?'' Deciding this question requires not only a measure for directionality but also a {\em threshold}, based on the statistics of the image, above which a structure can indeed be regarded as directional. 

There has been little work on detecting and analyzing directional structures in random fields. A random field is a stochastic process whose argument is a multidimensional vector. In our case, the argument is a 2D vector, and a spatial 2D random field may also be called a random image. Random fields are useful models for applications in areas as diverse as geophysics \cite{Goff,Schoenberg}, oceanography \cite{Sokolov:1999}, and medical imaging \cite{Karssemeijer}. A fundamental characteristic of a spatial random field is the degree of rotational invariance of its second-order statistical properties \cite[p.~57]{Christakos1992}. A spatially {\em isotropic} random field exhibits perfect rotational invariance as its spatial covariance displays circular contour levels. Fields that are not isotropic are called {\em anisotropic}. A common subclass is the class of {\em geometrically anisotropic} fields \cite[p.~61]{Christakos1992} that have covariances with elliptical rather than circular contours. Taking this idea to the extreme, we arrive at {\em unidirectional} random fields, where there exists a rotation so that there is variation only in one of the two axes. Figure \ref{fig:random_fields} shows samples of three random images: isotropic, geometrically anisotropic, and unidirectional. This shows the kinds of features present in a field as it becomes more anisotropic.

A large fraction of the work in statistics has focussed on isotropic random fields---which is at least partially due to the fact that these have convenient mathematical properties---and many spatial models utilize isotropy \cite{Cressie:1991}. Yet isotropic random fields are obviously unsuitable to model directional structures. 
In order to deal with anisotropic random fields, there are a number of models \cite{Cressie:1991,Christakos1992,Bonami:2003}, the most common of which is geometric anisotropy. However, for highly directional random fields, it would seem that the most appropriate model should be one that consists of unidirectional components. 

In the deterministic case, there exist quite a few papers that deal with the estimation of local orientation, e.g., \cite{Aach:2006,Felsberg:05,Biguen:1991,Copeland:1995,DiClaudio:2010,Silvan:1999,multiresolution_monogenic}. For a brief summary of some elementary techniques, we refer the reader to Section \ref{sec:previous}. We use an approach based on the {\em random monogenic signal}, which allows us to define a statistical measure for the degree of unidirectionality and to construct statistical tests for the presence of directional structure. The monogenic signal \cite{Felsberg2001} enables the unique decomposition of a two-dimensional real image $f(\x)$, with $\x=\left[x_1,x_2\right]^T$, into a {\em local} amplitude, a {\em local} orientation, and a {\em local} phase. It is arguably the most compelling 2D-generalization of the analytic signal, which enables the unique decomposition of a real signal $x(t) = a(t)\cos\phi(t)$ into local amplitude $a(t)$ and local phase $\phi(t)$. The monogenic signal has received considerable attention, with applications in image processing ranging from contour detection and local structure analysis \cite{Felsberg:2004,Zang:2007}, stereo, motion estimation, and image registration \cite{Mellor:2005,felsberg2002disparity} to image segmentation and phase-contrast imaging \cite{Ali:2008}. There have also been extensions of the concept to a multiresolution monogenic signal in the wavelet domain \cite{Olhede2009,multiresolution_monogenic}.

As we will review in Section \ref{sec:prelim}, the monogenic signal is constructed by complementing the original signal $f(\x)$ with its two Riesz transforms $g(\x)$ and $h(\x)$ \cite{Stein}. It can either be represented as a three-dimensional vector $[f(\x), g(\x), h(\x)]^{T}$ or as a quaternion $m(\x) = f(\x) + ig(\x) + jh(\x) + k\cdot 0$, where the $k$-part remains zero. So far, work on monogenic signals has focussed only on the deterministic case. Aside from our own conference paper \cite{Olhede:2012}, we are not aware of any work that has been performed on a {\em random} monogenic signal. In Section \ref{sec:sec-ord-char}, we investigate the second-order statistical properties of random monogenic signals for stationary random fields, and in Section \ref{sec:iso-direct}, we examine these properties for the special and important cases of isotropic, geometrically anisotropic, and unidirectional random fields.

In Section \ref{sec:test}, we introduce a measure to quantify the degree of unidirectionality for a random field. A related local measure of unidirectionality, which is appropriate for deterministic signals, has been defined in  \cite{multiresolution_monogenic}. We also provide a thorough statistical analysis of our measure. In particular, we show that, for an infinite-size sample image, it is identically one only for unidirectional random fields and zero for isotropic random fields, with values in between for other degrees of anisotropy. For a finite-size sample, this measure of directionality is no longer guaranteed to be one. We determine its finite sample expectation, by carefully expanding the properties depending on the size of the image, and a threshold to distinguish between unidirectional and non-unidirectional fields. Finally, in Section \ref{sec_numres}, we illustrate the performance of our measure on simulated and real random fields. 

\section{Preliminaries}
\label{sec:prelim}

\subsection{Quaternion random vectors}

In the one-dimensional case, it is common practice to encode the signal and its Hilbert transform in one complex-valued analytic signal. While this is not strictly necessary (obviously one could also work with a 2D vector instead), this practice is universally accepted because it illuminates and simplifies matter significantly. If we would like to do something similar with the monogenic signal, we need to employ the algebra of quaternions. 

In this section, we provide a brief review of quaternion algebra and the second-order analysis of quaternion random vectors (see, e.g., \cite{Via2010}). Quaternions are 4D hypercomplex numbers, first proposed by Hamilton \cite{hamilton_quaternions}, and defined as
\begin{equation}
q = a + b i + c j+ d k, 
\end{equation}
where $a$, $b$, $c$, and $d$ are real numbers and $i$, $j$, and $k$ are imaginary units satisfying
\begin{align}
i j &= k = - j i &
j k &= i = -k j \\
k i &= j = -i k &
i^2 &= j^2 = k^2 = i j k = -1.
\end{align}
It is easy to check that quaternions form an algebra $\mathbb{H}$ that is non-commutative, i.e., for $q_1, q_2 \in \mathbb{H}$, generally $q_1 q_2 \neq q_2 q_1$. The conjugate of $q$ is defined as $q^{\ast}= a - b i - c j -d k$ and the norm is $|q| = \sqrt{q q^{\ast}} = \sqrt{a^2 + b^2 + c^2 + d^2}$, which satisfies $|q_1 q_2| = |q_1| |q_2|$. The inverse of $q$, for $q \neq 0$, is $q^{-1} = q^{\ast}/|q|^2$, the inner product between $q_1$ and $q_2$ is $\text{Re}(q_1 q_2)$, and two quaternions are orthogonal if their inner product is zero. Finally, the involution of $q$ over a pure unit quaternion $\eta$ is given by $q^{(\eta)} = -\eta q \eta$. For a more complete review of quaternions, we refer the reader to \cite{ward_quaternions}.
        
\label{sec:rev_quat}

The second-order statistical analysis of a zero-mean {\em complex} random vector $\x$ is based on the covariance matrix $\R_{\x, \x} = \cov\{\x, \x\} = E[\x \x^H]$ and the complementary covariance matrix $\R_{\x, \x^{\ast}} = \cov\{\x, \x^{\ast}\} = E[\x \x^T]$ \cite{book_peter}. A common assumption in complex-valued signal processing is propriety, which is characterized by vanishing complementary covariance. This can be visualized as rotational invariance because, in the proper case, $\x$ and $\x e^{j\alpha}$ have the same second-order moments for arbitrary real angle $\alpha$. While propriety can often be justified, there are also many situations where it is a very poor model of the underlying physics \cite{book_peter}.

To completely characterize the second-order statistics of a zero-mean {\em quaternion} random vector $\q$, we need the covariance matrix and \emph{three} complementary covariance matrices \cite{Via2010}. There is some freedom in how to choose these complementary covariance matrices. We will employ the most useful choice
\begin{align}
\R_{\q, \q} &= \cov\{\q, \q\}, & \R_{\q, \q^{(\eta)}}, &= \cov\{\q, \q^{(\eta)}\} & \R_{\q, \q^{(\eta')}}, &= \cov\{\q, \q^{(\eta')}\} & \R_{\q, \q^{(\eta'')}} &= \cov\{\q, \q^{(\eta'')}\},
\end{align}
where $\eta, \eta'$ and $\eta''$ are three {\em orthogonal} pure unit quaternions, for instance, $\eta = i$, $\eta' = j$, $\eta'' = k$. Because there are three complementary covariance matrices, there are different kinds of propriety for quaternion random vectors \cite{Via2010}. The only kind of interest to us is $\mathbb{C}^{\eta}$-propriety. A quaternion random vector is $\mathbb{C}^{\eta}$-proper if and only if both $\R_{\q, \q^{(\eta')}}$ and $\R_{\q, \q^{(\eta'')}}$ vanish. This obviously depends on an appropriate choice of $\eta$. If $\q$ is $\mathbb{C}^{\eta}$-proper for some $\eta$, it is generally improper for a different choice of $\eta$. In our case, the choice of $\eta$ will be linked to the directional structure of the random field.

\subsection{Monogenic signal}

The monogenic signal was introduced by Felsberg and Sommer in \cite{Felsberg2001}, but it had already seen some prior use in applied mathematics \cite{Dixon,Duffin} and geophysics \cite{Nabighian}. It is a convenient method of defining an amplitude and a vector-valued phase at any point in space \cite{Felsberg2001}, and can be considered an appropriate generalization of the analytic signal \cite{Stein}. The continuous monogenic signal is defined using the continuous Riesz transform \cite{Stein}. The Riesz transform enjoys a number of convenient mathematical properties, chief among them the commutativity with spatial translations and dilations, and equivariance with respect to rotation \cite{Stein}. These properties make the Riesz transform a very compelling candidate for separating information about structural and energetic aspects of a 2D signal. There are two Riesz transforms in 2D, which are defined by\footnote{By default, when we do not state the limits of an integral, they are $-\infty$ and $\infty$.}
\begin{equation}
\mathcal{R}^{(l)} f(\x) = \frac{1}{2 \pi} \iint f(\y) \frac{y_l-x_l}{\| \x - \y\|^3}\,d \y=(r^{(l)} \ast f)(\x), \quad l = 1,2,
\label{Riesz_eq}
\end{equation} 
where $r^{(l)}(\x) = - x_l/(2 \pi \|\x\|^3)$, $\x = [x_1, x_2]^T$, and $\y = [y_1, y_2]^T$. The monogenic signal is formed by placing the two Riesz transforms into the $i$- and $j$-parts of a quaternion-valued signal, leaving the $k$-part empty:
\begin{equation}
m(\x) = \mathcal{M}f(\x)=f(\x)+i g(\x) + j h(\x),
\end{equation}
where $g(\x) = \mathcal{R}^{(1)}f(\x)$ and $h(\x) = \mathcal{R}^{(2)}f(\x)$. The Riesz transforms are most easily described in the 2D Fourier domain, where the Riesz transform kernels are defined by
\begin{align}
\label{Riesz2}
R^{(1)}(\bk)&=  -i \frac{k_1}{\|\bk\|} = -i\cos(\kappa), \\
R^{(2)}(\bk)&=  -i \frac{k_2}{\|\bk\|} = -i\sin(\kappa). \label{Riesz2_2}
\end{align} 
In this equation, $\k = [k_1, k_2]^T = k [\cos(\kappa), \sin(\kappa)]^T$ is the 2D wavenumber, and $\kappa = \arg \k$ is the corresponding angle in the 2D plane. If the transformation is implemented over the entire real plane, the monogenic signal can be calculated either in the spatial domain using \eqref{Riesz_eq} or, equivalently, in the 2D Fourier domain using \eqref{Riesz2} and \eqref{Riesz2_2}.

For a sampled lowpass image $\{f_{n,n'}=f(\x_{n,n'}), \, n, n' = -N/2, \ldots, N/2-1\}$,\footnote{Without loss of generality, we assume that $N$ is even.} assumed w.l.o.g. to be sampled at a unit sampling period, the Riesz transforms have to be calculated discretely. Analogously to the discrete-time Fourier transform (DTFT) and the discrete Fourier transform (DFT), we need to define both a discrete-space Riesz transform, which is continuous in the wavenumber domain, and a discrete Riesz transform, which is discrete and periodic in both space and wavenumber domains. To emphasize the difference between the two, we sometimes refer to the discrete Riesz transform as the periodic discrete Riesz transform. 

The impulse response of the {\em discrete-space Riesz transform} for a non-periodic discrete-space signal is
\begin{equation}
{r}^{ (l)} (\x_{n, n'}) = - \frac{i}{(2 \pi)^2} \int_{-\pi}^{\pi} \int_{-\pi}^{\pi} \frac{k_l}{\sqrt{k_1^2+k_2^2}}
e^{i (k_1n+k_2 n')}\,d\bk, \quad l = 1,2.
\end{equation}
The first Riesz transform of the random field $f_{n,n'}$ is obtained as the convolution
\begin{equation} 
g (\x_{n, n'}) = \sum_{l,l'=-\infty}^{\infty} {r}^{ (1)} (\x_{l-n,l'-n'}) f (\x_{l, l'}).
\end{equation}
and the second Riesz transform $h (\x_{n, n'})$ is obtained analogously by convolving $f (\x_{n, n'})$ with ${r}^{(2)} (\x_{n,n'})$. Using the 2D DFT, given by
\begin{equation}
F(\bk)= \sum_{n, n' = -N/2}^{N/2-1} f (\x_{n, n'}) e^{- i (k_1n+k_2n')},
\end{equation}
with $k_l$ uniformly spaced in $\left(- \pi, \pi \right]$, the periodic impulse response of the {\em discrete Riesz transform} is defined by
\begin{equation}
\tilde{r}^{(l)} (\x_{n, n'}) = - \frac{ i}{ N^2} \sum_{\bk} \frac{k_l}{\sqrt{k_1^2+k_2^2}}
e^{i (k_1n+k_2 n')}, \quad l = 1,2.
\end{equation}
The first Riesz transform of $f(\x_{n,n'})$ is then obtained as the {\em circular} convolution  
\begin{equation} 
\tilde{g} (\x_{n, n'}) =\sum_{l, l' = -N/2}^{N/2-1} \tilde{r}^{(1)} (\x_{l-n \bmod{N},l'-n' \bmod{N}}) f(\x_{l, l'}),
\end{equation}
where $l-n \bmod N$ denotes $l-n$ modulo $N$ and the second Riesz transform $\tilde{h}(\x_{n,n'})$ is obtained analogously by circularly convolving $f (\x_{n, n'})$ with $\tilde{r}^{(2)} (\x_{n,n'})$. In practice, if implemented numerically, we always calculate the periodic discrete Riesz transform. This has a direct implication on the measure of unidirectionality we propose, because it introduces an additional error due to periodic filtering.

\section{Second-order statistical characterization of the monogenic signal}
\label{sec:sec-ord-char}

Consider a zero-mean random field $f(\x)$ with covariance function $r_{ff}(\x,\xib) = \cov\{f(\x),f(\x-\xib)\}$, where $\cov(a,b) = E[ab^{\ast}]$ is the covariance operator, $\x$ is a global spatial position and $\xib$ a local spatial offset. If $r_{ff}(\x,\xib) = r_{ff}(\xib), \; \forall \, \x$, i.e., the covariance function only depends on the spatial offset $\xib$, then $f(\x)$ is called (wide-sense) stationary. Nonstationary random fields are difficult to analyze, and most of the properties we will derive do not hold for this general case. Moreover, many nonstationary random fields of interest can be locally approximated as stationary for a small enough patch \cite{priestley1988non}. Therefore, from now on we only consider stationary random fields, which is also the kind of random field that most of the statistical literature focuses on \cite{Stein1999}. 

In order to characterize the second-order moments of the random monogenic signal, we obtain the covariances of the random field and its Riesz transforms. To do so, let us consider the spectral representation \cite{loeve} of the random field $f(\x)$, whose spectral process is ${\cal Z}_f(\bk)$. The spectral process is a complex-valued random measure with uncorrelated increments (since $f(\x)$ is stationary), i.e.,  $\cov\{d{\cal Z}_f(\bk),d{\cal Z}_f(\bk')\}=S_{ff}(\bk)\delta(\bk-\bk')\,d\bk \, d\bk'$, where $S_{ff}(\bk)$ is the power spectral density of $f(\x)$. Then, the spectral representation of $f(\x)$ is
\begin{equation}
f(\x)= \iint d{\cal Z}_f(\bk) e^{i \bk^T\x}.
\end{equation}
The covariance function and the power spectral density are therefore 2D Fourier transform pairs:
\begin{equation}
r_{ff}(\xib) =   \iint S_{ff}(\bk) e^{i \bk^T\xib} \, d \k.
\end{equation}
Taking into account the spectral representation of the Riesz transforms, we may express their covariances as
\begin{align}
r_{gg}(\xib) &=  \iint \cos^2(\kappa) S_{ff}(\bk) e^{i \bk^T\xib} \, d \k , \\
r_{hh}(\xib) &=  \iint \sin^2(\kappa) S_{ff}(\bk) e^{i \bk^T\xib} \, d \k,
\end{align}
and we note that 
\begin{equation}
r_{ff}(\xib) = r_{gg}(\xib) + \, r_{hh}(\xib). \label{equ_rffrggrhh}
\end{equation}
This is an important property because it is a relationship between the three covariances, which we can use to simplify the statistical description of the monogenic signal. It can be shown that (\ref{equ_rffrggrhh}) only holds for stationary but not for nonstationary random fields. Additionally, we find that the cross-covariances are given by
\begin{align}
r_{fg}(\xib) &= - i  \iint \cos(\kappa) S_{ff}(\bk) e^{i \bk^T \xib} \, d \k , \\
r_{fh}(\xib) &= - i  \iint \sin(\kappa) S_{ff}(\bk) e^{i \bk^T \xib} \, d \k, \\
r_{gh}(\xib) &= \frac{1}{2} \iint \sin(2 \kappa) S_{ff}(\bk) e^{i \bk^T \xib} \, d \k,
\end{align}
which satisfy 
\begin{align}
r_{gf}(\xib) = -r_{fg}(\xib), & & r_{hf}(\xib) = -r_{fh}(\xib), & &  r_{hg}(\xib) = r_{gh}(\xib).
\end{align}
This provides a complete second-order statistical characterization of the monogenic signal with six auto- and cross-covariances.

So far we have characterized the monogenic signal in terms of its (real) components, but now we shall obtain the quaternion-valued characterization. Section \ref{sec:rev_quat} has shown that in general a covariance and three complementary covariances, all quaternion-valued, are necessary to completely characterize a random quaternion. However, since the monogenic signal does not have a $k$-part, it is obvious that two complementary covariances suffice. As we will see shortly, for stationary random fields we only need one complementary covariance due to (\ref{equ_rffrggrhh}).

Let us start with the covariance of the monogenic signal, given by
\begin{equation}
r_{m m}(\xib) = \cov\{m(\x), m(\x - \xib)\} = 2 r_{f f}(\xib) - 2 i r_{f g}(\xib) - 2 j r_{f h}(\xib).
\end{equation}
This covariance function only specifies three of the six real covariances. In order to access the remaining three real covariances, one may be tempted to use the \emph{standard} complementary covariance. However, it is straightforward to show that $r_{m m^{\ast}}(\xib) = \cov\{m(\x), m^{\ast}(\x - \xib)\} = 0$ for all stationary random fields, so this complementary covariance does not provide any useful information. Consider instead the covariance between $m(\x)$ and $m^{(i)}(\x)$
\begin{equation}
r_{m m^{(i)}}(\xib) = 2 r_{g g}(\xib) - 2 i r_{f g}(\xib) + 2 k r_{g h}(\xib),
\end{equation}
which specifies a further two real covariances. Recalling now that $r_{h h}(\xib) = r_{f f}(\xib) - r_{g g}(\xib)$ gives us access to the final remaining real covariance, $r_{m m}(\xib)$ and $r_{m m^{(i)}}(\xib)$ together contain the same information as the six real covariance functions. We remark that instead of $r_{m m^{(i)}}(\xib)$, one can also use \cite{Via2010}
\begin{equation}
r_{m m^{(\eta)}}(\xib) = \cov\{m(\x), m^{(\eta)}(\x - \xib)\},
\end{equation}
where $\eta$ is any pure unit quaternion with zero $k$-part. 
The additional freedom of being able to choose an arbitrary basis provides flexibility that we shall use later.

\section{Isotropy and directionality}
\label{sec:iso-direct}

A fundamental characteristic of a random field is the degree of rotational invariance of its second-order statistical properties \cite[p.~57]{Christakos1992}. A spatially {\em isotropic} random field exhibits perfect rotational invariance as its covariance displays circular contour levels. Fields that are not isotropic are called {\em anisotropic}. A common subclass is the class of {\em geometrically anisotropic} fields \cite[p.~61]{Christakos1992} that have covariances with elliptical rather than circular contours. Taking this idea to the extreme, we arrive at {\em unidirectional} random fields, where there exists a rotation so that there is variation only in one of the two axes. Figure \ref{fig:random_fields} shows samples of three random images: isotropic, geometrically anisotropic, and unidirectional. In this section we provide formal definitions of these three types of random fields.


\subsection{Isotropy}

\begin{definition}
A second-order stationary random field $f(\x)$ is isotropic if the covariance of the field is finite and only depends on the magnitude of the lag, that is if $r_{f f}(\xib)=C_{I}\left(\sqrt{\xib^T\xib}\right)$ \cite{Stein1999}.
\end{definition}

\begin{figure}
        \centering
        \begin{subfigure}[b]{0.3\textwidth}
                \centering
                \includegraphics[width=\textwidth]{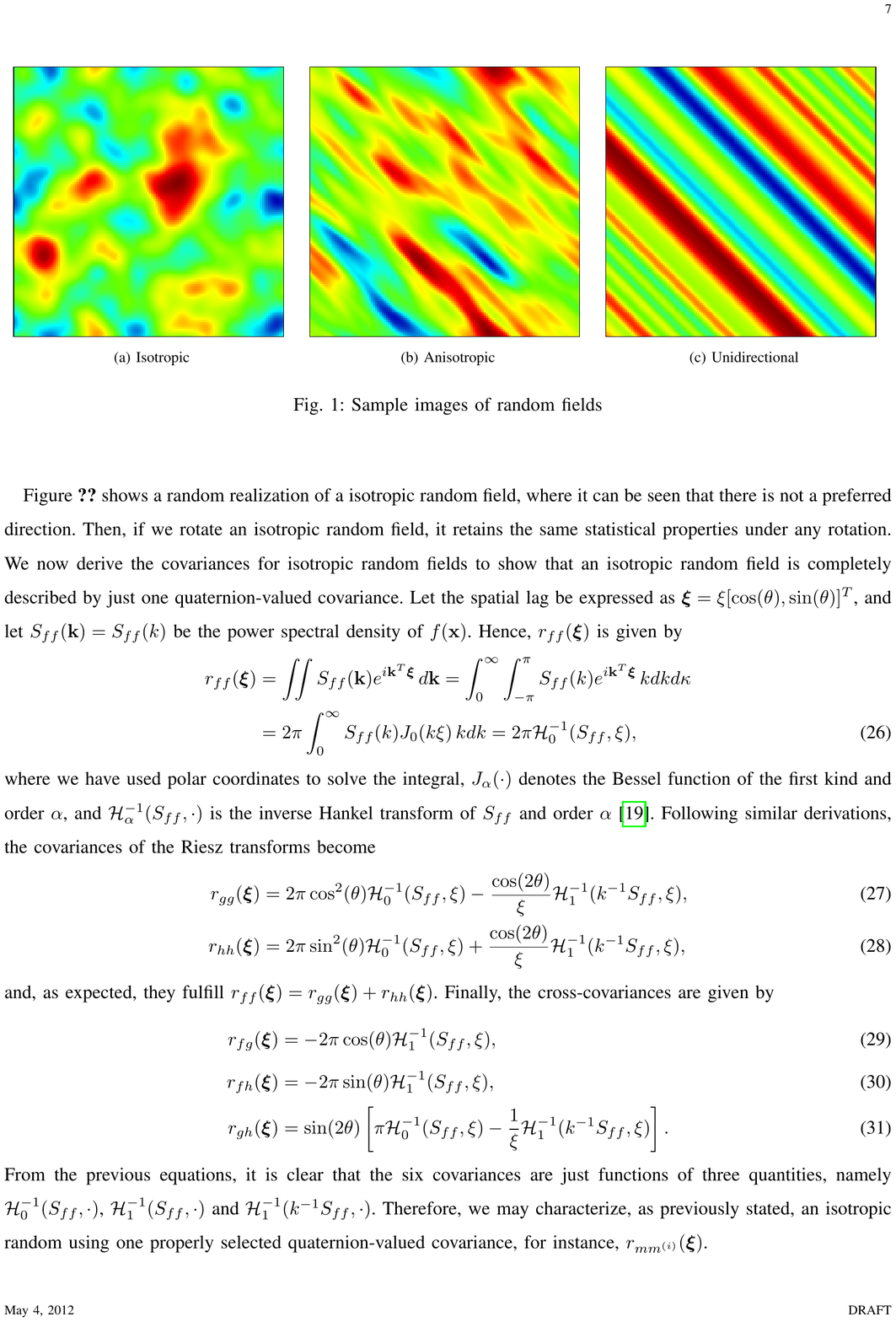}
                \caption{Isotropic}
                \label{fig:iso}
        \end{subfigure}%
        ~ 
        \begin{subfigure}[b]{0.3\textwidth}
                \centering
                \includegraphics[width=\textwidth]{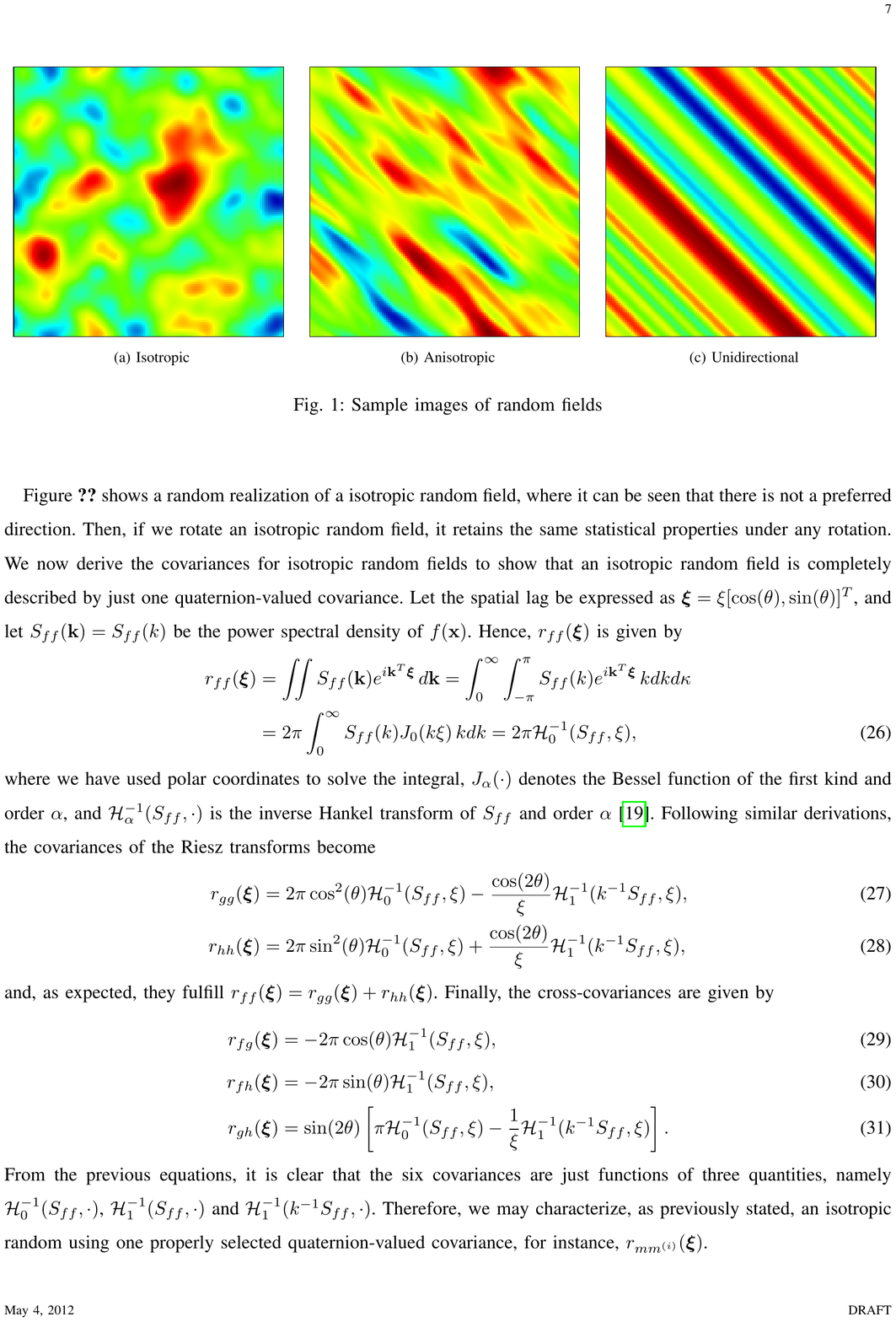}
                \caption{Anisotropic}
                \label{fig:aniso}
        \end{subfigure}
        ~ 
        \begin{subfigure}[b]{0.3\textwidth}
                \centering
                \includegraphics[width=\textwidth]{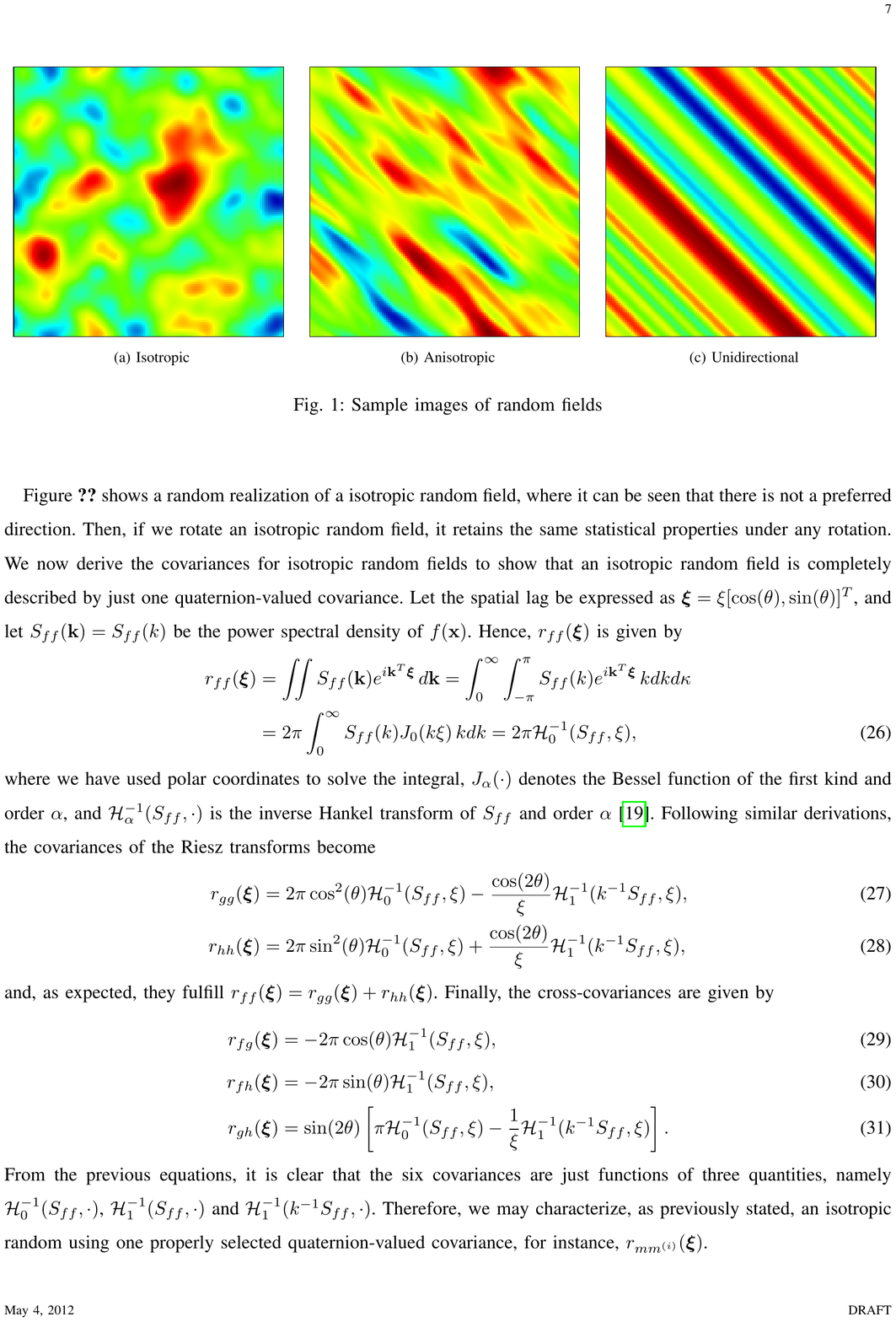}
                \caption{Unidirectional}
                \label{fig:uni}
        \end{subfigure}
        \caption{Sample images of random fields}
        \label{fig:random_fields}
\end{figure}

Figure \ref{fig:iso} shows a sample of an isotropic random field. We observe that it does not show any preferential treatment of any given image orientation. That is, if we rotate an isotropic random field, it retains its statistical properties for arbitrary rotation angle. We now derive the covariances of the monogenic signal for an isotropic random field. Let the spatial lag be expressed as $\xib =  \xi [\cos(\theta), \sin(\theta)]^T$, 
and let $S_{ff}(\k) = S_{ff}(k)$ be the power spectral density of $f(\x)$. Hence, $r_{f f}(\xib)$ is given by
\begin{align}
r_{f f}(\xib) &=  \iint S_{ff}(\bk) e^{i \bk^T\xib} \, d \k = \int_{0}^{\infty} \int_{-\pi}^{ \pi} S_{ff}(k) e^{i \bk^T\xib} \, k dk d \kappa \nonumber \\
&= 2 \pi \int_{0}^{\infty} S_{ff}(k) J_0(k \xi)  \, k dk = 2 \pi \mathcal{H}_{0}^{-1}(S_{ff}, \xi),
\end{align}
where we have used polar coordinates to solve the integral, $J_{\alpha}(\cdot)$ denotes the Bessel function of the first kind and order $\alpha$, and $\mathcal{H}_{\alpha}^{-1}(S_{ff}, \cdot )$ is the inverse Hankel transform of $S_{ff}$ and order $\alpha$ \cite{sneddon1979use}. Following similar derivations, the covariances of the Riesz transforms become
\begin{align}
r_{g g}(\xib) &= 2 \pi \cos^2(\theta) \mathcal{H}_{0}^{-1}(S_{ff}, \xi) - \frac{\cos(2 \theta)}{\xi} \mathcal{H}_{1}^{-1}(k^{-1} S_{ff}, \xi), \\
r_{h h}(\xib) &= 2 \pi \sin^2(\theta) \mathcal{H}_{0}^{-1}(S_{ff}, \xi) + \frac{\cos(2 \theta)}{\xi} \mathcal{H}_{1}^{-1}(k^{-1} S_{ff}, \xi),
\end{align}
and, as expected, they fulfill $r_{f f}(\xib) = r_{g g}(\xib) + r_{h h}(\xib)$. Finally, the cross-covariances are given by
\begin{align}
r_{f g}(\xib) &= - 2 \pi \cos(\theta) \mathcal{H}_{1}^{-1}(S_{ff}, \xi), \\
r_{f h}(\xib) &= - 2 \pi \sin(\theta) \mathcal{H}_{1}^{-1}(S_{ff}, \xi), \\
r_{g h}(\xib) &= \sin(2 \theta) \left[ \pi \mathcal{H}_{0}^{-1}( S_{ff}, \xi) - \frac{1}{ \xi} \mathcal{H}_{1}^{-1}(k^{-1} S_{ff}, \xi)\right].
\end{align}
From the previous equations, it is clear that the six real covariances are functions of only three quantities, namely $\mathcal{H}_{0}^{-1}( S_{ff}, \cdot)$, $\mathcal{H}_{1}^{-1}(S_{ff}, \cdot)$ and $\mathcal{H}_{1}^{-1}(k^{-1} S_{ff}, \cdot)$. To obtain these forms we used simplifying relationships between $\mathcal{H}_{0}^{-1}(\cdot)$, $\mathcal{H}_{1}^{-1}(\cdot)$ and $\mathcal{H}_{2}^{-1}(\cdot)$.
Therefore, we may characterize the second-order statistics of a monogenic signal for an isotropic random field using one properly selected quaternion-valued covariance, which gives us access to the three unknown quantities. We can, for instance, use $r_{m m^{(i)}}(\xib)$.

\subsection{Geometric anisotropy}

\begin{definition}
A second-order stationary random field $f(\x)$ is geometrically anisotropic if the covariance of the field is finite and only depends on the magnitude of the deformed lag as $r_{f f}(\xib)=C_{A}(\sqrt{\xib^T\mathbf{D}\xib})$,
for some $2\times 2$ symmetric positive definite matrix $\mathbf{D}$ that satisfies $\text{det}(\D) = 1$.
\end{definition}
Figure \ref{fig:aniso} shows a sample of an anisotropic random field. We can see that there is a preferred direction, which is clearly visible as we have chosen a matrix $\D$ with large condition number. Now we will investigate the statistical properties of the monogenic signal for a geometrically anisotropic random field. The covariance of $f(\x)$ is given by
\begin{equation}
r_{ff}(\xib) = 2 \pi \mathcal{H}_0^{-1}(S_{ff}, \tilde{\xi}),
\end{equation}
where $\tilde{\xi} = \sqrt{\sigma_1 \xi_1^2 + \sigma^{-1}_1 \xi_2^2}$, with $\sigma_1$ being the largest eigenvalue of $\D$. The proof can be found in Appendix \ref{sec:proof_cov_ani}. Contrary to isotropic random fields, we generally need two quaternion-valued covariances to completely characterize the statistics of a geometrically anisotropic random field. To see this, let us consider the covariance of the first Riesz transform, given by
\begin{equation}
r_{gg}(\xib)  = \sum_{l = -\infty}^{\infty} e^{i l (\tilde{\xi} + \pi/2)} a_l(\sigma_1,\alpha) \mathcal{H}_l ( S_{ff}, \tilde{\xi}), 
\end{equation}
where $a_l(\sigma_1,\alpha)$ are the Fourier coefficients of
\begin{equation}
 \beta(\sigma_1,\alpha) =\frac{\cos(\alpha) \cos(\kappa) \sqrt{\sigma_1^{2} \cos^2(\kappa) + \sin^2(\kappa) } \mp \sin(\alpha) \sin(\kappa) \sqrt{ \cos^2(\kappa) + \sigma_1^{-2} \sin^2(\kappa)}}{ \sigma_1 \cos^2(\kappa) + \sigma_1^{-1} \sin^2(\kappa)},
 \end{equation}
$\alpha$ is the angle of the dominant eigenvector of $\D$, and the sign depends on the determinant of the eigenvector matrix of $\D$ (i.e., whether it is a rotation or a reflection matrix). The proof is also presented in Appendix \ref{sec:proof_cov_ani}. Hence, just considering $r_{gg}(\xib)$, we see that at least two quaternion-valued covariance functions are required. One could argue that some (or almost all) of the coefficients $a_l(\sigma_1,\alpha)$ might be zero and, therefore, the correlation would depend only on a few $\mathcal{H}_l ( S_{ff}, \tilde{\xi})$, as in the isotropic case. To show that this is not true, Figure \ref{fig:betas} depicts the function $\beta(\sigma_1,\alpha)$ for $\sigma_1 = 0.5$, and $\alpha = 0$ and $\alpha = \pi/2$. We can see that several Fourier coefficients are needed to express $\beta(\sigma_1,\alpha)$ even for this simple case. For arbitrary values of $\sigma_1$ and $\alpha$, simulations have shown that there does not exist a simple Fourier expansion of the function.

\begin{figure}[t]
\centering
\includegraphics{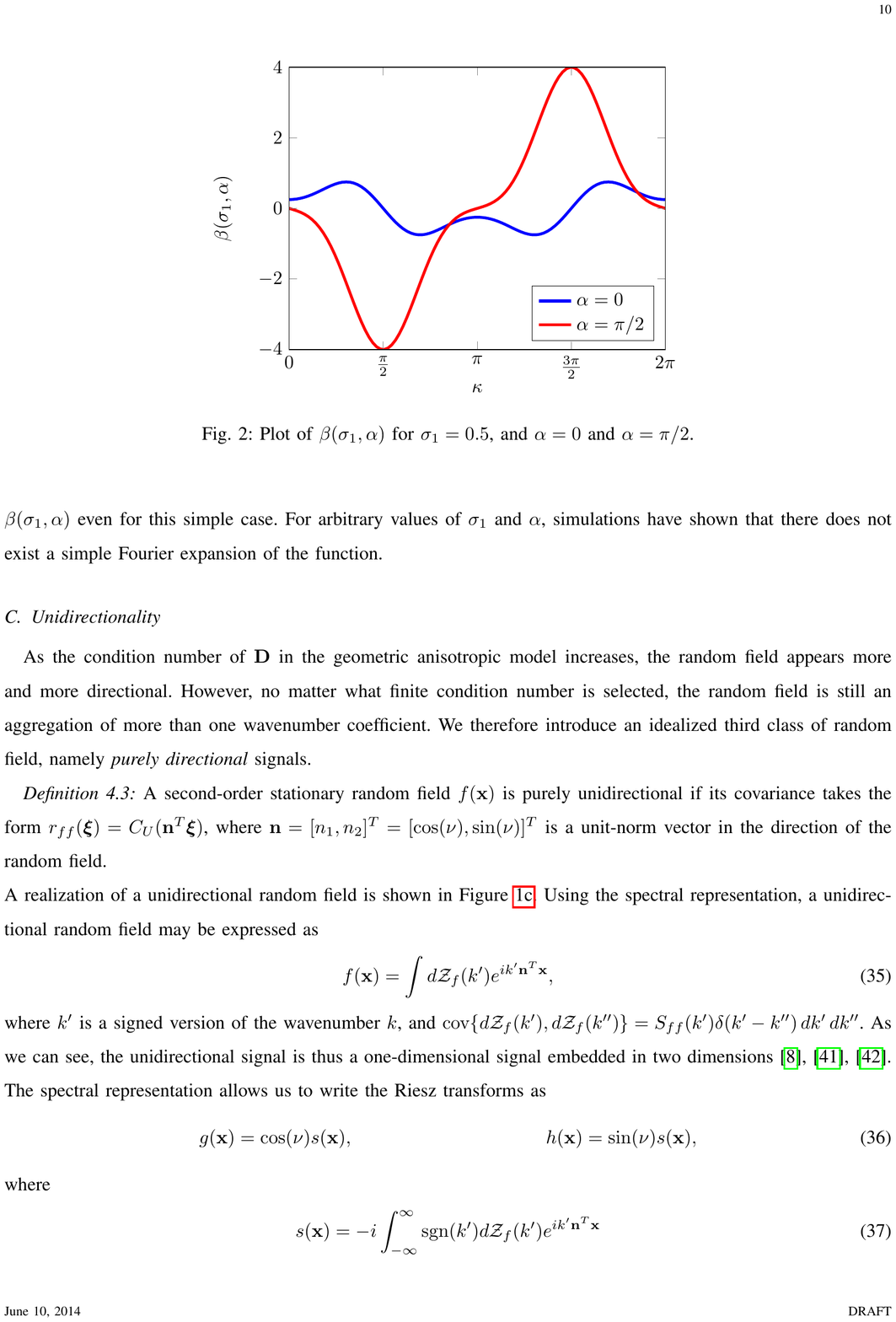}
\caption{Plot of $\beta(\sigma_1,\alpha)$ for $\sigma_1 = 0.5$, and $\alpha = 0$ and $\alpha = \pi/2$.}
\label{fig:betas}
\end{figure}

\subsection{Unidirectionality}

As the condition number of $\mathbf{D}$ in the geometric anisotropic model increases, the random field appears more and more directional. However, no matter what finite condition number is selected, the random field is still an aggregation of more than one wavenumber coefficient. We therefore introduce an idealized third class of random field, namely {\em purely directional} signals. 

\begin{definition}
A second-order stationary random field $f(\x)$ is purely unidirectional if its covariance takes the form $r_{f f}(\xib)=C_U(\n^T \xib)$, where $\n = [n_1,n_2]^T = [\cos(\nu), \sin(\nu)]^T$ is a unit-norm vector in the direction of the random field.
\end{definition}
A realization of a unidirectional random field is shown in Figure \ref{fig:uni}. Using the spectral representation, a unidirectional random field may be expressed as 
\begin{equation}
f(\x)=  \int d \mathcal{Z}_f(k') e^{i k' \bn^T \x},
\end{equation}
where $k'$ is a signed version of the wavenumber $k$, and $\cov\{d \mathcal{Z}_f(k'),d \mathcal{Z}_f(k'')\}=S_{ff}(k') \delta(k' -k'')\,dk' \, dk''$. As we can see, the unidirectional signal is thus a one-dimensional signal embedded in two dimensions \cite{Krieger1996,felsberg2009continuous,knutsson1994signal}.
The spectral representation allows us to write the Riesz transforms as
\begin{align}
\label{def:gh}
g(\x) &= \cos(\nu) s(\x), &
h(\x) &= \sin(\nu) s(\x),
\end{align}
where
\begin{align}
s(\x) = -i \int_{-\infty}^{\infty} \mathrm{sgn}(k') d \mathcal{Z}_f(k') e^{i k' \bn^T \x}
\end{align}
is a partial Hilbert transform in direction $\n$ \cite{Felsberg2001}. It is easy to show that the covariances are 
\begin{align}
r_{g g}(\xib) &= \cos^2(\nu) r_{f f}(\xib), &
r_{h h}(\xib) &= \sin^2(\nu) r_{f f}(\xib).
\end{align}
The cross-covariances are
\begin{align}
r_{f h}(\xib) &= \tan(\nu) r_{f g}(\xib), &
r_{g h}(\xib) &= \frac{1}{2} \sin(2 \nu) r_{f f}(\xib),
\end{align}
where
\begin{equation}
r_{f g}(\xib) = i \cos(\nu) \int_{-\infty}^{\infty} \mathrm{sgn}(k') S_{f f}(k') e^{i k' \bn^T \xib} d k.
\end{equation}
Similar to the case of isotropic random fields, we do not need five real covariances to characterize a unidirectional random field, but only two. Thus, only one quaternion-valued covariance suffices, for instance $r_{m m}(\xib)$. A consequence of this correlation structure is that unidirectional random fields are $\mathbb{C}^{\eta}$-proper:
\begin{theorem}[Characterization of a unidirectional random field]
\label{th:c_eta_uni}
The monogenic signal of a random field $f(\x)$ is $\mathbb{C}^{\eta}$-proper, where $\eta = \cos(\nu) i + \sin(\nu) j$, if and only if $f(\x)$ is stationary and unidirectional with direction $\n = [n_1,n_2]^T = [\cos(\nu), \sin(\nu)]^T$.
\end{theorem}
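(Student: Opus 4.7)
The plan is to prove both directions by representing the monogenic signal in a rotated quaternion basis adapted to the putative direction $\n$, namely $m(\x) = f(\x) + \eta u(\x) + \eta' v(\x)$, where $\eta = i\cos\nu + j\sin\nu$, $\eta' = -i\sin\nu + j\cos\nu$, $u = g\cos\nu + h\sin\nu$ and $v = -g\sin\nu + h\cos\nu$. Note $\eta, \eta', k$ form an orthonormal triad of pure unit quaternions with $\eta\eta' = k$, and the $k$-part of $m$ is zero by construction. In this basis, the involutions reduce cleanly: using $-\eta' f \eta' = f$, $\eta'\eta\eta' = \eta$ and the analogous identities, one obtains $m^{(\eta')} = f - \eta u + \eta' v$ and $m^{(k)} = f - \eta u - \eta' v$.

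For the forward direction, assume $f$ is stationary and unidirectional. By \eqref{def:gh} we have $g = \cos(\nu)s$ and $h = \sin(\nu)s$, which means $u = s$ and $v \equiv 0$. Since $s$ is the partial Hilbert transform of $f$ in direction $\n$, it inherits the 1D Hilbert-transform identities $r_{ss}(\xib) = r_{ff}(\xib)$ and $r_{fs}(\xib) = -r_{sf}(\xib)$. Expanding $E[m(\x)(m^{(\eta')}(\x-\xib))^*]$ and $E[m(\x)(m^{(k)}(\x-\xib))^*]$ in the new basis and using that the real scalar components commute with the quaternion basis, every coefficient collapses to a combination of $r_{ff} - r_{ss}$ and $r_{fs} + r_{sf}$, both of which vanish. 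Hence $r_{mm^{(\eta')}} = r_{mm^{(k)}} = 0$, establishing $\mathbb{C}^\eta$-propriety.

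For the reverse direction, assume $r_{mm^{(\eta')}}(\xib) = r_{mm^{(k)}}(\xib) = 0$. Expanding both complementary covariances in the $\{1,\eta,\eta',k\}$ basis produces eight scalar equations. The real parts yield $r_{ff} - r_{uu} + r_{vv} = 0$ and $r_{ff} - r_{uu} - r_{vv} = 0$, whose sum and difference give $r_{vv}(\xib) \equiv 0$. Since $r_{vv}(0) = E[v^2] = 0$, we deduce $v \equiv 0$ almost surely, i.e., $-g\sin\nu + h\cos\nu = 0$. Thus $g = \cos(\nu)s$ and $h = \sin(\nu)s$ for a single real field $s$.

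It remains to translate this spatial constraint into the unidirectional structure of $f$. Using the spectral representations of the Riesz transforms, \eqref{Riesz2}--\eqref{Riesz2_2} give $d\mathcal{Z}_g(\bk) = -i\cos(\kappa)\,d\mathcal{Z}_f(\bk)$ and $d\mathcal{Z}_h(\bk) = -i\sin(\kappa)\,d\mathcal{Z}_f(\bk)$. The established relation $\sin(\nu)\,d\mathcal{Z}_g(\bk) = \cos(\nu)\,d\mathcal{Z}_h(\bk)$ then forces $\sin(\kappa - \nu)\,d\mathcal{Z}_f(\bk) = 0$, so the spectral measure of $f$ is concentrated on the line $\kappa \in \{\nu, \nu + \pi\}$. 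Parameterizing this line by a signed scalar wavenumber $k'$ recovers the unidirectional spectral representation, completing the equivalence. The main obstacle I anticipate is bookkeeping in the involution algebra: keeping the signs consistent in $\eta\eta' = k$, $\eta'\eta = -k$, and $(m^{(\eta')})^* = m$, $(m^{(k)})^* = m$ is the subtlest step, since a single sign error would destroy both directions of the argument.
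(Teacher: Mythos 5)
Your proof is correct, and while your forward direction coincides with the paper's (both reduce to $m^{(\eta')} = m^{(k)} = m^{\ast}$ for $m = f + \eta s$, and then to the vanishing of $r_{mm^{\ast}}$ for stationary fields via the Hilbert-transform identities $r_{ss} = r_{ff}$ and $r_{sf} = -r_{fs}$), your converse takes a genuinely different and more economical route. The paper expands $r_{mm^{(\eta')}}(\xib) = 0$ into four scalar covariance identities and then runs a finite-patch argument---decomposing the sampled Riesz transforms as $\g = a_g \f + b_g \f_g^{\perp}$ and $\h = a_h \f + b_h \f_h^{\perp}$ and matching covariance matrices to pin down $a_g = a_h = 0$, $b_g = \cos\nu$, $b_h = \sin\nu$---before passing to the frequency domain. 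You instead isolate the transverse component $v = -g\sin\nu + h\cos\nu$ in the rotated basis and note that the real part of $r_{mm^{(\eta')}}$ is $r_{ff} - r_{uu} + r_{vv}$ while $r_{ff} - r_{uu} - r_{vv} \equiv 0$; be aware that this second relation is really the stationarity identity \eqref{equ_rffrggrhh} in disguise rather than new information from $r_{mm^{(k)}} = 0$ (the paper observes that $m^{(k)} = m^{\ast}$ makes that condition vacuous), but either reading yields $r_{vv}(\0) = E[v^2] = 0$, hence $v \equiv 0$ almost surely, after which the $\eta'$- and $k$-components of the propriety condition are automatic. This buys you two things: you never need the finite-patch vector decomposition, and the spectral conclusion follows in one line since $d\mathcal{Z}_v = -i\sin(\kappa - \nu)\, d\mathcal{Z}_f$ forces $\sin^2(\kappa - \nu) S_{ff}(\k) = 0$, concentrating the spectrum on $\kappa \in \{\nu, \nu + \pi\}$ and hence giving $r_{ff}(\xib) = C_U(\n^T\xib)$. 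The one caveat, shared with the paper's own proof, is that stationarity is a standing assumption in the converse (the complementary covariances are treated as functions of $\xib$ alone) rather than a consequence of propriety.
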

\begin{proof}
See Appendix \ref{uni_proof}.
\end{proof}
This result shows an analogy between the monogenic signal of a stationary unidirectional random field and the analytic signal of a stationary process, which is complex proper \cite{book_peter}. That is, the complementary covariance of the analytic signal $x_+(t)$ for a stationary complex process $x(t)$ is zero, i.e., $\cov\{x_+(t), x^{\ast}_+(t - \tau)\} = 0, \forall \tau$.

\section{Test for unidirectionality}
\label{sec:test}

In this section, we propose a measure for the degree of unidirectionality in a random field and estimate its preferred direction. We then use this measure to build a test for whether there is sufficient statistical evidence to classify a random field as unidirectional.

\subsection{Measure of unidirectionality}
It follows directly from~\eqref{def:gh} that the monogenic signal of a unidirectional random field may be expressed as
\begin{equation}
m(\x) = f(\x) + \eta s(\x),
\end{equation}
where $s(\x)$ is the partial Hilbert transform in the direction $\eta$ of the field. This means that $m^{(\eta)}(\x) = m(\x)$ for the unit quaternion $\eta = \cos(\nu)i + \sin(\nu)j$. Therefore, the quantity
\begin{equation}
\label{eq:measure}
\mathop{\min}_{\eta} E\left[ \frac{1}{2} \left| m(\x) - m^{(\eta)}(\x) \right|^2\right]
\end{equation}
is zero if $f(\x)$ is unidirectional and greater than zero if it is not. To see the validity of this statement for any stationary random field, we shall outline some simplifications. We may rewrite~\eqref{eq:measure}
as
\begin{equation}
\min_{\eta} E\left[ \frac{1}{2} \left|m(\x) - m^{(\eta)}(\x)\right|^2 \right] = \min_{\eta} E\left[ \frac{1}{2}|m(\x)|^2 + \frac{1}{2} |m^{(\eta)}(\x)|^2 - \text{Re}\left(m(\x)  m^{(\eta)^{\ast}}(\x)\right)\right].
\end{equation}
Taking into account that $|m^{(\eta)}(\x)| = |m(\x)|$ the previous expression equates to
\begin{equation}
\min_{\eta} E\left[ \frac{1}{2} \left|m(\x) - m^{(\eta)}(\x)\right|^2 \right] = \min_{\eta} E\left[ |m(\x)|^2 - \text{Re}\left(m(\x)  m^{(\eta)^{\ast}}(\x)\right)\right].
\end{equation}
Now, taking into account the definition of the covariances for a stationary random field, we find that 
\begin{equation}
\min_{\eta} E\left[ \frac{1}{2} \left|m(\x) - m^{(\eta)}(\x)\right|^2 \right] = \min_{\eta} \left[ r_{mm}(\0) - \text{Re}\left(r_{mm^{(\eta)}}(\0)\right)\right] = r_{mm}(\0) - \max_{\eta} \text{Re}\left(r_{mm^{(\eta)}}(\0)\right).
\end{equation}
Based on this mean-squared error, we propose to use
\begin{equation}
\label{eq:measure2}
\mathcal{U} = \frac{2 \displaystyle\mathop{\max}_{\eta} \text{Re}\left(r_{m m^{(\eta)}}(\0)\right)}{r_{m m}(\0)} - 1
\end{equation}
as a measure of unidirectionality. We will see momentarily that this measure is normalized to values between $0$ and $1$. The maximum $\mathcal{U} = 1$ is attained if the field is unidirectional, i.e., $m^{(\eta)}(\x) = m(\x)$. We now simplify the expression \eqref{eq:measure2} for $\mathcal{U}$. We start by considering the numerator. Expressing the involution over $\eta$ in terms of the involutions over the canonical basis $\{i,j,k\}$, after some tedious algebra the numerator becomes
\begin{align}
\label{eq:tediousR} \mathop{\max}_{\eta} \text{Re}\left(r_{m m^{(\eta)}}(\0)\right) &= \mathop{\max}_{\n, \|\n\|=1} \, \n^T \R \n, \\
\mbox{with}\quad \R &= \begin{bmatrix}  \text{Re} \left(r_{m m^{(i)}}(\0)\right) & \text{Im}_{k}\left(r_{m m^{(i)}}(\0)\right) \\
-\text{Im}_{k}\left(r_{m m^{(j)}}(\0)\right) & \text{Re}\left(r_{m m^{(j)}}(\0)\right)
\end{bmatrix}, \nonumber  
\end{align}
where $\text{Im}_{k}\left(q\right)$ is the $k$-component of the quaternion $q$. Hence, it is easy to show that \eqref{eq:measure2} can be written as
\begin{equation}
\label{eq:measure3}
\mathcal{U} = \frac{2 \lambda_{\mathrm{MAX}}(\R)}{r_{m m}(\0)} - 1,
\end{equation}
where $\lambda_{\mathrm{MAX}}(\R)$ is the largest eigenvalue of $\R$ and the direction $\n$ is given by the dominant eigenvector. Taking into account that the matrix $\R$ is given by
\begin{equation}
\R =  \begin{bmatrix} r_{f f}(\0)  + r_{g g}(\0)  -  r_{h h}(\0)  & 2 r_{g h}(\0) \\
2 r_{g h}(\0) & r_{f f}(\0)  - r_{g g}(\0)  +r_{h h}(\0)
\end{bmatrix},
\end{equation}
we use the closed-form expression for the largest eigenvalue of $\R$ to write
\begin{equation}
\mathcal{U} = \frac{\sqrt{ r^2_{g g}(\0) + r^2_{h h}(\0) - 2 r_{g g}(\0) r_{h h}(\0) + 4 r^2_{g h}(\0)}}{r_{f f}(\0) + r_{g g}(\0) + r_{h h}(\0)}.
\end{equation}
Using the properties of the covariances, $\mathcal{U}$ finally becomes
\begin{align}
\mathcal{U} = \frac{\sqrt{ (r_{g g}(\0) - r_{h h}(\0))^2 + 4 r^2_{g h}(\0)}}{2 r_{f f}(\0)},
\end{align}
which is similar to the coherency index measure introduced in \cite{multiresolution_monogenic} for deterministic images. It is now obvious that $\mathcal{U}$ is lower-bounded by $0$, so $\mathcal{U} \in [0, 1]$. To shed some light on $\mathcal{U}$, let us determine which signals minimize and maximize $\mathcal{U}$.

To achieve $\mathcal{U} = 0,$ we need $r_{g g}(\0) - r_{h h}(\0) = 0$ and $r_{g h}(\0) = 0$. This is equivalent to
\begin{align}
\iint \cos(2 \kappa) S_{ff}(\bk) \, d \k &= 0, &
\iint \sin(2 \kappa) S_{ff}(\bk) \, d \k &= 0.
\end{align}
Combining both conditions we have
\begin{equation}
\iint S_{ff}(\bk) \mathrm{e}^{- i 2 \kappa} \, d \k = 0,
\end{equation}
which may be rewritten as
\begin{equation}
\label{eq:a2}
\int_{0}^{\infty} a_2(k) k \, d k = 0,
\end{equation}
where
\begin{align}
a_2(k) =  \int_{-\pi}^{\pi} S_{ff}(\k)  \mathrm{e}^{- i 2 \kappa} \, d \kappa
\end{align}
is the second Fourier coefficient of $S_{ff}(\k) = S_{ff}(k,\kappa)$. For an isotropic random field, the power spectral density is $S_{ff}(\bk) = S_{ff}(k)$, yielding
\begin{equation}
\iint S_{ff}(\bk) \mathrm{e}^{- i 2 \kappa} \, d \k  = \int_{0}^{\infty} S_{ff}(k) \, k d k \int_{- \pi}^{\pi} \mathrm{e}^{- i 2 \kappa} \, d \kappa.
\end{equation}
It is clear that the integral in $\kappa$ is zero, and provided that $S_{ff}(k)$ satisfies
\begin{equation}
\int_{0}^{\infty} S_{ff}(k) \, k d k < \infty,
\end{equation}
the measure of unidirectionality for an isotropic random field is indeed zero. It may be tempting to think that isotropic random fields are the only fields that attain the lower bound of $0$. However, from the form of $e^{-i 2\kappa}$ we see that to achieve $\mathcal{U} = 0$ a sufficient condition is
\begin{equation}
S_{ff}\left(-k_1, k_2\right)=S_{ff}\left(k_1, k_2\right)
\end{equation}
and
\begin{equation}
S_{ff}\left(k_1, -k_2\right)=S_{ff}\left(k_1, k_2\right).
\end{equation}
Thus, any function that exhibits parity invariance independently in either of the two arguments, e.g., a separable function, also leads to $\mathcal{U} = 0$. One may thus argue that, in some sense, separable covariances are as far from being unidirectional as isotropic covariances.
While $\mathcal{U} = 0$ is only a necessary but not sufficient condition for a random field to be isotropic, $\mathcal{U} = 1$ is indeed necessary and sufficient for a field to be unidirectional. The proof of this statement follows along the lines of Appendix \ref{uni_proof}.

\subsection{Relationship with previously proposed measures for unidirectionality}
\label{sec:previous}

In this section we will review some related measures for the degree of unidirectionality (and also estimators of the direction), and state the similarities to and differences from our work. We first consider the \emph{Gaussian curvature} \cite{spivak1975comprehensive}. This is given by the determinant of the Hessian matrix, which contains all second-order partial derivatives. Assuming that the partial derivatives are continuous, the Gaussian curvature is
\begin{equation*}
\mathcal{GC}(\x) = \det \begin{bmatrix}
 f_{11}(\x) &  f_{12}(\x)  \\
 f_{12}(\x) & f_{22}(\x) 
\end{bmatrix}, \quad f_{l p}(\x) = \frac{\partial^2}{\partial x_l \partial x_p} f(\x), \ l,p = 1, 2,
\end{equation*}
which is a function of $\x$. It remains unclear how to obtain a single (global) and normalized measure for the degree of unidirectionality. Thus, in its current form, $\mathcal{GC}(\x)$ cannot be used to test for unidirectionality. 

A similar measure that employs partial derivatives uses a tensor-based estimator of the orientation \cite{DiClaudio:2010}. The main idea behind this approach is that the gradient of a unidirectional image is orthogonal to the direction, which is estimated as the eigenvector corresponding to the minimum eigenvalue of the matrix
\begin{equation*}
\S = \begin{bmatrix}
\int f_{1}^2(\x) d \x & \int f_{1}(\x) f_{2}(\x) d \x \\
\int f_{1}(\x) f_{2}(\x) d \x & \int f_{2}^2(\x) d \x
\end{bmatrix},
\end{equation*}
where $f_l(\x) = (\partial/\partial x_l) f(\x), \ l = 1,2,$ represents the partial derivative with respect to the $l$th dimension. At first glance, this estimate may seem similar to our proposed estimate of the direction, which is the eigenvector corresponding to the largest eigenvalue of the matrix $\R$ in (\ref{eq:tediousR}). However, the matrix $\R$ is composed of the covariances of the image and its Riesz transforms. Moreover, although one could imagine using the minimum eigenvalue of $\S$ as a measure for the degree of unidirectionality, it is not clear how to normalize this measure and determine the threshold for a test.

Finally, let us turn to the approach in \cite{multiresolution_monogenic}, which proposes a deterministic counterpart of our measure. In this approach, the direction is estimated as the direction that maximizes the integral of the partial Hilbert transform \cite{SP_computer_vision}. This idea boils down to finding the principal eigenvector of
\begin{equation*}
\J = \begin{bmatrix}
\int g^2(\x) d \x & \int g(\x) h(\x) d \x \\
\int g(\x) h(\x) d \x & \int h^2(\x) d \x
\end{bmatrix}.
\end{equation*}
Based on this matrix, \cite{multiresolution_monogenic} defines a degree of unidirectionality, termed the coherency index, as 
\begin{equation*}
\chi = \frac{\lambda_{\text{MAX}}(\J) - \lambda_{\text{MIN}}(\J)}{\lambda_{\text{MAX}}(\J) + \lambda_{\text{MIN}}(\J)},
\end{equation*}
which is bounded between $0$ and $1$. After substituting expressions for the eigenvalues, the coherency has a form similar to our measure. Nevertheless, there are a few important differences. First of all, while the definition in \cite{multiresolution_monogenic} may seem ad-hoc, we provide a clear interpretation as a normalized mean square error. Moreover, our measure is based on a stochastic formulation, which allows us to perform statistical analysis and to model the effect of finite sample sizes. Based on this analysis we will be able to derive a threshold to test for unidirectionality, which is the topic of the next section.

\subsection{Determining the threshold}
\label{sec:analysis}

In order to test whether there is statistical evidence to classify a random field as unidirectional, we need to determine a threshold for our measure of unidirectionality $\mathcal{U}$. This threshold needs to take into account that, in practice, we only observe a single finite patch of a random field and therefore work with the periodic discrete Riesz transform. Recall that $\widetilde{m}(\x)$ is the periodic monogenic signal. We may estimate our measure of unidirectionality from a given realization of the random field of size $N \times N$ as
\begin{equation}
\label{eq:measurehat}
\hat{\mathcal{U}} = \frac{2 \lambda_{\mathrm{MAX}}(\hat{\R})}{\hat{r}_{\widetilde{m} \widetilde{m}}(\0)} - 1,
\end{equation}
where
\begin{equation}
\hat{\R} =  \begin{bmatrix}  \text{Re} \left(\hat{r}_{\widetilde{m} \widetilde{m}^{(i)}}(\0)\right) & \text{Im}_{k}\left(\hat{r}_{\widetilde{m} \widetilde{m}^{(i)}}(\0)\right) \\
-\text{Im}_{k}\left(\hat{r}_{\widetilde{m} \widetilde{m}^{(j)}}(\0)\right) & \text{Re}\left(\hat{r}_{\widetilde{m} \widetilde{m}^{(j)}}(\0)\right)
\end{bmatrix},
\end{equation}
the cross-covariances are
\begin{equation}
\hat{r}_{\widetilde{m} \widetilde{m}^{(\diamond)}}(\0) = \frac{1}{N^2} \sum_{n, n'} \widetilde{m}(\x_{n, n'}) \widetilde{m}^{(\diamond)^{\ast}}(\x_{n, n'}),
\end{equation}
and $\widetilde{m}^{(\diamond)}(\x)$ stands for either $\widetilde{m}(\x)$, $\widetilde{m}^{(i)}(\x)$ or $\widetilde{m}^{(j)}(\x)$. Then, the estimated measure of unidirectionality may be expressed as
\begin{equation}
\label{eq:Uhat}
\hat{\mathcal{U}} = \frac{2 \lambda_{\mathrm{MAX}}(\hat{\R})}{\hat{r}_{\widetilde{m} \widetilde{m}}(\0)} - 1 = \frac{\sqrt{ \hat{r}^2_{\tilde{g} \tilde{g}}(\0) + \hat{r}^2_{\tilde{h} \tilde{h}}(\0) - 2 \hat{r}_{\tilde{g} \tilde{g}}(\0) \hat{r}_{\tilde{h} \tilde{h}}(\0) + 4 \hat{r}^2_{\tilde{g} \tilde{h}}(\0)}}{\hat{r}_{f f}(\0) + \hat{r}_{\tilde{g} \tilde{g}}(\0) + \hat{r}_{\tilde{h} \tilde{h}}(\0)},
\end{equation}
and the estimated preferred direction is given by the dominant eigenvector of $\hat{\R}$. In order to determine a threshold for $\hat{\mathcal{U}}$, we need to determine the probability of false alarm, which requires knowing the distribution of the statistic $\hat{\mathcal{U}}$ under the null hypothesis ``the random field is unidirectional''. As one may expect, deriving the distribution of $\hat{\mathcal{U}}$ is far from trivial. However, for a bandpass random field, we are able to establish the following approximation of the false alarm probability.
\begin{theorem}
\label{th_pfa}
Consider a finite realization of a random field of size $N \times N$ whose power spectral density is bandpass with lower cutoff frequency $\lambda_l$. Using the estimated measure of directionality $\hat{\mathcal{U}}$ to test whether the field is unidirectional, the threshold $1 - \eta$ is determined such that the false alarm probability satisfies
\begin{equation}
\label{eq:pfa}
P\left(\hat{\mathcal{U}} \leq 1 - \eta\right) \leq \frac{1}{N \eta}  \left[\frac{4}{ \pi^2} \frac{1}{\lambda_l} - \frac{4}{9} \lambda_l  +  \frac{1}{3} - \frac{4}{\pi^2} \right].
\end{equation}  
Moreover, for unidirectional random fields, $\hat{\mathcal{U}} \rightarrow 1$ as $N \rightarrow \infty$, and the error $1 - \hat{\mathcal{U}}$ that is due to considering a finite patch of the random field decays with $1/N$. 
\end{theorem}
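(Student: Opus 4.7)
The plan is to derive the bound via Markov's inequality applied to the nonnegative random variable $1 - \hat{\mathcal{U}}$. Since $\hat{\R}$ is symmetric positive semidefinite with trace $\hat{r}_{\tilde{m}\tilde{m}}(\0)$, its largest eigenvalue lies in $[\tfrac12\hat{r}_{\tilde{m}\tilde{m}}(\0),\hat{r}_{\tilde{m}\tilde{m}}(\0)]$, which combined with~(\ref{eq:Uhat}) gives $\hat{\mathcal{U}} \in [0,1]$. Hence
\begin{equation*}
P\bigl(\hat{\mathcal{U}} \le 1 - \eta\bigr) = P\bigl(1 - \hat{\mathcal{U}} \ge \eta\bigr) \le \frac{E[1-\hat{\mathcal{U}}]}{\eta},
\end{equation*}
and the task reduces to showing $E[1-\hat{\mathcal{U}}] \le N^{-1}\bigl[\tfrac{4}{\pi^2\lambda_l} - \tfrac{4\lambda_l}{9} + \tfrac{1}{3} - \tfrac{4}{\pi^2}\bigr]$.

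Next I would write $\hat{\mathcal{U}}$ as a smooth function of the four sample covariances $\hat{r}_{ff}(\0), \hat{r}_{\tilde g\tilde g}(\0), \hat{r}_{\tilde h\tilde h}(\0), \hat{r}_{\tilde g\tilde h}(\0)$, and use Parseval to express each as a sum over the DFT grid of an appropriate product of discrete Riesz symbols $-ik_l/\|\bk\|$ times the periodogram $|F(\bk)|^2$. Taking expectations of $|F(\bk)|^2$ under stationarity yields a Fej\'er (Dirichlet-squared) smoothing of $S_{ff}$ at each DFT grid point. For a unidirectional null with direction $\bn$, $S_{ff}$ is concentrated on the line $\bk\parallel\bn$, so the expected sample covariances reduce to 1D sums along this line plus Dirichlet leakage of order $N^{-1}$. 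Passing to the limit gives $E[\hat r_{\tilde g\tilde g}(\0)]\to\cos^2\nu\,r_{ff}(\0)$, $E[\hat r_{\tilde h\tilde h}(\0)]\to\sin^2\nu\,r_{ff}(\0)$, and $E[\hat r_{\tilde g\tilde h}(\0)]\to\tfrac12\sin(2\nu)\,r_{ff}(\0)$, so that $\hat{\mathcal{U}}\to 1$, which establishes the second assertion of the theorem.

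For the finite-$N$ bias, I would Taylor-expand $\hat{\mathcal{U}}$ about its infinite-$N$ limit. The bandpass cutoff $\lambda_l>0$ keeps $\hat{r}_{ff}(\0)$ bounded away from zero in expectation, so the map defining $\hat{\mathcal{U}}$ is Lipschitz in a neighborhood of the limit, and $E[1-\hat{\mathcal{U}}]$ is controlled by a constant (depending on $\lambda_l$) times the sum of absolute biases of the four sample covariances. Each bias is a sum over the DFT grid of the Dirichlet leakage weighted by the difference between the discrete Riesz symbol at that grid point and the continuous Riesz symbol along the unidirectional line. These integrals evaluate explicitly: the $\|\bk\|^{-1}$ singularity of the Riesz kernel cut off at $\lambda_l$ yields the $\tfrac{4}{\pi^2\lambda_l}$ term, while elementary integrals over $[\lambda_l,\pi]$ of $\cos^2\kappa$ and $\sin^2\kappa$ weighted leakage supply the $-\tfrac{4\lambda_l}{9}$, $\tfrac{1}{3}$, and $-\tfrac{4}{\pi^2}$ contributions. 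The $N^{-1}$ scaling of $1-\hat{\mathcal{U}}$ follows at once.

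The main obstacles are twofold. First, the nonlinear square-root-over-sum structure of $\hat{\mathcal{U}}$ could in principle degrade the $N^{-1}$ rate; this is avoided by the Lipschitz bound above, valid only because the bandpass assumption keeps the denominator bounded away from zero and the Riesz symbol away from its origin singularity. Second, producing the precise constant requires careful bookkeeping of the Dirichlet leakage integrals over $[\lambda_l,\pi]$, and the $1/\lambda_l$ leading term reflects exactly the $\|\bk\|^{-1}$ singularity of the Riesz kernel cut off by the bandpass lower limit---which is why the bandpass hypothesis is essential to the proof.
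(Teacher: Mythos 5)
Your skeleton matches the paper's: Markov's inequality applied to the nonnegative variable $U_2 = 1-\hat{\mathcal{U}}$ reduces everything to computing $E[U_2]$ under the unidirectional null; that expectation is then analyzed in the frequency domain by comparing the periodic discrete Riesz symbol $-ik_l/\|\bk\|$ on the DFT grid with its idealized value $n_l$ on the spectral line, with the Dirichlet-squared leakage supplying the $1/N$ rate; and the final passage from a PSD-weighted average to the $\lambda_l$-only bound uses the monotonicity of $\frac{4}{\pi^2}\frac{1}{\lambda}-\frac{4}{9}\lambda$ on the band, exactly as in the paper.

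The gap is in how you claim to obtain the explicit constant $\frac{4}{\pi^2}\frac{1}{\lambda}-\frac{4}{9}\lambda+\frac{1}{3}-\frac{4}{\pi^2}$, which is the substance of the theorem. First, a generic Lipschitz bound of the form ``a constant times the sum of absolute biases'' can only give $E[U_2]=\mathcal{O}(1/N)$ with an unspecified constant; the stated constant requires the exact first-order expansion $U_2 = \bigl(2n_2^2\,\Delta\hat r_{\tilde g\tilde g}+2n_1^2\,\Delta\hat r_{\tilde h\tilde h}-4n_1n_2\,\Delta\hat r_{\tilde g\tilde h}\bigr)/r_{ff}(\0)$ plus higher-order terms, whose direction-dependent coefficients propagate into the answer. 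Second, and more seriously, your attribution of the four terms to ``elementary integrals over $[\lambda_l,\pi]$ of $\cos^2\kappa$ and $\sin^2\kappa$ weighted leakage'' for a fixed direction $\n$ does not correspond to any computation that yields these numbers. In the paper, the 2D leakage sum is first reduced (via the index regions where exactly one DFT coordinate is far from the spectral line) to 1D integrals $\mathcal{G}_{\pm}(\lambda,n_1,n_2)$ over the frequency perpendicular to the line, which evaluate to arctangent and logarithm expressions in $(n_1,n_2,\lambda)$; the numerical constants $\frac{4}{9}$, $\frac{1}{3}$, $\frac{4}{\pi^2}$ emerge only after averaging those expressions over a uniformly distributed direction $\nu$ (through $E[|n_1|]=2/\pi$, $E[|n_1|n_2^2]=2/(3\pi)$, and the expectations of $n_1^2 n_2\log(\cdot)$ and $|n_1|\arctan(\cdot)$ terms) and over the fractional grid offsets $c_1,c_2$ (through $E[\sin^2(\pi c_i)]=1/2$). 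Your proposal keeps $\n$ fixed throughout and never introduces these averages, so at best it produces a direction-dependent $\mathcal{O}(1/N)$ expression, not the stated bound. To close the argument you must model the direction and the grid offsets as random, and carry out the closed-form expectations.
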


Establishing these results is very involved and requires lengthy derivations. Readers that are not interested in the technical details may therefore wish to skip the remainder of this section and simply apply the algorithm presented as Alg. \ref{alg:detector}. Note that selecting the threshold (Line \ref{lag_line:threshold} in Alg. \ref{alg:detector}) may alternatively be done based on the estimated power spectral density.

\begin{algorithm}[!t]
\DontPrintSemicolon
\LinesNumberedHidden
\KwIn{Bound for the false alarm probability $\epsilon$, and lower cutoff frequency $\lambda_l$.}
\ShowLn Compute the periodic Riesz transforms $\tilde{g}(\x)$ and $\tilde{h}(\x)$ \; 
\ShowLn Use $f(\x)$, $\tilde{g}(\x)$ and $\tilde{h}(\x)$ to estimate the covariances $\hat{r}_{ff}(\0)$, $\hat{r}_{\tilde{g} \tilde{g}}(\0)$, $\hat{r}_{\tilde{h} \tilde{h}}(\0)$, and $\hat{r}_{\tilde{g} \tilde{h}}(\0)$ \;
\ShowLn Obtain $\hat{\mathcal{U}}$ using \eqref{eq:Uhat} \;
\ShowLn \label{lag_line:threshold} Determine $\eta$ such that the right hand side of \eqref{eq:pfa} equals $\epsilon$ \;
\If{$\hat{\mathcal{U}} \geq 1 - \eta$}{$f(\x)$ is unidirectional}
\Else{$f(\x)$ is not unidirectional}

\caption{Detector of unidirectionality using the monogenic signal.}
\label{alg:detector}
\end{algorithm}

\subsection{Derivations}\label{sec:deriv}

In order to establish Theorem \ref{th_pfa}, we proceed as follows. We first consider a deterministic plane wave and evaluate the error that is due to the finite size of the random field. We then generalize this result to a random plane wave, and finally to a unidirectional random field, which we write as an infinite sum of random plane waves. We begin with the following lemma. 

\begin{lemma}
\label{lem:det_plane_wave}
Considering a deterministic plane wave, given by
\begin{equation}
f(\x)=A \cos(k_0 \mathbf{n}^T \x+\phi),
\end{equation}
where $A, \phi,$ and $k_0 = 2 \pi \lambda_0$ are given numbers, we have
\begin{equation}
U_2 \triangleq  1 - \hat{\cal U}  = \sum_{\lambda_1=-N/2}^{N/2-1}\sum_{\lambda_2=-N/2}^{N/2-1} C(\lambda_1,\lambda_2) + \mathcal{O}(1/N^2),
\label{summand1}
\end{equation}
where
\begin{equation}
C(\lambda_1,\lambda_2) = \frac{2}{N^4} \left[\frac{(\lambda_1 n_2- \lambda_2 n_1)^2}{\lambda_1^2 + \lambda_2^2}\right] \frac{\sin^2(\pi N(\lambda_0 n_1-\lambda_1))}{\sin^2(\pi (\lambda_0 n_1-\lambda_1))} \frac{\sin^2(\pi N(\lambda_0 n_2-\lambda_2))}{\sin^2(\pi (\lambda_0 n_2-\lambda_2))},
\end{equation}
with $k_i = 2 \pi \lambda_i$.
\end{lemma}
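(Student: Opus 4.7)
The plan is to pass to the 2D DFT domain, where the periodic discrete Riesz transforms act as pointwise multipliers $-i\cos\kappa$, $-i\sin\kappa$ at the grid angle $\kappa=\arg(\lambda_1,\lambda_2)$. I would first write the sampled plane wave as $f(\x_{n,n'}) = \tfrac{A}{2}e^{i(k_0\n^T\x+\phi)}+\tfrac{A}{2}e^{-i(k_0\n^T\x+\phi)}$ and compute its $N\times N$ DFT. Each complex exponential contributes a separable product of one-dimensional Dirichlet kernels $D_N(\cdot)$, so squaring gives
\[
 |F(\bk)|^2 = \tfrac{A^2}{4}\bigl[|D_N(\alpha_+)|^2|D_N(\beta_+)|^2 + |D_N(\alpha_-)|^2|D_N(\beta_-)|^2\bigr] + \varepsilon(\bk),
\]
where $\alpha_\pm,\beta_\pm$ are the mismatches between the true wavenumber $\pm k_0\n$ and the grid point, and $\varepsilon(\bk)$ is the cross-term between the two well-separated spectral peaks, which is small due to the rapid off-peak decay of Dirichlet kernels.

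I would then invoke Parseval to express the four relevant covariances at zero lag as
\[
 \hat{r}_{ff}(\0)=\tfrac{1}{N^4}\sum_\bk |F|^2,\quad \hat{r}_{\tilde g\tilde g}-\hat{r}_{\tilde h\tilde h}=\tfrac{1}{N^4}\sum_\bk \cos(2\kappa)|F|^2,\quad 2\hat{r}_{\tilde g\tilde h}=\tfrac{1}{N^4}\sum_\bk \sin(2\kappa)|F|^2,
\]
together with the identity $\hat{r}_{\tilde g\tilde g}+\hat{r}_{\tilde h\tilde h}=\hat{r}_{ff}$. Substituting these into the definition of $\hat{\mathcal{U}}$ collapses the numerator into a single modulus, yielding $\hat{\mathcal{U}} = \bigl|\sum_\bk e^{-2i\kappa}|F(\bk)|^2\bigr|/\sum_\bk|F(\bk)|^2$. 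Multiplying the summand by the unit-modulus factor $e^{2i\nu}$, with $\nu=\arg\n$, transforms the phase into $e^{-2i(\kappa-\nu)}$, which equals $1$ at both spectral peaks $\pm k_0\n$.

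Setting $S=\sum_\bk e^{-2i(\kappa-\nu)}|F|^2=S_c-iS_s$ and using the identity $1-\cos(2x)=2\sin^2 x$, the leading piece of $1-\hat{\mathcal{U}}$ is
\[
 1-\frac{S_c}{\sum_\bk|F|^2} = \frac{2\sum_\bk|F|^2\sin^2(\kappa-\nu)}{\sum_\bk|F|^2}.
\]
I would then substitute the factorized form of $|F|^2$, use the $(\lambda_1,\lambda_2)\to(-\lambda_1,-\lambda_2)$ symmetry to equate the contributions of the two peaks, apply $\sin^2(\kappa-\nu)=(\lambda_1 n_2-\lambda_2 n_1)^2/(\lambda_1^2+\lambda_2^2)$ (the squared sine of the angle between the grid wavenumber and $\n$), and take the leading-order denominator $\sum_\bk|F|^2 = A^2N^4/2$. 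The $A^2$'s cancel and the two peak contributions combine to produce exactly $\sum_{\lambda_1,\lambda_2}C(\lambda_1,\lambda_2)$ with prefactor $2/N^4$.

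The main obstacle is controlling three residuals as $\mathcal{O}(1/N^2)$: the Dirichlet cross-term $\varepsilon(\bk)$, which decays thanks to the separation of the two peaks and the off-peak behavior of $|D_N|^2$; the lower-order correction to $\sum_\bk|F|^2$; and, most delicately, the gap $|S|-S_c$, which a square-root Taylor expansion bounds by $S_s^2/(2S_c\sum|F|^2)$. Bounding this last term requires showing $S_s$ is of strictly lower order than $S_c\sim N^4$: this follows from the evenness of $|D_N(\cdot)|^2$ about each peak combined with the local oddness of $\sin(2(\kappa-\nu))$ in angular displacement from a peak, which produces the needed cancellation in $S_s$.
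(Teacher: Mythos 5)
Your proposal is correct and reproduces the paper's summand exactly, but it organizes the perturbation argument differently. The paper's proof introduces the \emph{non-periodic} discrete-space Riesz transforms $g,h$ of the plane wave as a reference point at which $\hat{\mathcal{U}}=1$ holds exactly (because $g=n_1 s$ and $h=n_2 s$), writes $\hat r_{\tilde g\tilde g}(\0)=\hat r_{gg}(\0)+\Delta r_{\tilde g\tilde g}$ and similarly for the other two covariances, and Taylor-expands $\hat{\mathcal{U}}$ in the three perturbations to get $U_2=\bigl(2n_2^2\Delta\hat r_{\tilde g\tilde g}+2n_1^2\Delta\hat r_{\tilde h\tilde h}-4n_1n_2\Delta\hat r_{\tilde g\tilde h}\bigr)/r_{ff}(\0)$ plus higher-order terms; each $\Delta r$ is then a Parseval sum weighted by $\lambda_1^2/(\lambda_1^2+\lambda_2^2)-n_1^2$ etc., and the algebraic combination of the three brackets collapses to $2(\lambda_1 n_2-\lambda_2 n_1)^2/(\lambda_1^2+\lambda_2^2)$. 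You never introduce the non-periodic transforms: you substitute the DFT of the plane wave directly into the closed form for $\hat{\mathcal{U}}$, recognize the numerator as the modulus of $\sum_{\k}e^{-2i\kappa}|F(\k)|^2$, and linearize the single square root $|S|=\sqrt{S_c^2+S_s^2}$ about $S_c$ after the rotation by $e^{2i\nu}$, so that $1-\cos(2(\kappa-\nu))=2\sin^2(\kappa-\nu)$ delivers the weight geometrically (as the squared sine of the angle between the grid wavenumber and $\n$) rather than algebraically. What your route buys is transparency and one fewer object to track; what it costs is the extra residual $(|S|-S_c)/\Sigma\le S_s^2/(2S_c\Sigma)$, which has no counterpart in the paper's expansion and which you rightly single out as the delicate step---your parity argument giving $S_s=\mathcal{O}(N^3)$ is the correct mechanism, since the heavy Dirichlet tails along a single axis contribute $\mathcal{O}(N^3)$ to $S_s$ but no more, so the gap is indeed $\mathcal{O}(1/N^2)$. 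The remaining residuals (the $F^{+}\overline{F^{-}}$ cross-term and the correction to $\sum_{\k}|F|^2$) are handled at the same level of rigor as in the paper, which likewise asserts them to be $\mathcal{O}(1/N^2)$ without a detailed estimate. Note finally that your closed form $\hat{\mathcal{U}}=\bigl|\sum_{\k}e^{-2i\kappa}|F(\k)|^2\bigr|/\sum_{\k}|F(\k)|^2$ corresponds to normalizing by $r_{mm}(\0)=2\hat r_{ff}(\0)$ in the eigenvalue form of the measure, i.e.\ to $\mathcal{U}=\sqrt{(r_{gg}-r_{hh})^2+4r_{gh}^2}\,/\,r_{ff}$; this is the normalization consistent with $\hat{\mathcal{U}}\to 1$ for a plane wave and with the statement of the lemma, so you are working with the intended definition.
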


\begin{IEEEproof}
See Appendix \ref{lem:det_plane_wave_proof}.
\end{IEEEproof}

Let us now analyze \eqref{summand1} more carefully. To do so, we decompose $\lambda_0 n_l N$ as $\lambda_0 n_l N =  \lfloor \lambda_0 n_l N \rceil + c_l$, where $\lfloor \lambda_0 n_l N \rceil$ and $c_l$ are the integer and fractional parts of $\lambda_0 n_l N$, respectively. Defining the transformed indices $j_l = k_l - \lfloor \lambda_0 n_l N \rceil$, we may write $C(\lambda_1,\lambda_2) = \tilde{C}(j_1,j_2)$. We can consider different regions for the orders of magnitude of the transformed indices and analyze $\tilde{C}(j_1,j_2)$. For small values of both indices, i.e., $j_l = \mathcal{O}(1)$, the function becomes $\tilde{C}(j_1,j_2) = \mathcal{O}(1/N^2)$. For large values of only one of the indices, i.e., $j_1 = \mathcal{O}(1)$ and $j_1 = \mathcal{O}(N)$ or vice-versa, we find that $\tilde{C}(j_1,j_2) = \mathcal{O}(1/N^2)$. Finally, for large values of both indices $j_l = \mathcal{O}(N)$, we have $\tilde{C}(j_1,j_2) = \mathcal{O}(1/N^{4})$. Now, defining the sets $\mathcal{J}_{\alpha \alpha'} = \{(j_1, j_2) : j_1 = \mathcal{O}(N^{\alpha}), j_2 = \mathcal{O}(N^{\alpha'})\}$, we may decompose $U_2$ as
\begin{equation}
U_2= {\mathcal{B}}_{01}+{\mathcal{B}}_{10} - {\mathcal{B}}_{00} + {\mathcal{B}}_{11},
\label{o1}
\end{equation}
where each term is given by
\begin{equation}
{\mathcal{B}}_{\alpha \alpha'} = \sum_{(j_1,j_2) \in {\mathcal{J}}_{\alpha \alpha'}} \tilde{C}(j_1,j_2).
\end{equation}
Finally, taking into account that the size of the sets $\mathcal{J}_{\alpha \alpha'}$ is $|\mathcal{J}_{\alpha \alpha'}| = N^{\alpha + \alpha'}$, it is easy to prove that ${\mathcal{B}}_{01} = \mathcal{O}(1/N)$ and ${\mathcal{B}}_{10} = \mathcal{O}(1/N)$, whereas ${\mathcal{B}}_{00} = \mathcal{O}(1/N^2)$ and ${\mathcal{B}}_{11} = \mathcal{O}(1/N^2)$. Hence, the error is 
\begin{equation}
U_2 = {\mathcal{B}}_{01}+{\mathcal{B}}_{10} + \mathcal{O}(1/N^2) = \mathcal{O}(1/N),
\end{equation}
that is, it decays linearly with $N$. Hence, from the previous analysis we find a simpler expression for $U_2$, which is given in the following lemma.
\begin{lemma}
\label{lem:det_plane_wave_2}
For a deterministic plane wave, the error due to the finite periodic discrete Riesz transform is
\begin{equation}
U_2 = \frac{2}{N}\left[ \sin^{2}(\pi c_2) {\cal G}(\lambda_0,n_1,n_2) + \sin^{2}(\pi c_1) {\cal G}(\lambda_0,n_2,n_1)\right]  + \mathcal{O}(1/N^2),
\end{equation}
where ${\cal G}(\lambda_0,n_1,n_2) = {\cal G}_+(\lambda_0,n_1,n_2)+{\cal G}_-(\lambda_0,n_1,n_2)$ and
\begin{equation}
\label{eq:integral_G}
{\cal G}_{\pm}(\lambda_0,n_1,n_2) = n_1^2 \int_{0}^{1/2} \frac{1}{\lambda_0^2 n_1^2+(\lambda_0 n_2 \pm \lambda)^2} \frac{\lambda^2}{\sin^2(\pi \lambda) }\,d \lambda.
\end{equation}
The convergence rate with which $\hat{\mathcal{U}} \rightarrow 1$ as $N \rightarrow \infty$ is therefore linear in $N$.
\end{lemma}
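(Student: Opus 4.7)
The plan is to pick up from the preceding decomposition $U_{2} = \mathcal{B}_{00} + \mathcal{B}_{01} + \mathcal{B}_{10} + \mathcal{B}_{11}$ and extract the leading $\mathcal{O}(1/N)$ contributions of $\mathcal{B}_{01}$ and $\mathcal{B}_{10}$ explicitly; the blocks $\mathcal{B}_{00}$ and $\mathcal{B}_{11}$ are already $\mathcal{O}(1/N^{2})$ and can be absorbed into the remainder. A crucial preliminary observation is that, since the normalised DFT frequency $N\lambda_{l}$ is always an integer, the numerators of both Dirichlet kernels in $C(\lambda_{1},\lambda_{2})$ collapse exactly to $\sin^{2}(\pi c_{l})$, so after the index shift $j_{l} = N\lambda_{l} - \lfloor \lambda_{0} n_{l} N \rceil$ the summand takes the compact form
\begin{equation*}
\tilde{C}(j_{1},j_{2}) = \frac{2}{N^{4}} \frac{(\lambda_{1} n_{2} - \lambda_{2} n_{1})^{2}}{\lambda_{1}^{2} + \lambda_{2}^{2}} \frac{\sin^{2}(\pi c_{1})}{\sin^{2}(\pi (c_{1}-j_{1})/N)} \frac{\sin^{2}(\pi c_{2})}{\sin^{2}(\pi (c_{2}-j_{2})/N)},
\end{equation*}
with $\lambda_{l} = \lambda_{0} n_{l} + (j_{l} - c_{l})/N$.

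For $\mathcal{B}_{01}$ the regime is $j_{1} = \mathcal{O}(1)$, $j_{2} = \mathcal{O}(N)$, so $\lambda_{1}$ is pinned near the pole $\lambda_{0} n_{1}$ of its Dirichlet kernel while $\mu := j_{2}/N$ behaves asymptotically as a continuous variable in $[-1/2,1/2]$. First I would Taylor-expand in $1/N$: the algebraic factor reduces to $\mu^{2} n_{1}^{2}/(\lambda_{0}^{2} n_{1}^{2} + (\lambda_{0} n_{2} + \mu)^{2})$, the $j_{1}$-kernel becomes $N^{2}\sin^{2}(\pi c_{1})/[\pi^{2}(c_{1}-j_{1})^{2}]$, and the $j_{2}$-kernel becomes $\sin^{2}(\pi c_{2})/\sin^{2}(\pi\mu)$, all modulo $\mathcal{O}(1/N)$ corrections. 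Next I would perform the $j_{1}$-sum in closed form using the Mittag--Leffler identity $\sum_{j\in\mathbb{Z}}(c-j)^{-2} = \pi^{2}/\sin^{2}(\pi c)$, which makes the entire $j_{1}$-factor collapse to unity. The remaining $j_{2}$-sum is a Riemann approximation of $\int_{-1/2}^{1/2}\,d\mu$ at spacing $1/N$; folding via $\mu \mapsto -\mu$ on the negative half identifies the integral as $\mathcal{G}_{+}(\lambda_{0},n_{1},n_{2}) + \mathcal{G}_{-}(\lambda_{0},n_{1},n_{2}) = \mathcal{G}(\lambda_{0},n_{1},n_{2})$, giving $\mathcal{B}_{01} = (2/N)\sin^{2}(\pi c_{2})\,\mathcal{G}(\lambda_{0},n_{1},n_{2}) + \mathcal{O}(1/N^{2})$. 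The term $\mathcal{B}_{10}$ follows by exchanging the two coordinate axes, and summing produces the claimed formula; since $1 - \hat{\mathcal{U}} = U_{2} = \mathcal{O}(1/N)$, the convergence $\hat{\mathcal{U}} \to 1$ is linear in $N$.

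The main obstacle will be making the Riemann-sum approximation rigorous in the neighbourhood of $\mu = 0$, where $1/\sin^{2}(\pi\mu)$ is non-integrable---precisely the region that was split off into $\mathcal{B}_{00}$. A clean way to handle this is the decomposition $1/\sin^{2}(\pi\mu) = 1/(\pi\mu)^{2} + [1/\sin^{2}(\pi\mu) - 1/(\pi\mu)^{2}]$: the bracketed piece is smooth on $[-1/2,1/2]$ and contributes a standard Euler--Maclaurin error of order $1/N^{2}$, while the singular $1/(\pi\mu)^{2}$ contribution is regularised by symmetric principal-value truncation and matched against $\mathcal{B}_{00}$, whose size has already been controlled. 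A secondary point is that the Mittag--Leffler identity is applied over all of $\mathbb{Z}$ whereas the actual DFT range is $|N\lambda_{1}| \le N/2$; however, the tails satisfy $\sum_{|j_{1}|\gtrsim N}(c_{1}-j_{1})^{-2} = \mathcal{O}(1/N)$ and, after multiplication by the $1/N^{2}$ prefactor, sit safely inside the stated $\mathcal{O}(1/N^{2})$ remainder.
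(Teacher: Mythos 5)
Your proposal is correct and follows essentially the same route as the paper: split off $\mathcal{B}_{00}$ and $\mathcal{B}_{11}$ as $\mathcal{O}(1/N^2)$, collapse the $j_1$-sum in $\mathcal{B}_{01}$ to unity (the paper asserts this numerically where you invoke the identity $\sum_{j\in\mathbb{Z}}(c-j)^{-2}=\pi^2/\sin^2(\pi c)$ explicitly), pass to a Riemann integral in $j_2$, fold the negative frequencies to obtain $\mathcal{G}_{+}+\mathcal{G}_{-}$, and get $\mathcal{B}_{10}$ by symmetry. Your closing remarks on regularising the Riemann sum near $\mu=0$ and on the truncation of the $j_1$-sum only tighten steps the paper treats informally; they do not change the argument.
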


\begin{IEEEproof}
See Appendix \ref{lem:det_plane_wave_2_proof}.
\end{IEEEproof}

So far, we have only considered a deterministic plane wave. In the following, we will generalize our results to a random plane wave with random amplitude, phase, and direction but fixed frequency. If we would like to test the null hypothesis that a realization of a random field is a plane wave vs. the alternative that it is not unidirectional, we need to choose a threshold for $\hat{\mathcal{U}}$ (or $U_2 = 1 - \hat{\mathcal{U}}$). In order to determine the probability of false alarm, we would need to know the distribution of $U_2$ under the null hypothesis. As deriving this distribution seems extremely difficult, we instead obtain a conservative bound on the probability of false alarm using Markov's inequality \cite{gubnerprobability}. Concretely, Markov's inequality states that
\begin{equation}
P\left(U_2 \geq \eta\right) \leq \frac{E[U_2]}{\eta}
\end{equation}
for a positive random variable $U_2$. Hence, we have to find $E[U_2]$ for a random plane wave, which is presented in the following theorem.
\begin{theorem}
\label{th:U2_plane_wave}
The expectation of $U_2$ for a random plane wave is 
\begin{equation}
E[U_2] =  \frac{1}{N}\left( \frac{4}{ \pi^2} \frac{1}{\lambda_0} - \frac{4}{9} \lambda_0 +   \frac{1}{3} - \frac{4}{\pi^2} \right).
\end{equation}
\end{theorem}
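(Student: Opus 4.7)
The plan is to compute the expectation of the finite-$N$ approximation in Lemma \ref{lem:det_plane_wave_2} under the randomness of the plane wave. Since $\hat{\mathcal{U}}$ is invariant to the amplitude $A$ and the phase $\phi$, only the random direction $\n = [\cos\nu, \sin\nu]^T$, taken uniformly distributed on the unit circle, effectively contributes. Starting from
\begin{equation*}
U_2 = \frac{2}{N}\bigl[\sin^2(\pi c_2)\,\mathcal{G}(\lambda_0, n_1, n_2) + \sin^2(\pi c_1)\,\mathcal{G}(\lambda_0, n_2, n_1)\bigr] + \mathcal{O}(1/N^2),
\end{equation*}
I would take the expectation over $\nu$ and evaluate the result to leading order in $1/N$.

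The first step is to separate the rapidly oscillating factors $\sin^2(\pi c_l)$ from the slowly varying $\mathcal{G}$. Since $c_l = \{\lambda_0 n_l N\}$ equidistributes on $[0, 1)$ as $N$ grows (Weyl's theorem), we have $E[\sin^2(\pi c_l)] \to 1/2$ with asymptotic independence from $\mathcal{G}$ at the required order. Combined with the $(n_1,n_2) \leftrightarrow (n_2,n_1)$ symmetry under uniform rotation, the two terms in the bracket contribute equally; the sign symmetry $\nu \mapsto -\nu$ further gives $E[\mathcal{G}_+] = E[\mathcal{G}_-]$, so that
\begin{equation*}
E[U_2] = \frac{4}{N}\,E_\nu\bigl[\mathcal{G}_+(\lambda_0, n_1, n_2)\bigr] + \mathcal{O}(1/N^2).
\end{equation*}

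Next I would interchange the $\nu$-expectation with the $\lambda$-integral defining $\mathcal{G}_+$. Using $n_1^2+n_2^2=1$, the denominator simplifies to $\lambda_0^2 + \lambda^2 + 2\lambda_0\lambda n_2$, and the angular integral becomes a rational function in $\sin\nu$ that I would evaluate by contour integration with $z = e^{i\nu}$. The quadratic in $z$ arising from the denominator has exactly one root inside the unit disk, and summing the residues at that root and at $z=0$ (a double pole) yields the clean formula
\begin{equation*}
E_\nu\!\left[\frac{n_1^2}{\lambda_0^2 + \lambda^2 + 2\lambda_0\lambda n_2}\right] = \frac{1}{2\max(\lambda_0,\lambda)^2}.
\end{equation*}

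Substituting back and splitting the radial integral at $\lambda = \lambda_0$, the outer piece $\int_{\lambda_0}^{1/2}\csc^2(\pi\lambda)\,d\lambda = \cot(\pi\lambda_0)/\pi$ is elementary, while integration by parts on the inner piece (with $u=\lambda^2$, $dv=\csc^2(\pi\lambda)\,d\lambda$) produces a cancelling $\cot$ boundary term and leaves $E_\nu[\mathcal{G}_+] = \frac{1}{\pi\lambda_0^2}\int_0^{\lambda_0}\lambda\cot(\pi\lambda)\,d\lambda$. The leading $4/(\pi^2\lambda_0)$ and $-4\lambda_0/9$ contributions in the theorem come directly from the Taylor expansion $\lambda\cot(\pi\lambda) = 1/\pi - \pi\lambda^2/3 + \mathcal{O}(\lambda^4)$, while the additional constants $1/3 - 4/\pi^2$ must arise from a careful tracking of the higher-order pieces of the $\csc^2$ expansion and of the $\cot(\pi\lambda_0)$ endpoint contribution. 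This final bookkeeping, together with making the Weyl independence argument quantitative at the $\mathcal{O}(1/N)$ level so that the $\sin^2(\pi c_l)$ factors genuinely decouple from $\mathcal{G}$, is the main obstacle I foresee.
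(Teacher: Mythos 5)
Your route is genuinely different from the paper's and, up to the last step, it is the better one. The paper first forces closed forms for $\mathcal{G}_{\pm}$ by Taylor-approximating $\lambda^2/\sin^2(\pi\lambda)\approx \frac{1}{\pi^2}\left(1+\frac{\pi^2\lambda^2}{3}\right)$ on all of $[0,1/2]$ (Lemma \ref{lem:integral_g}), combines them with the arctangent addition formula, and then grinds through roughly ten separate $\nu$-expectations of $\log$ and $\arctan$ expressions using integral tables. Your idea of averaging over $\nu$ \emph{first} is much cleaner: the identity $E_\nu\!\left[\frac{n_1^2}{\lambda_0^2+\lambda^2+2\lambda_0\lambda n_2}\right]=\frac{1}{2\max(\lambda_0,\lambda)^2}$ is correct (it follows from $\frac{1}{2\pi}\int_0^{2\pi}\frac{\cos^2\nu}{a+b\sin\nu}\,d\nu=\frac{a-\sqrt{a^2-b^2}}{b^2}$ with $a=\lambda_0^2+\lambda^2$, $b=2\lambda_0\lambda$, $\sqrt{a^2-b^2}=|\lambda_0^2-\lambda^2|$), your symmetry reductions to $\frac{4}{N}E_\nu[\mathcal{G}_+]$ are valid, and your integration by parts does give the exact result $E[U_2]=\frac{4}{N\pi\lambda_0^2}\int_0^{\lambda_0}\lambda\cot(\pi\lambda)\,d\lambda+\mathcal{O}(1/N^2)$. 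This collapses the paper's five auxiliary lemmas into a few lines.

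The gap is in your final paragraph: the constants $\frac{1}{3}-\frac{4}{\pi^2}$ will \emph{not} emerge from ``careful tracking of higher-order pieces.'' Your exact expression expands as $\frac{1}{N}\left(\frac{4}{\pi^2\lambda_0}-\frac{4}{9}\lambda_0\right)+\mathcal{O}(\lambda_0^3/N)$ with no constant term, because $\lambda\cot(\pi\lambda)=\frac{1}{\pi}-\frac{\pi\lambda^2}{3}-\frac{\pi^3\lambda^4}{45}-\cdots$ contributes only odd powers of $\lambda_0$ after integration and division by $\lambda_0^2$. The constants in the theorem are an artifact of \emph{where} the paper applies its approximation: it replaces $\lambda^2/\sin^2(\pi\lambda)$ by $\frac{1}{\pi^2}+\frac{\lambda^2}{3}$ on the whole interval, and the outer region then contributes $\int_{\lambda_0}^{1/2}\frac{1}{2\lambda^2}\left(\frac{1}{\pi^2}+\frac{\lambda^2}{3}\right)d\lambda=\frac{1}{2\pi^2\lambda_0}-\frac{1}{\pi^2}+\frac{1}{12}-\frac{\lambda_0}{6}$, which is exactly where $\frac{1}{3}-\frac{4}{\pi^2}$ comes from; indeed, inserting that two-term approximation into your formula $\frac{4}{N}\int_0^{1/2}\frac{1}{2\max(\lambda_0,\lambda)^2}\cdot\frac{\lambda^2}{\sin^2(\pi\lambda)}\,d\lambda$ reproduces the stated result term for term. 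So to prove the theorem \emph{as written} you must adopt the same approximation of the $\csc^2$ kernel (the theorem's equality is itself only approximate); what your derivation actually shows is that the exact leading-order answer is $\frac{4}{N\pi\lambda_0^2}\int_0^{\lambda_0}\lambda\cot(\pi\lambda)\,d\lambda$, which differs from the stated formula by a few percent at, e.g., $\lambda_0=0.1$. Separately, your appeal to Weyl equidistribution for $c_l$ is no more rigorous than the paper's ``assume $c_l$ uniform and independent of $\nu$''; making that quantitative is an honest open issue in both treatments.
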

\begin{IEEEproof}
See Appendix \ref{th:U2_plane_wave_proof}.
\end{IEEEproof}

Now, using Theorem \ref{th:U2_plane_wave} and Markov's inequality, we may bound the probability of $U_2$ exceeding $\eta$ as
\begin{equation}
P\left(U_2 \geq \eta\right) \leq \frac{1}{N \eta}\left( \frac{4}{ \pi^2} \frac{1}{\lambda_0} - \frac{4}{9} \lambda_0 +   \frac{1}{3} - \frac{4}{\pi^2} \right),
\end{equation}
which is only valid when the right-hand side is smaller than one. The following theorem generalizes this result to unidirectional random fields with arbitrary PSD $S(\lambda)$, where $\lambda = k/2 \pi$.
\begin{theorem}
\label{th:U2_unidirectional}
The expectation of $U_2$ for a unidirectional random field is 
\begin{equation}
E\left[U_2\right] 
=\frac{1}{N} \frac{ \displaystyle \int_0^{1/2} \left[ \frac{4}{ \pi^2} \frac{1}{\lambda} - \frac{4}{9} \lambda +   \frac{1}{3} - \frac{4}{\pi^2}\right] S(\lambda) \, d \lambda}{\displaystyle \int_0^{1/2} S(\lambda)  \,d \lambda} + \mathcal{O}(1/N^2).
\label{EU22}
\end{equation}
\end{theorem}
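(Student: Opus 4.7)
The plan is to leverage the spectral representation of a unidirectional random field to write $f(\x)$ as a continuous superposition of random plane waves, reducing Theorem~\ref{th:U2_unidirectional} to a spectral average of the per-frequency result established in Theorem~\ref{th:U2_plane_wave}.

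First, I would write $f(\x) = \int d\mathcal{Z}_f(k')\, e^{ik' \n^T \x}$ with uncorrelated spectral increments $\cov\{d\mathcal{Z}_f(k'), d\mathcal{Z}_f(k'')\} = S(k')\delta(k' - k'')\,dk'\,dk''$, and exploit the linearity of the periodic discrete Riesz transforms to obtain the corresponding spectral representations of $\tilde{g}(\x_{n,n'})$ and $\tilde{h}(\x_{n,n'})$, whose per-frequency kernels are precisely those analyzed in Lemma~\ref{lem:det_plane_wave_2}. Since the sample covariances $\hat{r}_{ff}(\0)$, $\hat{r}_{\tilde{g}\tilde{g}}(\0)$, $\hat{r}_{\tilde{h}\tilde{h}}(\0)$, and $\hat{r}_{\tilde{g}\tilde{h}}(\0)$ are quadratic functionals of $f$, taking expectation and invoking the uncorrelated-increments property collapses the resulting double spectral integrals into single integrals weighted by $S(k')\,dk'$. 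In particular $E[\hat{r}_{ff}(\0)] = \int_0^{1/2} S(\lambda)\,d\lambda$, which will provide the denominator of the weighted average.

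Next I would perturbatively expand $U_2 = 1 - \hat{\mathcal{U}}$ around the ideal infinite-sample limit where $\hat{\mathcal{U}} = 1$. From Lemma~\ref{lem:det_plane_wave_2}, each frequency contributes an $\mathcal{O}(1/N)$ deviation; substituting the spectral decomposition into \eqref{eq:Uhat} and Taylor expanding the square root and ratio to first order in these small corrections gives a linear functional of the per-frequency error. Fubini then permits the expectation to be exchanged with the spectral integral, and inserting the averaged plane-wave expression $g(\lambda) = 4/(\pi^2 \lambda) - 4\lambda/9 + 1/3 - 4/\pi^2$ from Theorem~\ref{th:U2_plane_wave} yields the claimed weighted average normalized by $\int_0^{1/2} S(\lambda)\,d\lambda$.

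The main obstacle is controlling the nonlinearity of $U_2$, which contains a square root of a sum of squared quadratic functionals divided by another quadratic functional. Two families of remainder terms must be shown to be $\mathcal{O}(1/N^2)$: (i) the quadratic remainder in the Taylor expansion of the sqrt/ratio around the ideal unidirectional configuration, and (ii) the off-diagonal cross-frequency contributions to the double spectral integrals that persist through the expectation via fourth-order moments of the spectral measure. Under the bandpass assumption, both families can be bounded uniformly in the spectral variable, after which collecting them into the $\mathcal{O}(1/N^2)$ error term completes the argument.
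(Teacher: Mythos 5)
Your proposal is correct and follows essentially the same route as the paper: the paper's proof likewise uses the spectral representation to superpose random plane waves, exploits properness/uncorrelated increments to collapse the double spectral sums into single integrals weighted by $S(\lambda)$, Taylor-expands $\hat{\mathcal{U}}$ and the denominator $\hat{r}_{ff}(\0)$ around their limits, and then invokes the plane-wave machinery of Lemmas \ref{lem:det_plane_wave}, \ref{lem:det_plane_wave_2} and Theorem \ref{th:U2_plane_wave} to evaluate the per-frequency kernel and the directional expectation. Your explicit attention to the two families of $\mathcal{O}(1/N^2)$ remainders (the nonlinear Taylor remainder and the cross-frequency fourth-moment terms) is, if anything, slightly more careful than the paper's treatment of those steps.
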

\begin{IEEEproof}
See Appendix \ref{th:U2_unidirectional_proof}.
\end{IEEEproof}

This shows that, for unidirectional random fields, the rate with which $E[\hat{\mathcal{U}}] \rightarrow 1$ as $N \rightarrow \infty$ is $1/N$, just like it was for deterministic plane waves. Unfortunately, $E[\hat{\mathcal{U}}]$ still depends on the power spectral density (PSD) of the random field. Since this quantity is generally unknown, we should bound $E[\hat{\mathcal{U}}]$ to obtain an expression that does not depend on the PSD. In the derivations for the random plane wave, we already required that the frequency of the plane wave was not too low or too high. We shall therefore assume that the random field is bandpass. 

\begin{lemma}
The expectation $E[U_2]$ may be bounded as
\begin{equation}
\label{bound_E_U2}
E[U_2] \leq \frac{1}{N}  \left[\frac{4}{ \pi^2} \frac{1}{\lambda_l} - \frac{4}{9} \lambda_l  +  \frac{1}{3} - \frac{4}{\pi^2} \right] + \mathcal{O}(1/N^2),
\end{equation}  
where $\lambda_l$ is the lowest frequency. Equality in \eqref{bound_E_U2} is attained if the PSD takes the form
\begin{eqnarray}
S(\lambda)=\sigma^2 \delta(\lambda - \lambda_l).
\end{eqnarray}
On the other hand, if the PSD has support at higher frequencies, then \eqref{bound_E_U2} is a strict inequality. 
\end{lemma}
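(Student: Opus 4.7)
The plan is to start from Theorem \ref{th:U2_unidirectional}, which expresses $E[U_2]$ as a weighted average of a specific function of frequency against the normalized PSD. Writing
\begin{equation*}
\psi(\lambda) \triangleq \frac{4}{\pi^2}\frac{1}{\lambda} - \frac{4}{9}\lambda + \frac{1}{3} - \frac{4}{\pi^2},
\end{equation*}
we have $E[U_2] = \frac{1}{N}\,\mathbb{E}_S[\psi(\lambda)] + \mathcal{O}(1/N^2)$, where $\mathbb{E}_S[\cdot]$ denotes expectation with respect to the probability measure on $[0,1/2]$ obtained by normalizing $S(\lambda)$. So the task reduces to bounding a weighted average of $\psi$ over the support of the PSD and identifying when the bound is tight.

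Next I would study the monotonicity of $\psi$. A direct computation gives
\begin{equation*}
\psi'(\lambda) = -\frac{4}{\pi^2\lambda^2} - \frac{4}{9},
\end{equation*}
which is strictly negative for all $\lambda>0$. Hence $\psi$ is strictly decreasing on $(0,1/2]$. Since the random field is assumed bandpass with lower cutoff $\lambda_l$, the support of $S$ is contained in $[\lambda_l,1/2]$, and therefore $\psi(\lambda)\le \psi(\lambda_l)$ for every $\lambda$ in the support, with strict inequality whenever $\lambda>\lambda_l$.

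Combining these two observations, the nonnegativity of $S$ yields
\begin{equation*}
\int_0^{1/2} \psi(\lambda)\, S(\lambda)\, d\lambda \;\le\; \psi(\lambda_l) \int_0^{1/2} S(\lambda)\, d\lambda,
\end{equation*}
and dividing through by $\int_0^{1/2} S(\lambda)\,d\lambda$ gives the $1/N$ coefficient $\psi(\lambda_l)$ announced in the lemma. For the equality case, strict monotonicity of $\psi$ implies that the weighted-average inequality is an equality if and only if $S$ is concentrated (as a measure) entirely at $\lambda_l$, i.e.\ $S(\lambda) = \sigma^2\delta(\lambda-\lambda_l)$; if any portion of the PSD mass lies at $\lambda>\lambda_l$, the inequality is strict.

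There is not much technical obstacle here, since the whole argument is essentially a Jensen-type comparison once monotonicity of $\psi$ is verified; the only point requiring a bit of care is ensuring that the $\mathcal{O}(1/N^2)$ remainder from Theorem \ref{th:U2_unidirectional} is genuinely uniform over admissible PSDs so that it can be carried outside the bounding step without inflating its order. I would therefore briefly remark that the remainder in Theorem \ref{th:U2_unidirectional} arises from the $\mathcal{O}(1/N^2)$ term established for each random plane wave in Lemma \ref{lem:det_plane_wave_2}, whose constants depend continuously on $\lambda$ and stay bounded on $[\lambda_l,1/2]$, so the bound after majorizing $\psi$ by $\psi(\lambda_l)$ still has an $\mathcal{O}(1/N^2)$ residual, finishing the proof.
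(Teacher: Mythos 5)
Your proof is correct and follows essentially the same route as the paper: both start from Theorem \ref{th:U2_unidirectional}, normalize the PSD to a probability measure, and exploit the strict monotonic decrease of the frequency-dependent integrand on $[\lambda_l,1/2]$ to majorize the weighted average by its value at $\lambda_l$, with equality precisely when the spectral mass concentrates at $\lambda_l$. Your added remarks on the equality/strictness case and the uniformity of the $\mathcal{O}(1/N^2)$ remainder are slightly more explicit than the paper's, but the argument is the same.
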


\begin{IEEEproof}
Let us define
\begin{equation}
\tilde{S}(\lambda) = \frac{S(\lambda)}{\displaystyle \int_{\lambda_l}^{\lambda_h} S(\lambda) \, d \lambda} ,
\end{equation}
where $\lambda_l$ and $\lambda_h$ denote the lowest and highest frequency component of the PSD. Observing that $\tilde{S}(\lambda)$ integrates to one, $E[U_2]$ may be written as\footnote{For the sake of notational simplicity we ignore the term $\mathcal{O}(1/N^2)$.}
\begin{equation}
E\left[U_2\right] = \frac{1}{N}  \left[ \int_{\lambda_l}^{\lambda_h} \left( \frac{4}{ \pi^2} \frac{1}{\lambda} - \frac{4}{9} \lambda \right) \tilde{S}(\lambda) \, d \lambda +  \frac{1}{3} - \frac{4}{\pi^2} \right].
\end{equation}
Now consider the integral
\begin{equation}
\int_{\lambda_l}^{\lambda_h} \left( \frac{4}{ \pi^2} \frac{1}{\lambda} - \frac{4}{9} \lambda \right) \tilde{S}(\lambda) \, d \lambda.
\end{equation}
By taking into account that the function in parentheses is monotonically decreasing and positive on the considered interval, we may write
\begin{equation}
\int_{\lambda_l}^{\lambda_h} \left( \frac{4}{ \pi^2} \frac{1}{\lambda} - \frac{4}{9} \lambda \right) \tilde{S}(\lambda) \, d \lambda \leq \left( \frac{4}{ \pi^2} \frac{1}{\lambda_l} - \frac{4}{9} \lambda_l \right) \int_{\lambda_l}^{\lambda_h} \tilde{S}(\lambda) \, d \lambda = \frac{4}{ \pi^2} \frac{1}{\lambda_l} - \frac{4}{9} \lambda_l,
\end{equation}
and the proof follows.
\end{IEEEproof}

Theorem \ref{th_pfa} is now obtained by applying Markov's inequality.

\section{Numerical results}
\label{sec_numres}

In this section, we present simulation results illustrating the behavior of our measure of directionality, which we analyzed theoretically in the previous section. First, we examine the statistical behavior of our measure for different kinds of random fields. Then, we apply our measure to a real-world problem, where we detect unidirectional patches on the surface of Venus.

\subsection{Statistical behavior of $\mathcal{U}$}

We generated samples of isotropic, geometrically anisotropic, and purely unidirectional random fields. The PSD of the isotropic and geometrically anisotropic random fields is a \emph{shifted} (band-pass) Mat\'ern covariance function \cite{Stein1999}, given by
\begin{equation*}
S_{ff}(\lambdab)=\frac{\sigma^2 \Gamma(\nu+1) (4\nu)^{\nu}}{\pi\Gamma(\nu) (\pi\rho)^{2\nu} \left[4\nu/(\pi\rho)^2+ \left(\sqrt{\lambdab^T {\mathbf{D}} \lambdab} - \lambda_0\right)^2\right]^{\nu+1}},
\end{equation*}
where $\k = 2 \pi \lambdab$, $\Gamma(\cdot)$ denotes the Gamma-function, and ${\mathbf{D}}=\mathbf{I}_2$ for the isotropic field, and
\begin{equation*}
\D = \frac{1}{\sqrt{0.2775}} \begin{bmatrix}
1 & 0.85 \\ 0.85 & 1
\end{bmatrix}
\end{equation*}
for the anisotropic random field. The purely unidirectional random field has PSD
\begin{equation*}
S_{ff}(\lambda)=\frac{\sigma^2 \Gamma(\nu+1) (4\nu)^{\nu}}{\pi\Gamma(\nu) (\pi\rho)^{2\nu} \left[4\nu/(\pi\rho)^2+ \left(\lambda - \lambda_0\right)^2\right]^{\nu+1}}
\end{equation*}
along the line $\lambdab = \lambda \n$. For all fields, we chose $\nu= 1.5, \, \rho= 20$, $\sigma^2=1$ and $\lambda_0 = 0.1$ as the center frequency.

\begin{figure}[t]
\centering
\includegraphics{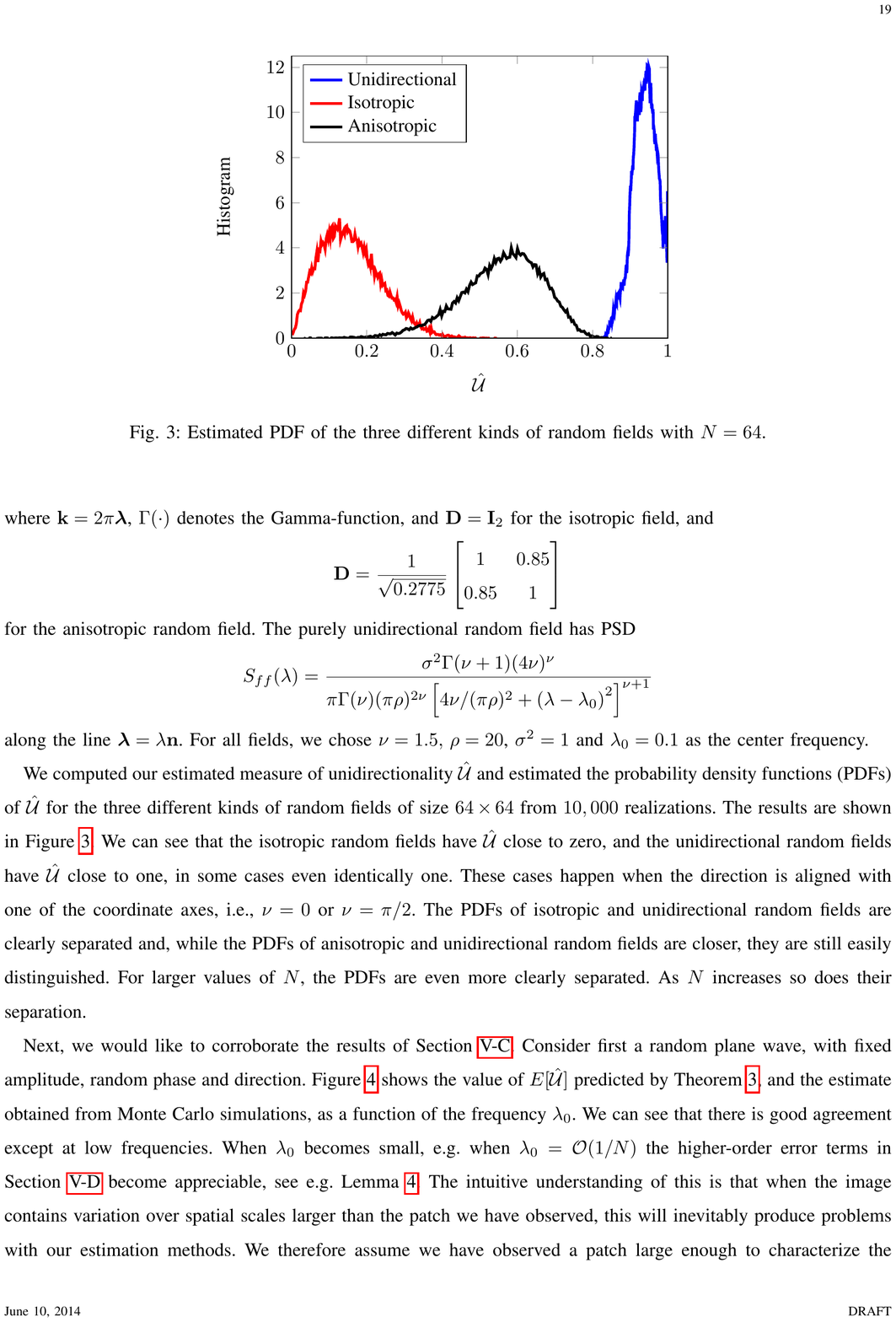}
\caption{Estimated PDF of the three different kinds of random fields with $N = 64$.}
\label{fig:histograms}
\end{figure}

We computed our estimated measure of unidirectionality $\hat{\mathcal{U}}$ and estimated the probability density functions (PDFs) of $\hat{\mathcal{U}}$ for the three different kinds of random fields of size $64 \times 64$ from $10,000$ realizations. The results are shown in Figure \ref{fig:histograms}. We can see that the isotropic random fields have $\hat{\mathcal{U}}$ close to zero, and the unidirectional random fields have $\hat{\mathcal{U}}$ close to one, in some cases even identically one. These cases happen when the direction is aligned with one of the coordinate axes, i.e., $\nu = 0$ or $\nu = \pi/2$. The PDFs of isotropic and unidirectional random fields are clearly separated and, while the PDFs of anisotropic and unidirectional random fields are closer, they are still easily distinguished. For larger values of $N$, the PDFs are even more clearly separated. As $N$ increases so does their separation.

Next, we would like to corroborate the results of Section \ref{sec:analysis}. Consider first a random plane wave, with fixed amplitude, random phase and direction. Figure \ref{fig:Exp_plane} shows the value of $E[\hat{\mathcal{U}}]$ predicted by Theorem \ref{th:U2_plane_wave}, and the estimate obtained from Monte Carlo simulations, as a function of the  frequency $\lambda_0$. We can see that there is good agreement except at low frequencies.  When $\lambda_0$ becomes small, e.g. when $\lambda_0={\cal O}(1/N)$ the higher-order error terms in Section~\ref{sec:deriv} become appreciable, see e.g. Lemma \ref{lem:integral_g}. The intuitive understanding of this is that when the image contains variation over spatial scales larger than the patch we have observed, this will inevitably produce problems with our estimation methods. We therefore assume we have observed a patch large enough to characterize the important structure of the random field. 

\begin{figure}[t]
\centering
\includegraphics{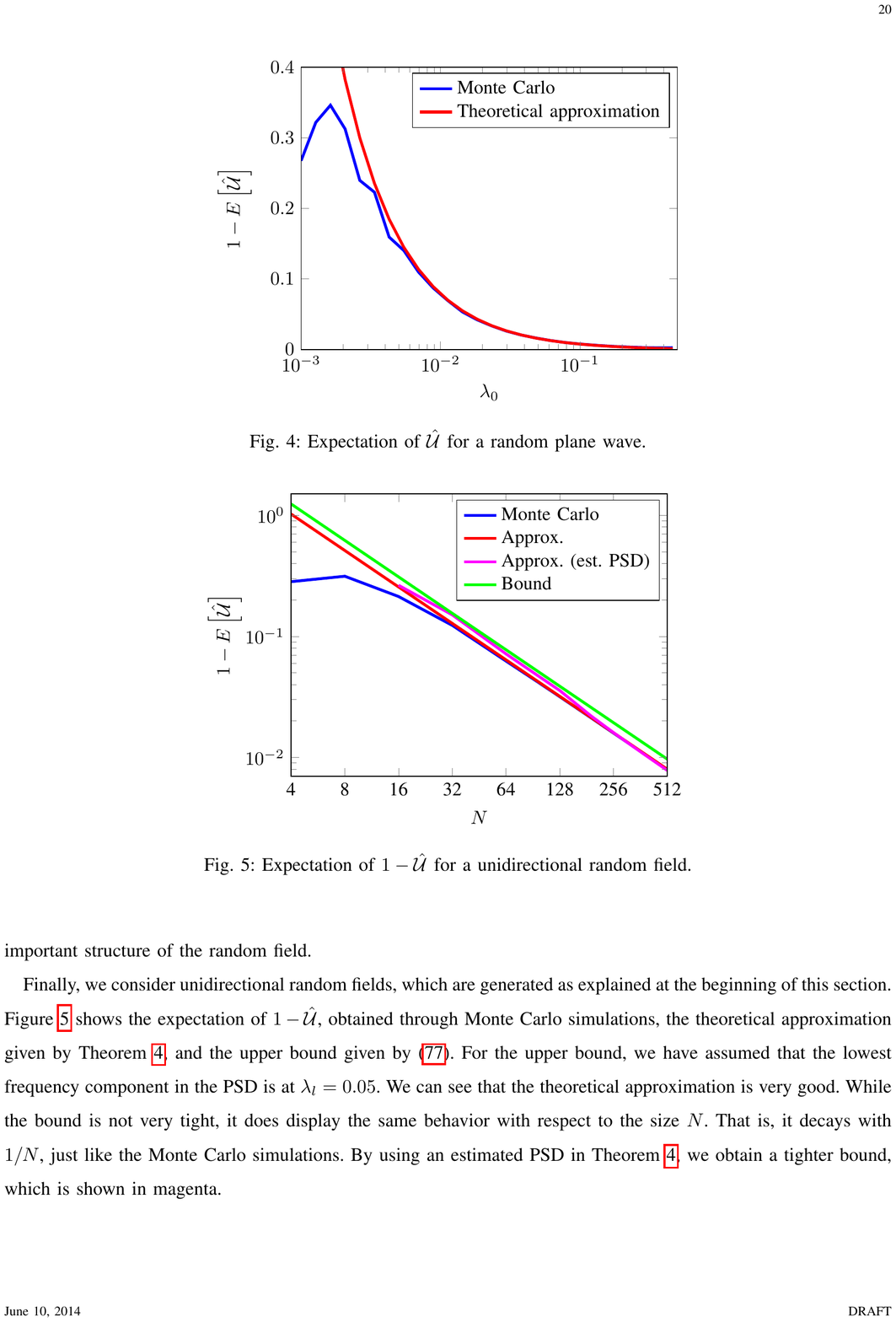}
\caption{Expectation of $\hat{\mathcal{U}}$ for a random plane wave.}
\label{fig:Exp_plane}
\end{figure}

Finally, we consider unidirectional random fields, which are generated as explained at the beginning of this section. Figure \ref{fig:Exp_uni} shows the expectation of $1 - \hat{\mathcal{U}}$, obtained through Monte Carlo simulations, the theoretical approximation given by Theorem \ref{th:U2_unidirectional}, and the upper bound given by \eqref{bound_E_U2}. For the upper bound, we have assumed that the lowest frequency component in the PSD is at $\lambda_l = 0.05$. We can see that the theoretical approximation is very good. While the bound is not very tight, it does display the same behavior with respect to the size $N$. That is, it decays with $1/N$, just like the Monte Carlo simulations. By using an estimated PSD in Theorem \ref{th:U2_unidirectional}, we obtain a tighter bound, which is shown in magenta.

\begin{figure}[t]
\centering
\includegraphics{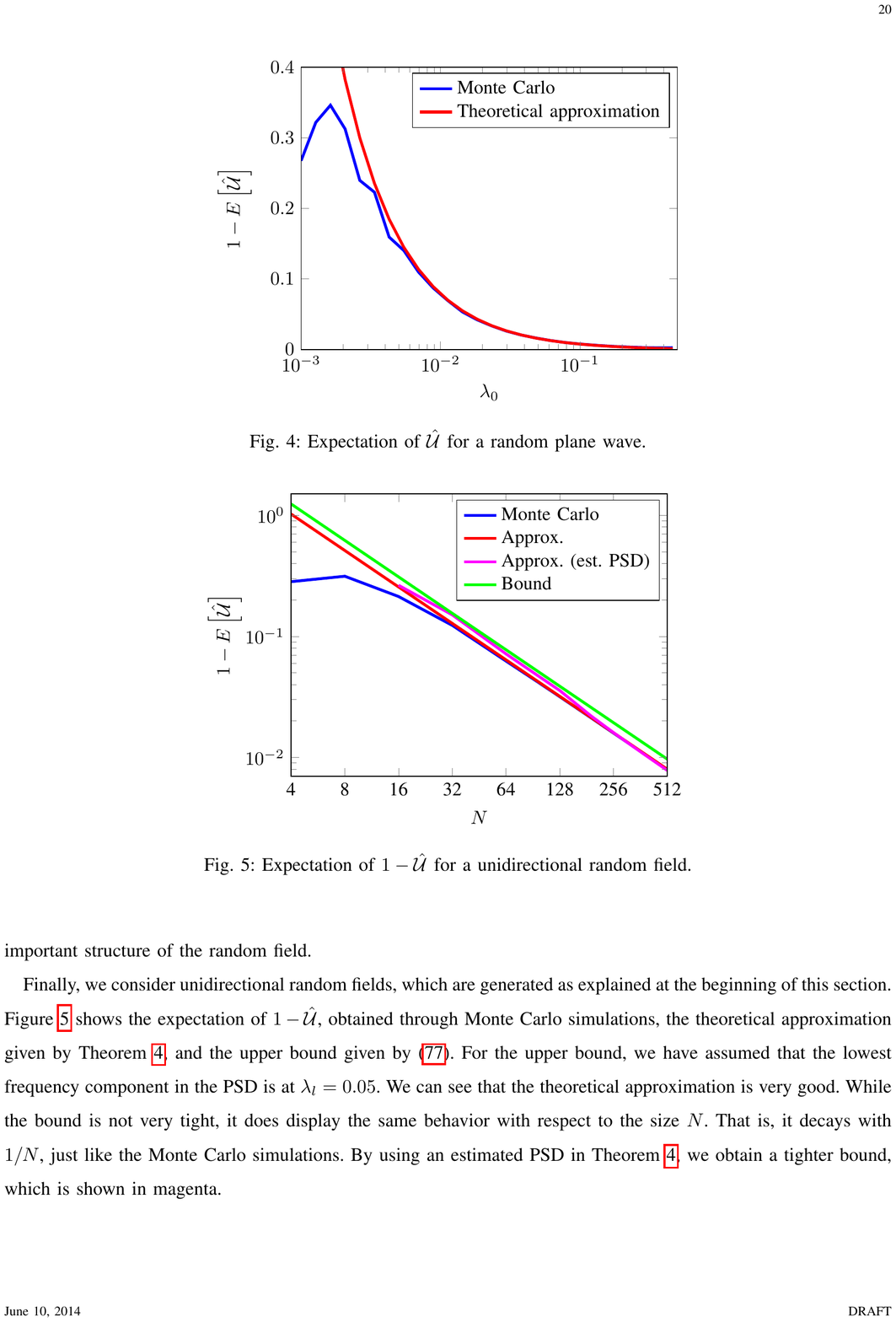}
\caption{Expectation of $1 - \hat{\mathcal{U}}$ for a unidirectional random field.}
\label{fig:Exp_uni}
\end{figure}

\subsection{Application}

We now present an application of our measure where we would like to detect unidirectional patches on the surface of the planet Venus. The image we consider is depicted in Figure \ref{fig:venus}. We would not expect the entire image to be unidirectional, but there are clearly some unidirectional patches. We therefore apply our measure of unidirectionality in a sliding-window fashion. For each pixel in the image, we calculate $\hat{\mathcal{U}}$ for a $16 \times 16$ neighborhood centered around that pixel. Three such neighborhoods are marked with a square in Figure \ref{fig:venus}, and shown in greater magnification in Figure \ref{fig:patches}. The arrows in Figure \ref{fig:venus} indicate the estimated direction of the three patches.

Looking at the three patches in Figure \ref{fig:patches} in detail, we see that Patch a (which corresponds to the bottom left square in Figure \ref{fig:venus}), is almost unidirectional and Patch c (which corresponds to the bottom right square), has a strong unidirectional component. Patch b, on the other hand, is the least unidirectional of the three patches. These observations match the estimated degrees of unidirectionality. The measure of unidirectionality can therefore be used to process large volumes of data automatically, and regions classified as unidirectional can then be scrutinized manually later. This is of strong interest, for instance, in the earth sciences, see e.g. \cite{simons2000isostatic}.

\begin{figure}[t]
\centering
\includegraphics[width=\columnwidth]{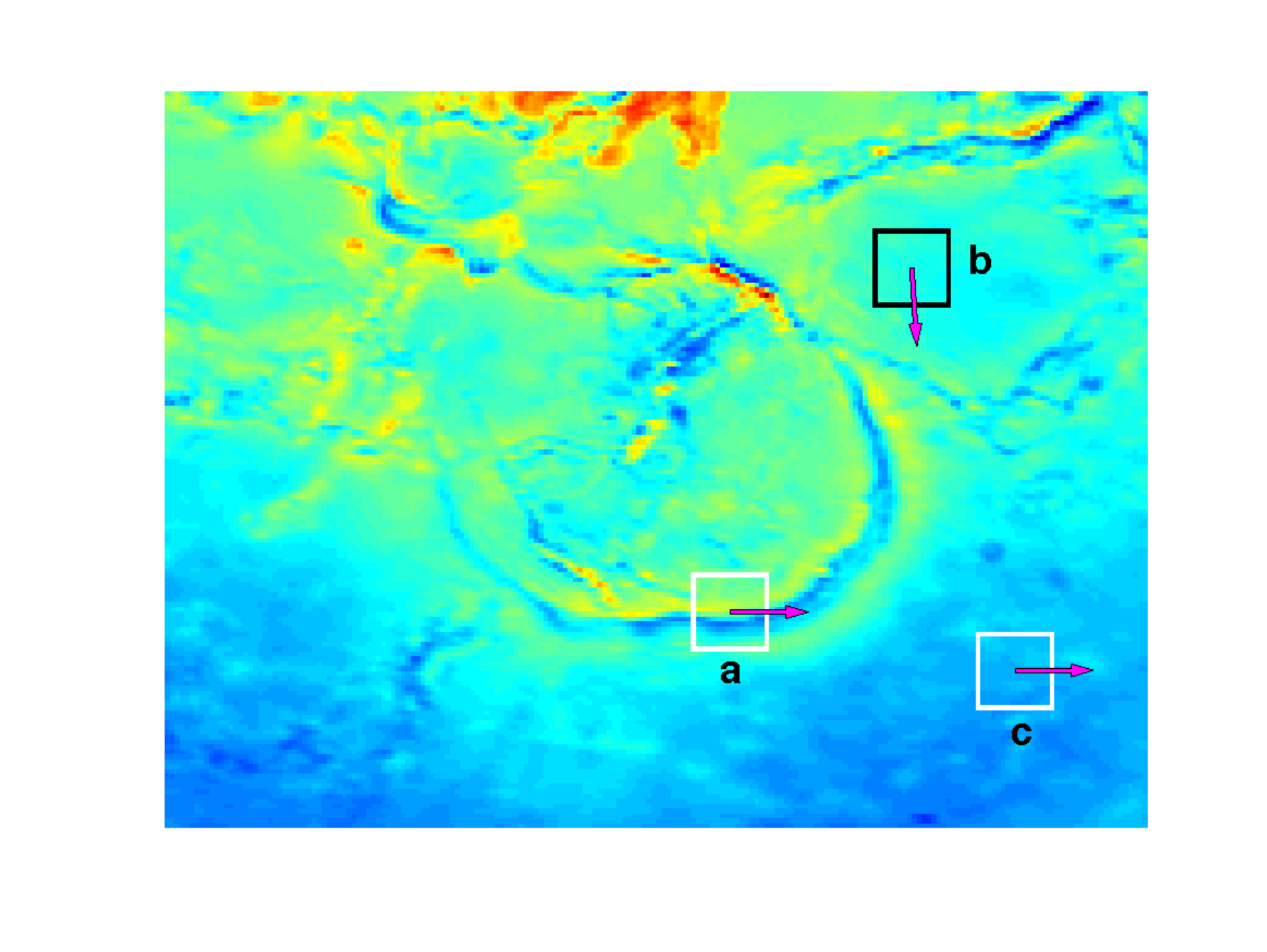}
\caption{Topography of Venus.}
\label{fig:venus}
\end{figure}

\begin{figure}
        \centering
        \begin{subfigure}[b]{0.3\textwidth}
                \centering
                \includegraphics[width=\textwidth]{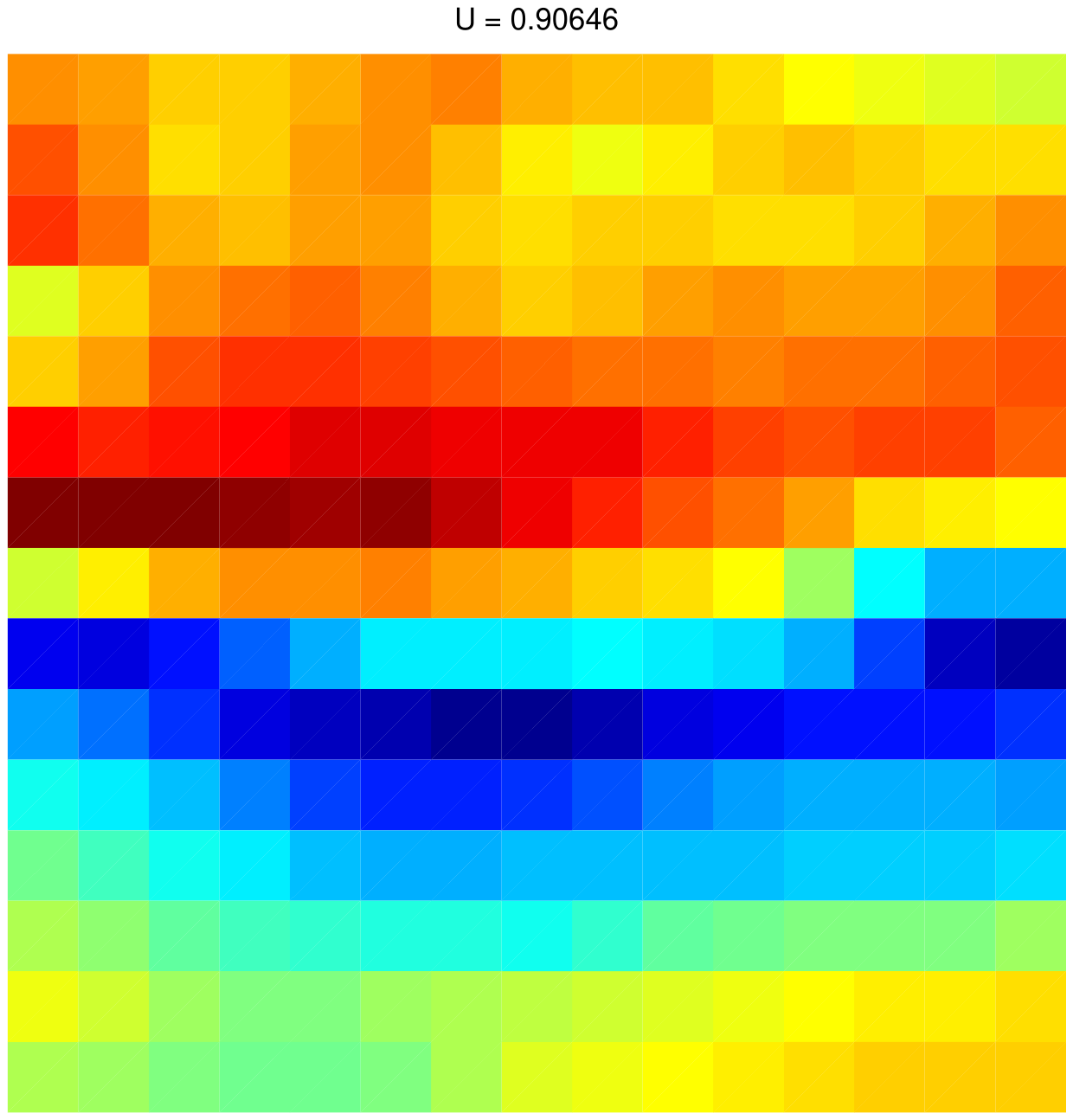}
                \caption{Patch a: $\hat{\mathcal{U}} = 0.906$}
                \label{fig:patch1}
        \end{subfigure}%
        ~ 
        \begin{subfigure}[b]{0.3\textwidth}
                \centering
                \includegraphics[width=\textwidth]{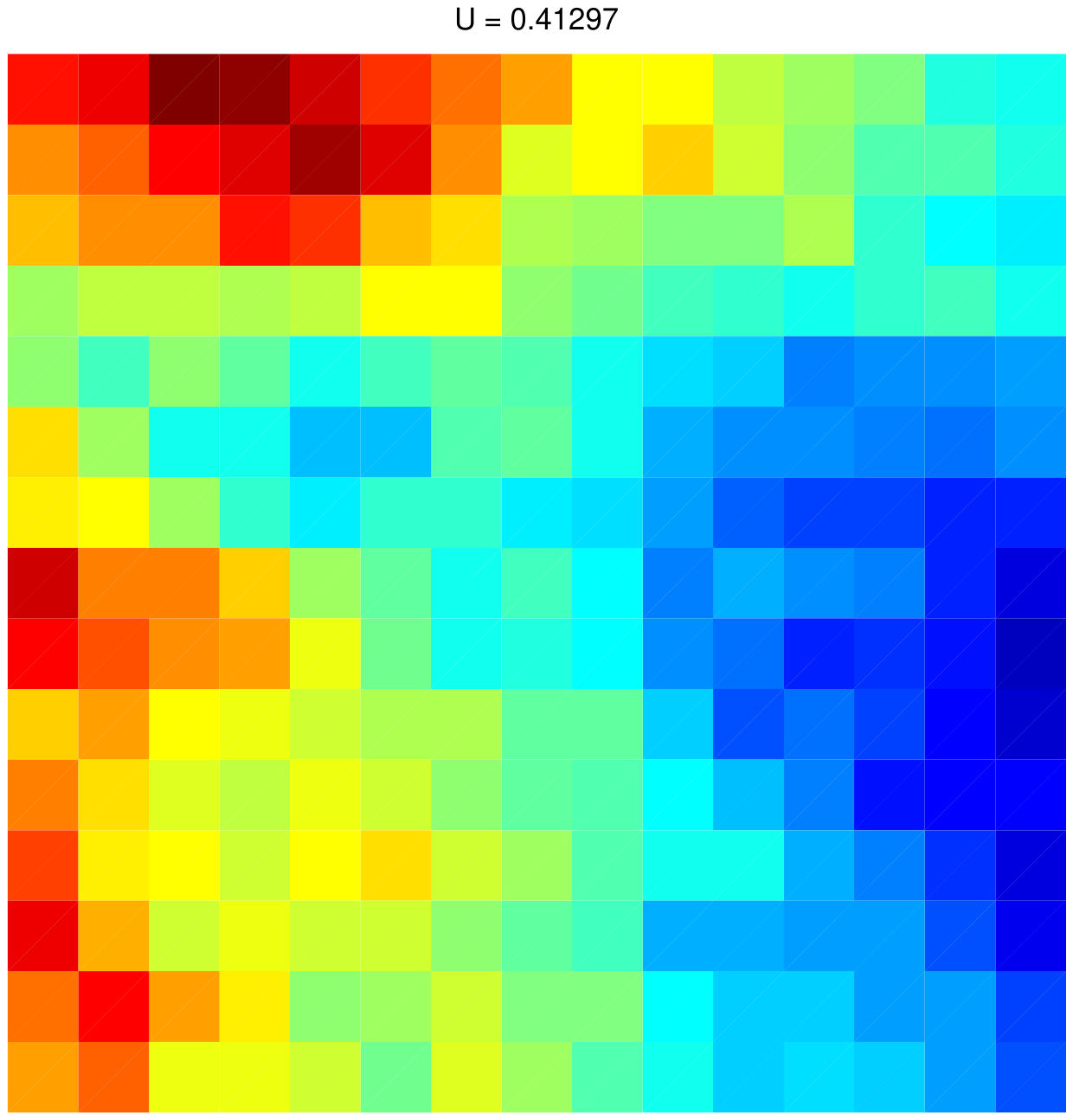}
                \caption{Patch b: $\hat{\mathcal{U}} = 0.412$}
                \label{fig:patch2}
        \end{subfigure}
        ~ 
        \begin{subfigure}[b]{0.3\textwidth}
                \centering
                \includegraphics[width=\textwidth]{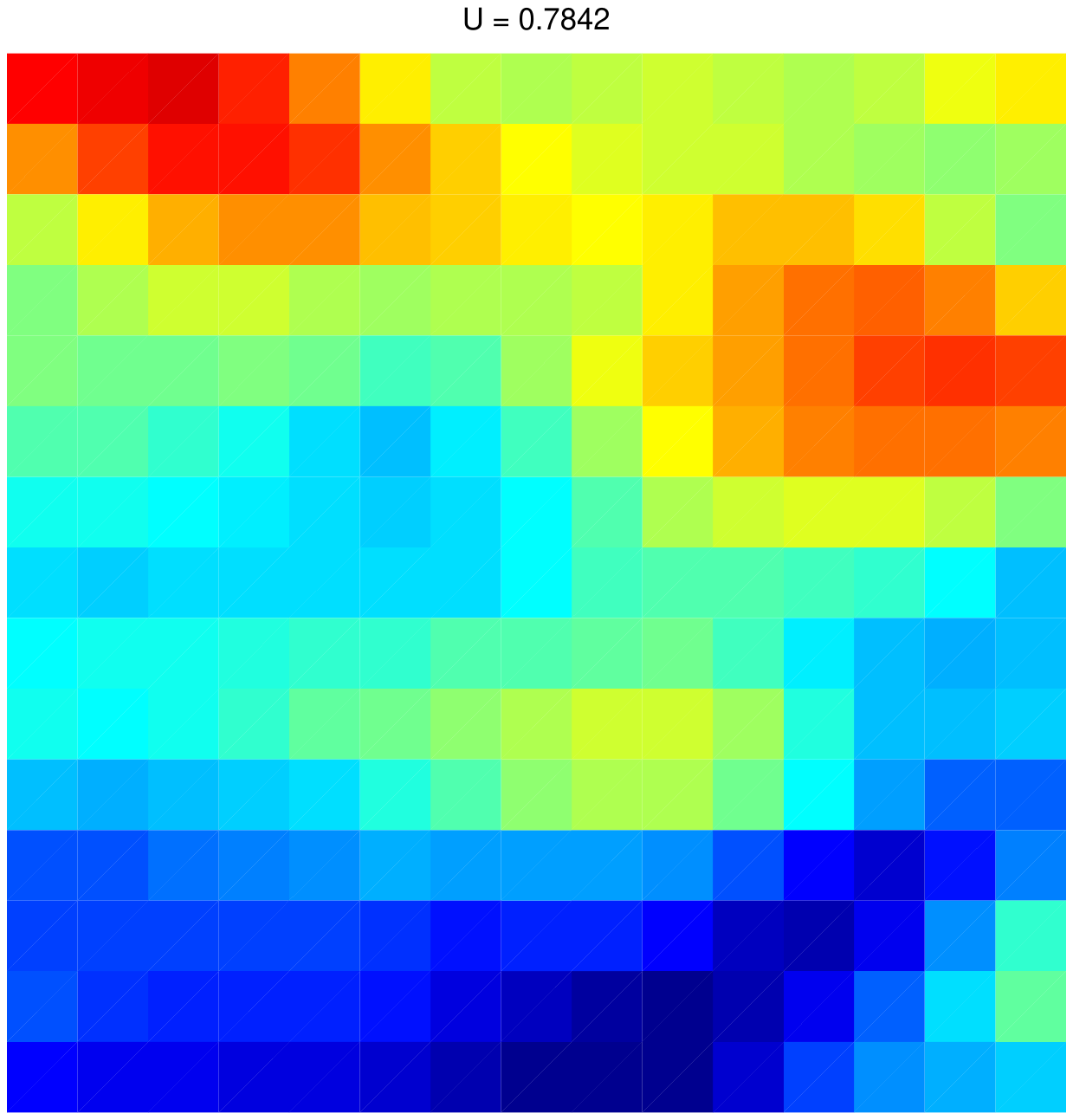}
                \caption{Patch c: $\hat{\mathcal{U}} = 0.784$}
                \label{fig:patch3}
        \end{subfigure}
        \caption{Several patches of Venus, as outlined in Figure \ref{fig:venus}.}
        \label{fig:patches}
\end{figure}

\section{Conclusions}

In this paper, we have introduced the random monogenic signal as a means to detect and analyze directional structures in images. Our main contribution compared to previous approaches is the statistical analysis of a measure of unidirectionality, which may be used to test an image for the presence of unidirectional components. Such a statistical test enables the {\em automatic} processing and classification of large volumes of data, which is of importance, for instance, in the earth sciences.

There are two main possible extensions of our work. First, we were only able to derive a rather loose bound on the detection threshold, so a tighter bound would be desirable. Second, directional structures are usually limited to smaller patches, and random fields may exhibit directionality only in certain frequency ranges in the wavenumber domain. Thus, a space- and wavenumber-localized version of our test should be the focus of future research.

\appendices

\section{Derivation of the covariances of a geometrically anisotropic random field}
\label{sec:proof_cov_ani}

The covariance function of a geometrically anisotropic random field is $r_{ff}(\xib) = C_A(\sqrt{\xib^T \D \xib})$, so its power spectral density may be expressed as $S_{ff}(\k) = S_{ff}\left(\sqrt{\k^T \D^{-1} \k}\right)$. Let us find an alternative expression for 
\begin{equation}
r_{ff}(\xib) =  \iint S_{ff} \left(\sqrt{\k^T \D^{-1} \k}\right) e^{i \k^T \xib} d \k.
\end{equation}
Using a square root matrix of $\D$, which satisfies $\D = \D^{1/2} \D^{T/2}$, we introduce the change of variables
\begin{equation}
\k' = \D^{-1/2} \k, \Rightarrow \k = \D^{1/2} \k', \Rightarrow d \k =| \mathrm{det}(\D^{1/2})| d \k' = d \k' ,
\end{equation}
where $\D^{-1/2} = \Sigmab^{-1/2} \U^T$, with $\Sigmab$ and $\U$ the matrix of eigenvalues and eigenvectors of $\D$, respectively. Therefore, the covariance becomes
\begin{equation}
r_{ff}(\xib) = \iint S_{ff} \left(\| \k' \|\right) e^{i \k'^T \D^{T/2} \xib} d \k'.
\end{equation}
Letting $\tilde{\xib} = \D^{T/2} \xib = \U \Sigmab^{1/2} \xib  = [\tilde{\xi} \cos(\tilde{\theta}), \tilde{\xi} \sin(\tilde{\theta})]^T$ and $\k' = [k' \cos(\kappa'), k' \sin(\kappa')]^T$, it is possible to write $r_{ff}(\xib)$ as
\begin{equation}
r_{ff}(\xib) =  \int_{0}^{\infty} \int_{0}^{2 \pi} S_{ff} \left( k' \right) e^{i k' \tilde{\xi} \cos(\kappa' - \tilde{\theta})} k' d k' d \kappa' = 2 \pi \mathcal{H}_0^{-1}(S_{ff}, \tilde{\xi}).
\end{equation}

The covariance of $g(\x)$ is given by
\begin{equation}
r_{gg}(\xib) = \iint \cos(\kappa)S_{ff} \left(\sqrt{\k^T \D^{-1} \k}\right) e^{ i \k^T \xib} d \k,
\end{equation}
and considering the change of variables
\begin{equation}
\k' = \U^T \k, \Rightarrow \k = \U \k', \Rightarrow d \k =| \mathrm{det}(\U)| d \k' = d \k',
\end{equation}
we find that 
\begin{equation}
r_{gg}(\xib) =  \iint \cos(\kappa) S_{ff} \left(\sqrt{\k'^T \Sigmab^{-1} \k'}\right) e^{i \k'^T \U^T \xib} d \k'.
\end{equation}
We would like to eliminate $\cos(\kappa)$. To this end, we need to rewrite $\cos(\kappa)$ as a function of the new variable. Any $2 \times 2$ orthogonal matrix may be expressed as either
\begin{equation}
\U_1 = \begin{bmatrix} \cos(\alpha) & -\sin(\alpha) \\ \sin(\alpha) & \cos(\alpha) \end{bmatrix} \quad \mbox{or} \quad
\U_2 = \begin{bmatrix} \cos(\alpha) & \sin(\alpha) \\ \sin(\alpha) & -\cos(\alpha) \end{bmatrix},
\end{equation}
where $\U_1$ represents a rotation and $\U_2$ a reflection. Thus, we find that
\begin{align}
\begin{bmatrix}  \cos(\kappa) \\ \sin(\kappa)\end{bmatrix} &=  \U_1 \begin{bmatrix}  \cos(\kappa') \\ \sin(\kappa')\end{bmatrix} =  \begin{bmatrix}  \cos(\alpha) \cos(\kappa') - \sin(\alpha) \sin(\kappa') \\ \sin(\alpha) \cos(\kappa') + \cos(\alpha) \sin(\kappa')\end{bmatrix},\\
\begin{bmatrix}  \cos(\kappa) \\ \sin(\kappa)\end{bmatrix} &=  \U_2 \begin{bmatrix}  \cos(\kappa') \\ \sin(\kappa')\end{bmatrix} =  \begin{bmatrix}  \cos(\alpha) \cos(\kappa') + \sin(\alpha) \sin(\kappa') \\ \sin(\alpha) \cos(\kappa') - \cos(\alpha) \sin(\kappa')\end{bmatrix}.
\end{align}
The covariance then becomes
\begin{align}
r_{gg}(\xib) &= \cos(\alpha) \iint \cos(\kappa) S_{ff} \left(\sqrt{\k^T \Sigmab^{-1} \k}\right) e^{i \k^T \U^T \xib} d \k \nonumber \\ & \phantom{=} \quad \mp \sin(\alpha) \iint \sin(\kappa) S_{ff} \left(\sqrt{\k^T \Sigmab^{-1} \k}\right) e^{i \k^T \U^T \xib} d \k,
\end{align}
where the negative sign corresponds to the rotation matrix and the positive sign to the reflection matrix. Now, let us apply the change of variables
\begin{equation}
\k' = \Sigmab^{-1/2} \k, \Rightarrow \k = \Sigmab^{1/2} \k', \Rightarrow d \k =| \mathrm{det}(\Sigmab^{1/2})| d \k' = d \k'
\end{equation}
to obtain
\begin{align}
r_{gg}(\xib) &= \cos(\alpha)\iint \cos(\kappa) S_{ff} \left(\|\k' \|\right)  e^{i \k'^T \Sigmab^{1/2} \U^T \xib} d \k' \nonumber \\ &\phantom{=} \quad \mp  \sin(\alpha) \iint \sin(\kappa) S_{ff} \left(\|\k' \|\right) e^{i \k'^T \Sigmab^{1/2} \U^T \xib} d \k',
\end{align}
where we notice that this expression also involves $\cos(\kappa)$ and $\sin(\kappa)$. These are given by 
\begin{align}
\cos(\kappa) &= \frac{k_1}{\sqrt{k_1^2 + k_2^2}} = \frac{\sigma_1^{1/2} k'_1}{\sqrt{\sigma_1 k_1^{'2} + \sigma_2 k_2^{'2}}}  = \frac{\cos(\kappa')}{\sqrt{ \cos^2(\kappa') + \sigma_1^{-2} \sin^2(\kappa')}}, \\
\sin(\kappa) &= \frac{k_2}{\sqrt{k_1^2 + k_2^2}} = \frac{\sigma_2^{1/2} k'_2}{\sqrt{\sigma_1 k_1^{'2} + \sigma_2 k_2^{'2}}}  = \frac{\sin(\kappa')}{\sqrt{\sigma_1^{2} \cos^2(\kappa') + \sin^2(\kappa')}}.
\end{align}
so the covariance finally becomes
\begin{equation}
r_{gg}(\xib)  = \iint \beta(\sigma_1,\alpha) S_{ff} \left(k\right) e^{i \k^T \tilde{\xib}} d \k,
\end{equation}
where
\begin{equation}
 \beta(\sigma_1,\alpha) =\frac{\cos(\alpha) \cos(\kappa) \sqrt{\sigma_1^{2} \cos^2(\kappa) + \sin^2(\kappa) } \mp \sin(\alpha) \sin(\kappa) \sqrt{ \cos^2(\kappa) + \sigma_1^{-2} \sin^2(\kappa)}}{ \sigma_1 \cos^2(\kappa) + \sigma_1^{-1} \sin^2(\kappa)}.
 \end{equation}
 Let us consider the Fourier series of $\beta(\lambda_1,\alpha)$, given by
 \begin{equation}
 \beta(\sigma_1,\alpha) = \sum_{l = -\infty}^{\infty} a_l(\sigma_1,\alpha) e^{i l \kappa},
 \end{equation}
 where the Fourier coefficients are 
 \begin{equation}
a_l(\sigma_1,\alpha) = \frac{1}{2 \pi} \int_{-\pi}^{\pi} \beta(\sigma_1,\alpha) e^{- i l \kappa} \, d \kappa.
 \end{equation}
Note that we cannot guarantee that the number of non-zero Fourier coefficients is finite for any given choice of $\lambda_1$ and $\alpha$. Plugging the Fourier series into the covariance, we get
\begin{equation}
r_{gg}(\xib)  = \sum_{l = -\infty}^{\infty} a_l(\sigma_1,\alpha) \int_{0}^{\infty} S_{ff} \left(k\right) k d k \int_{-\pi}^{\pi} e^{i l \kappa + i k \tilde{\xi} \cos(\kappa - \tilde{\xi})} d \kappa.
\end{equation}
Finally, considering the change of variable
\begin{equation}
\kappa' =  \frac{\pi}{2} + \tilde{\xi} - \kappa, \Rightarrow \kappa =  \frac{\pi}{2} + \tilde{\xi} - \kappa', \Rightarrow d \kappa = d \kappa',
\end{equation}
we may rewrite the covariance as\footnote{Note that we do not need to change the integral limits due to the periodicity of the involved functions.}
\begin{align}
r_{gg}(\xib)  &=  \sum_{l = -\infty}^{\infty} i^l e^{i l \tilde{\xi}} a_l(\sigma_1,\alpha) \int_{0}^{\infty} S_{ff} \left(k\right) k d k \int_{-\pi}^{\pi} e^{i k \tilde{\xi} \sin(\kappa') - i l \kappa' } d \kappa' \nonumber \\
&= \sum_{l = -\infty}^{\infty} i^l e^{i l \tilde{\xi}} a_l(\sigma_1,\alpha) \int_{0}^{\infty} S_{ff} \left(k\right) J_l ( k \tilde{\xi}) k d k 
\end{align}
which yields
\begin{equation}
r_{gg}(\xib)  =  \sum_{l = -\infty}^{\infty} i^l e^{i l \tilde{\xi}} a_l(\sigma_1,\alpha) \mathcal{H}_l ( S_{ff}, \tilde{\xi}).
\end{equation}

\section{Proof of Theorem \ref{th:c_eta_uni}}
\label{uni_proof}

If $f(\x)$ is unidirectional, we have seen that its monogenic signal admits the representation
\begin{equation}
m(\x)=f(\x)+{\eta}s(\x),
\end{equation}
where ${\eta}=\cos(\nu)i+\sin(\nu)j$. Define the second quaternion ${\eta}'=-\sin(\nu)i+\cos(\nu)j$, and augment the system with $k$.
Then
\begin{equation}
m^{(\eta')}(\x) = m^{(k)}(\x)  = m^\ast(\x).
\end{equation}
We know that any stationary random field satisfies
\begin{equation}
\cov\{m(\x),m^\ast(\x-\xib)\}=0,
\end{equation}
and so the sample from the monogenic unidirectional signal is $\mathbb{C}^{\eta}$-proper.

To prove the converse of the statement we assume
that we have the monogenic signal of a stationary random field, and that any finite sample from it is $\mathbb{C}^{\eta}$-proper.
Because it is a monogenic signal it has the form
\begin{equation}
m(\x)=f(\x)+i g(\x)+j h(\x),
\end{equation}
without a $k$-component. Signals that are $\mathbb{C}^{\eta}$-proper satisfy
\begin{align}
r_{m m^{(\eta')}} (\xib) = \cov\left\{{m}(\x),{m}^{(\eta')}(\x-\xib)\right\}={0},\quad
r_{m m^{(k)}} (\xib) = \cov\left\{{m}(\x),{m}^{(k)}(\x-\xib)\right\}={0}.
\end{align}
The second statement is true because we have a stationary monogenic signal, and therefore provides no additional information as ${m}^{(k)}(\x) = {m}^{\ast}(\x).$
We write the monogenic signal in the basis $\{\eta, \eta', k\}$ as
\begin{equation}
m(\x)=f(\x)+\eta\left(\cos(\nu)g(\x)+\sin(\nu) h(\x)\right)+\eta'
\left(-\sin(\nu)g(\x)+\cos(\nu) h(\x)\right).
\end{equation}
We therefore find that
\begin{multline}
r_{m m^{(\eta')}} (\xib) 
= r_{ff}(\xib)-\cos(2\nu)[r_{gg}(\xib)-r_{hh}(\xib)]-2\sin(2\nu)
r_{gh}(\xib)\\
+\eta'[\sin(\nu)2r_{fg}(\xib)-2\cos(\nu)r_{fh}(\xib)]
+k\{[r_{gg}(\xib)-r_{hh}(\xib)]\sin(2\nu)-2
r_{gh}(\xib)\cos(2\nu)\}.
\end{multline}
Since we are assuming $\mathbb{C}^{\eta}$-propriety, the complementary covariance satisfies $r_{m m^{(\eta')}} (\xib) = 0$, which yields
\begin{align}
\label{cond6}
\tan (\nu)&=\frac{r_{fh}(\xib)}{r_{fg}(\xib)},\\
\label{cond7}
r_{hh}(\xib)&=\sin^2(\nu)r_{ff}(\xib),\\
\label{cond8}
r_{gg}(\xib)&=\cos^2(\nu)r_{ff}(\xib),\\
\label{cond9}
r_{gh}(\xib)&=\sin(\nu)\cos(\nu)r_{ff}(\xib).
\end{align}
Consider a finite patch of the random field and assemble the samples of the random field and its Riesz transforms into the vectors $\f$, $\g = a_g \mathbf{f}+ b_g\mathbf{f}_g^{\perp}$ and $\h = a_h \mathbf{f}+ b_h\mathbf{f}_h^{\perp}$. The covariance matrices of the vectors $\mathbf{g}$ and $\mathbf{h}$ are
\begin{align}
\cov\left\{\mathbf{g},\mathbf{g}\right\}&=a_g^2\cov\left\{\mathbf{f},\mathbf{f}\right\}+
b_g^2\cov\left\{\mathbf{f}_g^{\perp},\mathbf{f}_g^{\perp}\right\}=\cos^2(\nu)\cov\left\{\mathbf{f},\mathbf{f}\right\},\\
\cov\left\{\mathbf{h},\mathbf{h}\right\}&=a_h^2\cov\left\{\mathbf{f},\mathbf{f}\right\}+
b_h^2\cov\left\{\mathbf{f}_h^{\perp},\mathbf{f}_h^{\perp}\right\}=\sin^2(\nu)\cov\left\{\mathbf{f},\mathbf{f}\right\},
\end{align}
which implies that the covariance matrices satisfy
\begin{align}
\cov\left\{\mathbf{f}_g^{\perp},\mathbf{f}_g^{\perp}\right\} = \cov\left\{\mathbf{f}_h^{\perp},\mathbf{f}_h^{\perp}\right\}= \cov\left\{\mathbf{f},\mathbf{f}\right\},
\end{align}
 and the scalars satisfy $a_g^2+b_g^2 = \cos^2(\nu)$ and $a_h^2+b_h^2 = \sin^2(\nu)$. Taking into account the covariance matrices between $\f$ and $\g$, and between $\f$ and $\h$, it is straightforward to show that $a_g = a_h = 0$, which implies that $b_g^2 = \cos^2(\nu)$, and $b_h^2 = \sin^2(\nu)$. Now, using the covariance matrix between $\g$ and $\h$, we have that $b_g = \cos(\nu)$, and $b_h = \sin(\nu)$. We shall call ${\mathbf{f}}_g^{\perp}=\mathbf{s}$. Then, the Riesz transforms are $\g = \cos(\nu)\mathbf{s}$, and $\h = \sin(\nu)\mathbf{s}$.
As we can write the sampled monogenic signal in this form for {\em arbitrary} sample length, we have $g(\x) = \cos(\nu) s(\x)$, and $h(\x) = \sin(\nu)s(\x)$, or in the frequency domain
\begin{align}
G(\bk)& = - i\cos(\kappa)F(\bk) = \cos(\nu) S(\bk), \\
H(\bk)& = - i\sin(\kappa)F(\bk) = \sin(\nu) S(\bk).
\end{align}
Finally, for this to hold for {\em all} $\bk$, we must have 
\begin{align}
F(\bk)&\propto \delta(\kappa-(\nu\pm m \pi))\\
S(\bk)&= -i \, {\mathrm{sgn}}\left(\mathbf{n}^T\bk\right)F(\bk).
\end{align}
This shows that the {\em only} monogenic stationary random field that is $\mathbb{C}^{\eta}$-proper is the unidirectional field.

\section{Proof of Lemma \ref{lem:det_plane_wave}}
\label{lem:det_plane_wave_proof}

Similar to $\hat{r}_{\widetilde{m} \widetilde{m}^{(\diamond)}}(\0)$, we define $\hat{r}_{ff}(\mathbf{0})$ as\footnote{There are similar definitions for $\hat{r}_{gg}(\x), \hat{r}_{hh}(\x)$, and $\hat{r}_{gh}(\x)$.}
\begin{equation}
\hat{r}_{ff}(\mathbf{0})=\frac{1}{N^2}\sum_{n,n'}f^2(\x_{n,n'}).
\end{equation}
The covariance of the periodic discrete Riesz transform $\tilde{g}(\x)$ is estimated as
\begin{equation}
\hat{r}_{\tilde{g} \tilde{g}}(\mathbf{0})=\frac{1}{N^2}\sum_{n,n'} \tilde{g}^2(\x_{n,n'})=\hat{r}_{gg}(\mathbf{0})+(\hat{r}_{\tilde{g} \tilde{g}}(\mathbf{0})-
\hat{r}_{gg}(\mathbf{0}))= \hat{r}_{gg}(\mathbf{0})+\Delta r_{\tilde{g} \tilde{g}},
\end{equation} 
with
\begin{equation}
\Delta r_{\tilde{g} \tilde{g}} = \frac{1}{N^2}\sum_{n,n'} \tilde{g}^2(\x_{n,n'}) - \frac{1}{N^2}\sum_{n,n'} g^2(\x_{n,n'}),
\end{equation}
and the covariances $\hat{r}_{\tilde{h} \tilde{h}}(\mathbf{0})$ and $\hat{r}_{\tilde{g} \tilde{h}}(\mathbf{0})$ are obtained analogously. Performing a Taylor series expansion of $\hat{\cal U}$ in terms of $\Delta r_{\tilde{g} \tilde{g}}, \Delta r_{\tilde{h} \tilde{h}}$ and $\Delta r_{\tilde{g} \tilde{h}}$, we find that 
\begin{equation}
\label{eq:U_taylor}
\hat{\cal U} = 1 - \frac{2 n_2^2 \Delta \hat{r}_{\tilde{g} \tilde{g}} + 2 n_1^2
\Delta \hat{r}_{\tilde{h} \tilde{h}} - 4 n_1 n_2 \Delta \hat{r}_{\tilde{g} \tilde{h}}}{r_{ff}(\0)} + {\cal O}\left(\max( \Delta r^{3}_{\tilde{(\cdot)} \tilde{(\cdot)}  })\right),
\end{equation}
where ${\cal O}\left(\max( \Delta r^{3}_{\tilde{(\cdot)} \tilde{(\cdot)}  })\right)$ stands for higher-order terms of $\Delta r_{\tilde{g} \tilde{g}}, \Delta r_{\tilde{h} \tilde{h}}$ and $\Delta r_{\tilde{g} \tilde{h}}$. We now need to quantify the deviation error $U_2 = 1 - \hat{\mathcal{U}}$. Let us start by writing the Fourier transforms of $g(\x)$ and $\tilde{g}(\x)$, which are given by
\begin{equation}
G(\bk) = R^{(1)}(\k) F(\k) =  G^{+}(\bk) + G^{-}(\bk)
\end{equation}
and
\begin{equation}
\tilde{G}(\bk) = \tilde{R}^{(1)}(\k) F(\k) = \tilde{G}^{+}(\bk) + \tilde{G}^{-}(\bk),
\end{equation}
respectively. Here, $G^{\pm}(\bk)$ and $\tilde{G}^{\pm}(\bk)$ are the infinite-length discrete-space and periodic discrete first Riesz transforms corresponding to the positive and negative frequencies, respectively. These are given by
\begin{equation}
G^{\pm}(\bk) = - i n_1\frac{A}{2}e^{i\phi}e^{i\frac{(N-1)[(k_0n_1 \mp k_1) + (k_0n_2 \mp k_2)]}{2}}
\frac{\sin(N/2(k_0n_1  \mp k_1))}{\sin(1/2(k_0
n_1 \mp k_1))}\frac{\sin(N/2(k_0n_2 \mp k_2))}{\sin(1/2(k_0n_2 \mp k_2))},
\end{equation}
and
\begin{equation}
\tilde{G}^{\pm}(\bk) = - i \frac{k_1}{\sqrt{k_1^2+k_2^2}} \frac{A}{2}e^{i\phi}e^{i\frac{(N-1)[(k_0n_1 \mp k_1) + (k_0n_2 \mp k_2)]}{2}}
\frac{\sin(N/2(k_0n_1  \mp k_1))}{\sin(1/2(k_0
n_1 \mp k_1))}\frac{\sin(N/2(k_0n_2 \mp k_2))}{\sin(1/2(k_0n_2 \mp k_2))}.
\end{equation}
We notice that the main difference between $G^{\pm}(\bk)$ and $\tilde{G}^{\pm}(\bk)$ is the first term, which is $n_1$ for the former and $k_1/\sqrt{k_1^2 + k_2^2}$ for the latter. The term $n_1$ comes from obtaining the Riesz transform of a complex exponential, which may be done in closed form. The term $k_1/\sqrt{k_1^2 + k_2^2}$ is precisely the filter that has to be applied to obtain a Riesz transform. Now, the Parseval-Rayleigh relationship allows us to express $\hat{r}_{gg}(\mathbf{0})$ in the frequency domain as 
\begin{multline}
\hat{r}_{gg}({\mathbf{0}})=\frac{1}{N^4} \sum_{k_1, k_2} \left|G^{+}(\bk) + G^{-}(\bk)\right|^2
=\frac{1}{N^4} \sum_{k_1, k_2} \left\{ \left|G^{+}(\bk)\right|^2 + \left|G^{-}(\bk)\right|^2 \right\} + \mathcal{O}(1/N^2)  \\
=\hat{r}^{+}_{gg}({\mathbf{0}}) + \hat{r}^{-}_{gg}({\mathbf{0}}) + \mathcal{O}(1/N^2),
\end{multline}
and similar expressions hold for $\hat{r}_{\tilde{g} \tilde{g}}({\mathbf{0}})$, $\hat{r}^{+}_{\tilde{g} \tilde{g}}({\mathbf{0}})$ and $\hat{r}^{-}_{\tilde{g} \tilde{g}}({\mathbf{0}})$. Therefore, the error term for the positive and negative frequencies becomes
\begin{equation}
\Delta \hat{r}_{\tilde{g} \tilde{g}}^{\pm}  = \hat{r}^{\pm}_{\tilde{g} \tilde{g}}({\mathbf{0}})-\hat{r}^{\pm}_{gg}({\mathbf{0}}) =
\frac{A^2}{4N^4}\sum_{\lambda_1, \lambda_2}\left[\frac{\lambda_1^2}{\lambda_1^2+\lambda_2^2}-n_1^2\right]
\frac{\sin^2(\pi N(\lambda_0 n_1 \mp \lambda_1))}{\sin^2(\pi (\lambda_0 n_1 \mp \lambda_1))}
\frac{\sin^2(\pi N(\lambda_0 n_2 \mp \lambda_2))}{\sin^2(\pi (\lambda_0 n_2 \mp \lambda_2))},
\end{equation}
where we have used $k_i = 2 \pi \lambda_i$, and the overall error is
\begin{equation}
\label{eq:error_g}
 \Delta \hat{r}_{\tilde{g} \tilde{g}} = \Delta \hat{r}_{\tilde{g} \tilde{g}}^{+} + \Delta \hat{r}_{\tilde{g} \tilde{g}}^{-} = \frac{A^2}{2 N^4}\sum_{\lambda_1, \lambda_2}\left[\frac{\lambda_1^2}{\lambda_1^2+\lambda_2^2}-n_1^2\right]
\frac{\sin^2(\pi N(\lambda_0 n_1 - \lambda_1))}{\sin^2(\pi (\lambda_0 n_1 - \lambda_1))}
\frac{\sin^2(\pi N(\lambda_0 n_2 - \lambda_2))}{\sin^2(\pi (\lambda_0 n_2 - \lambda_2))} + \mathcal{O}(1/N^2).
 \end{equation}
Applying the same procedure, we obtain similar expressions for the remaining error terms. The error in the covariance of $h(\x)$ is
\begin{equation}
\label{eq:error_h}
 \Delta \hat{r}_{\tilde{h} \tilde{h}} = \frac{A^2}{2 N^4}\sum_{k_1, k_2}\left[\frac{\lambda_2^2}{\lambda_1^2+\lambda_2^2}-n_2^2\right]
\frac{\sin^2(\pi N(\lambda_0 n_1 - \lambda_1))}{\sin^2(\pi (\lambda_0 n_1 - \lambda_1))}
\frac{\sin^2(\pi N(\lambda_0 n_2 - \lambda_2))}{\sin^2(\pi (\lambda_0 n_2 - \lambda_2))} + \mathcal{O}(1/N^2),
 \end{equation}
and for the cross-covariance it is
 \begin{equation}
 \label{eq:error_gh}
 \Delta \hat{r}_{\tilde{g} \tilde{h}} = \frac{A^2}{2 N^4}\sum_{\lambda_1, \lambda_2}\left[\frac{\lambda_1 \lambda_2}{\lambda_1^2+\lambda_2^2}- n_1 n_2\right]
\frac{\sin^2(\pi N(\lambda_0 n_1 - \lambda_1))}{\sin^2(\pi (\lambda_0 n_1 - \lambda_1))}
\frac{\sin^2(\pi N(\lambda_0 n_2 - \lambda_2))}{\sin^2(\pi (\lambda_0 n_2 - \lambda_2))} + \mathcal{O}(1/N^2).
 \end{equation}
Finally, the proof is concluded by inserting \eqref{eq:error_g}, \eqref{eq:error_h} and \eqref{eq:error_gh} into \eqref{eq:U_taylor} and taking into account that $r_{ff}(\0) = A^2/2$.

\section{Proof of Lemma \ref{lem:det_plane_wave_2}}
\label{lem:det_plane_wave_2_proof}

Let us consider ${\mathcal{B}}_{01}$, which is given by
\begin{multline}
{\cal B}_{01 } = \frac{2}{N^4} \sum_{|j_2|=N^{\beta}}^{N/2-1} \sum_{j_1=-L}^{L} \frac{\left\{(c_1/N+j_1/N) n_2-(c_2/N+j_2/N)n_1\right\}^2}{( \lambda_0 n_1+c_1/N+j_1/N)^2+( \lambda_0 n_2+c_2/N+j_2/N)^2}\\ \times\frac{\sin^2(\pi (c_1 + j_1 ))}{\sin^2(\pi (c_1/N+j_1/N))} \frac{\sin^2(\pi (c_2 + j_2))}{\sin^2(\pi (c_2/N+j_2/N))},
\end{multline}
where $L \ll N$ and $\beta$ is a real number close to one, but smaller. Taking into account $c_1, j_1 \ll N$ and $\sin(x) \approx x$ for small $x$, we find that
\begin{equation}
{\cal B}_{01 } = \frac{2}{N^2} \sum_{j_1=-L}^{L} \frac{\sin^2(\pi (c_1 + j_1))}{(\pi (c_1 + j_1))^2} \sum_{|j_2|=N^{\beta}}^{N/2-1}  \frac{(c_2/N+j_2/N)^2 n^2_1}{\lambda_0^2 n_1^2+(\lambda_0 n_2+c_2/N+j_2/N)^2} \frac{\sin^2(\pi (c_2 + j_2))}{\sin^2(\pi (c_2/N+j_2/N))}.
\end{equation}
It is easy to verify that for a relatively small value of $L$, the first sum in the above expression is almost one for every value of $c_1$. Moreover, $\sin^2(\pi (c_2 + j_2)) = \sin^2(\pi c_2)$, which yields
\begin{equation}
{\cal B}_{01 } \approx \frac{2}{N^2} \sum_{|j_2|=N^{\beta}}^{N/2-1}  \frac{(c_2/N+j_2/N)^2 n^2_1}{\lambda_0^2 n_1^2+(\lambda_0 n_2+c_2/N+j_2/N)^2} \frac{\sin^2(\pi c_2)}{\sin^2(\pi (c_2/N+j_2/N))}.
\end{equation}
The above sum may be considered as a Riemann integral, and therefore
\begin{equation}
{\cal B}_{01 } \approx \frac{2}{N} n^2_1 \sin^2(\pi c_2) \int_{|\lambda_2| = N^{\beta -1}}^{1/2}  \frac{(c_2/N+\lambda_2)^2 }{\lambda_0^2 n_1^2+(\lambda_0 n_2+c_2/N+\lambda_2)^2} \frac{1}{\sin^2(\pi (c_2/N+\lambda_2))} \, d \lambda_2.
\end{equation}
Applying the change of variable $\lambda = c_2/N + \lambda_2$, and observing that $N^{\beta -1} \approx 0$, we find that
\begin{equation}
{\cal B}_{01} \approx \frac{2}{N} n^2_1 \sin^2(\pi c_2) \left[ \int_{0}^{1/2}  \frac{1 }{\lambda_0^2 n_1^2+(\lambda_0 n_2 + \lambda)^2} \frac{\lambda^2}{\sin^2(\pi \lambda)} \, d \lambda + \int_{0}^{1/2}  \frac{1 }{\lambda_0^2 n_1^2+(\lambda_0 n_2 - \lambda)^2}\frac{\lambda^2}{\sin^2(\pi \lambda)} \, d \lambda \right].
\end{equation}
Finally, the proof follows from the symmetry of ${\cal B}_{10}$ with respect to ${\cal B}_{01}$.

\section{Proof of Theorem \ref{th:U2_plane_wave}}
\label{th:U2_plane_wave_proof}

In this appendix, we obtain
\begin{equation}
E[U_2] = E[\sin ^2(\pi c_2)  {\cal G}(\lambda_0,n_1,n_2) ] + E[\sin ^2(\pi c_1)  {\cal G}(\lambda_0,n_2,n_1) ],
\end{equation}
where the expectation is with respect to $\nu$, with $\n = [n_1, n_2]^T = [\cos \nu, \sin \nu]^T$. We first note that we may assume that $c_i$ is a random variable uniformly distributed on $[-1/2,1/2]$ and independent of $\nu$, which is approximately true for large values of $N$. Then, $E[U_2]$ becomes
\begin{align}
E[U_2] &= E[\sin ^2(\pi c_2)] E[{\cal G}(\lambda_0,n_1,n_2) ] + E[\sin ^2(\pi c_1)] E[{\cal G}(\lambda_0,n_2,n_1) ] \nonumber \\ &= \frac{1}{2}  E[{\cal G}(\lambda_0,n_1,n_2) + {\cal G}(\lambda_0,n_2,n_1) ].
\end{align}
We should therefore obtain a closed-form expression for the integrals ${\cal G}(\lambda_0,n_1,n_2)$ and ${\cal G}(\lambda_0,n_2,n_1)$. We shall start with the individuals integrals, which are solved in the following lemma.
\begin{lemma}
\label{lem:integral_g}
The integral ${\cal G}_{\pm}(\lambda_0,n_1,n_2)$ is approximately given by
\begin{multline}
{\cal G}_{\pm}(\lambda_0,n_1,n_2) \approx \frac{n_1^2}{6} + |n_1| \left( \frac{1}{\lambda_0 \pi^2} + \frac{\lambda_0}{3} \left( 2 n_2^2 -  1 \right) \right) \left[ \arctan \left(\frac{1}{2 \lambda_0 |n_1|} \pm \frac{n_2}{|n_1|}\right) \mp \arctan \left(\frac{n_2}{|n_1|}\right) \right] \\ \mp\frac{ \lambda_0 n_1^2 n_2}{3} \log\left(1 \pm \frac{n_2}{\lambda_0}  + \frac{1}{4\lambda_0^2}\right).
\end{multline}
\end{lemma}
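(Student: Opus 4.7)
The plan is to reduce the integrand to a rational function of a single variable and then evaluate term-by-term using elementary antiderivatives. The only transcendental factor is $\lambda^2/\sin^2(\pi\lambda)$, which I would replace by its two-term Taylor expansion
\begin{equation*}
\frac{\lambda^2}{\sin^2(\pi\lambda)} \;=\; \frac{1}{\pi^2} \;+\; \frac{\lambda^2}{3} \;+\; O(\lambda^4),
\end{equation*}
following from $\pi\lambda/\sin(\pi\lambda) = 1 + (\pi\lambda)^2/6 + O(\lambda^4)$. That this is the intended truncation is confirmed a posteriori by the appearance of $1/\pi^2$ and $1/3$ in the claimed formula and by the observation that $n_1^2/6 = n_1^2\int_0^{1/2}(1/3)\,d\lambda$ accounts for the first summand.

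I would then perform the affine change of variable $u = \lambda_0 n_2 \pm \lambda$, which collapses the denominator to $\lambda_0^2 n_1^2 + u^2$ and turns $\lambda^2$ into $(u - \lambda_0 n_2)^2$. Expanding the square and subtracting a multiple of the denominator so that the leading $u^2/3$ is absorbed, the integrand decomposes into a constant piece, a piece proportional to $1/(\lambda_0^2 n_1^2 + u^2)$, and a piece proportional to $u/(\lambda_0^2 n_1^2 + u^2)$. Using $n_1^2 + n_2^2 = 1$ to combine cross-terms yields exactly the coefficient $1/\pi^2 + \lambda_0^2(2n_2^2 - 1)/3$ of the middle piece that appears in the lemma.

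Each piece has a textbook antiderivative — polynomial, arctangent, logarithm — which I would evaluate between $u = \lambda_0 n_2$ and $u = \lambda_0 n_2 \pm 1/2$. Multiplying by $n_1^2$ gives, respectively, the $n_1^2/6$ term, the arctan bracket with the stated $\mp$ signs (the second $\mp$ comes from the orientation swap of the limits under the $-$ substitution), and the log term, whose argument simplifies neatly via $n_1^2 + n_2^2 = 1$ since $\lambda_0^2 n_1^2 + \lambda_0^2 n_2^2 = \lambda_0^2$ and $\lambda_0^2 n_1^2 + (\lambda_0 n_2 \pm 1/2)^2 = \lambda_0^2 \pm \lambda_0 n_2 + 1/4$, producing exactly $\mp(\lambda_0 n_1^2 n_2/3)\log(1 \pm n_2/\lambda_0 + 1/(4\lambda_0^2))$.

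The main obstacle is justifying the Taylor truncation: the remainder $\lambda^2/\sin^2(\pi\lambda) - 1/\pi^2 - \lambda^2/3$ is not uniformly small on $[0, 1/2]$, so the displayed ``$\approx$'' is not a pointwise identity but an asymptotic one. The argument has to rely on how $\mathcal{G}_\pm$ is used downstream in the proof of Theorem \ref{th:U2_plane_wave}: the contribution of the high-$\lambda$ region is damped by the rational kernel $1/[\lambda_0^2 n_1^2 + (\lambda_0 n_2 \pm \lambda)^2]$ and further suppressed after averaging over the random phase through $\sin^2(\pi c_i)$, so the correction is absorbed into the $\mathcal{O}(1/N^2)$ slack already tracked in Lemmas \ref{lem:det_plane_wave} and \ref{lem:det_plane_wave_2}. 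Controlling that remainder uniformly in $\lambda_0$, $n_1$, $n_2$, especially as $\lambda_0$ approaches the endpoints of the bandpass interval, is the technical step I expect to require the most care.
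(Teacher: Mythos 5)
Your proposal is correct and follows essentially the same route as the paper: both truncate $\lambda^2/\sin^2(\pi\lambda)$ to $1/\pi^2 + \lambda^2/3$ and then integrate the resulting rational function in closed form, the only (cosmetic) difference being that you complete the square via the substitution $u = \lambda_0 n_2 \pm \lambda$ whereas the paper invokes a partial-fraction reduction and tabulated primitives. Your closing remark that the truncation is a genuine approximation rather than a controlled asymptotic (the relative error of the two-term expansion reaches roughly $25\%$ at $\lambda = 1/2$) is a fair observation that the paper's ``$\approx$'' likewise leaves unquantified.
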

\begin{IEEEproof}
As far as we know, there is no analytic solution for the integral \eqref{eq:integral_G}. Hence, we may consider the second-order Taylor series only of the second term of the integrand, and thus
\begin{equation}
{\cal G}_{\pm}(\lambda_0,n_1,n_2) \approx \frac{n_1^2}{\pi^2} \int_{0}^{1/2} \frac{1}{\lambda^2 \pm 2 \lambda_0 n_2 \lambda  + \lambda_0^2} \left(1 + \frac{\pi^2 \lambda^2}{3}\right) \,d \lambda,
\end{equation}
which may be split as 
\begin{equation}
{\cal G}_{\pm}(\lambda_0,n_1,n_2) \approx \frac{n_1^2}{\pi^2} \int_{0}^{1/2} \frac{1}{\lambda^2 \pm 2 \lambda_0 n_2 \lambda  + \lambda_0^2}  \,d \lambda + \frac{n_1^2}{3} \int_{0}^{1/2} \frac{\lambda^2}{\lambda^2 \pm 2 \lambda_0 n_2 \lambda  + \lambda_0^2}   \,d \lambda.
\end{equation}
By applying a partial fraction expansion to the integrand of the second integral, we may find the primitive functions, e.g., in \cite{abramowitz}, and by substituting the integration limits the proof follows.
\end{IEEEproof}
Once we have these integrals we may find their sum, which is presented next.
\begin{lemma}
\label{lem:integral_g_2}
${\cal G}(\lambda_0,n_1,n_2)$ is given by
\begin{multline}
{\cal G}(\lambda_0,n_1,n_2) = {\cal G}_{+}(\lambda_0,n_1,n_2) + {\cal G}_{-}(\lambda_0,n_1,n_2) = \frac{n_1^2}{3} + \frac{ \lambda_0 n_1^2 n_2}{3}  \log\left(\frac{ \displaystyle 1 - \frac{4 \lambda_0 n_2}{1 + 4\lambda_0^2}}{\displaystyle 1 + \frac{4 \lambda_0 n_2}{1 + 4\lambda_0^2}}\right)   + \\ |n_1| \left( \frac{1}{\lambda_0 \pi^2} + \frac{\lambda_0}{3} \left( 2 n_2^2 -  1 \right) \right) \left[ \arctan \left(\frac{4 \lambda_0 |n_1|}{4 \lambda_0^2 - 1}\right) + \pi \right].
\end{multline}
\end{lemma}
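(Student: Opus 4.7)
The plan is to compute $\mathcal{G}(\lambda_0,n_1,n_2) = \mathcal{G}_+(\lambda_0,n_1,n_2)+\mathcal{G}_-(\lambda_0,n_1,n_2)$ by directly adding the two closed-form expressions already established in Lemma \ref{lem:integral_g} and collecting terms. There is no new integration to perform; all that is required is careful algebraic simplification, term by term.

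First I would handle the three pieces independently. The constant piece is trivial: $\tfrac{n_1^2}{6}+\tfrac{n_1^2}{6}=\tfrac{n_1^2}{3}$, producing the leading constant in the stated expression. The logarithmic pieces combine via $-\log A + \log B = \log(B/A)$, giving
\begin{equation*}
\frac{\lambda_0 n_1^2 n_2}{3}\log\!\left(\frac{1-\tfrac{n_2}{\lambda_0}+\tfrac{1}{4\lambda_0^2}}{1+\tfrac{n_2}{\lambda_0}+\tfrac{1}{4\lambda_0^2}}\right),
\end{equation*}
and multiplying numerator and denominator inside the logarithm by $4\lambda_0^2/(1+4\lambda_0^2)$ rewrites this as $\frac{\lambda_0 n_1^2 n_2}{3}\log\!\left(\frac{1-4\lambda_0 n_2/(1+4\lambda_0^2)}{1+4\lambda_0 n_2/(1+4\lambda_0^2)}\right)$, matching the stated log term.

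The substantive step is the arctangent part. The $\pm\arctan(n_2/|n_1|)$ contributions cancel between $\mathcal{G}_+$ and $\mathcal{G}_-$, leaving
\begin{equation*}
\arctan\!\left(\frac{1}{2\lambda_0|n_1|}+\frac{n_2}{|n_1|}\right)+\arctan\!\left(\frac{1}{2\lambda_0|n_1|}-\frac{n_2}{|n_1|}\right),
\end{equation*}
to which I would apply the arctangent addition identity. With $x,y$ denoting the two arguments, one computes $x+y=1/(\lambda_0|n_1|)$ and, using $n_1^2+n_2^2=1$, one finds $1-xy=(4\lambda_0^2-1)/(4\lambda_0^2 n_1^2)$. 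The ratio $(x+y)/(1-xy)$ then collapses to $4\lambda_0|n_1|/(4\lambda_0^2-1)$, which is exactly the argument appearing in the lemma.

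The main obstacle is the correct branch of the arctangent identity: the naive formula $\arctan x+\arctan y=\arctan\!\big((x+y)/(1-xy)\big)$ is only valid when $xy<1$, and in the regime of interest (low-frequency band-pass fields with $\lambda_0<1/2$) one verifies that $xy>1$ while $x,y>0$, so the correct identity carries an additional $+\pi$. This is precisely the $+\pi$ appearing alongside the arctangent in the statement. Once this branch issue is resolved the proof is complete by substitution; no further analysis is needed.
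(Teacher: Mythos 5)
Your proposal is correct and follows essentially the same route as the paper: sum the two closed forms from Lemma \ref{lem:integral_g}, combine the logarithms (rescaling by $4\lambda_0^2/(1+4\lambda_0^2)$), note the cancellation of the $\mp\arctan(n_2/|n_1|)$ terms, and apply the arctangent addition identity using $n_1^2+n_2^2=1$ to collapse $(x+y)/(1-xy)$ to $4\lambda_0|n_1|/(4\lambda_0^2-1)$. You are in fact slightly more careful than the paper, which only writes the identity ``$(\text{mod}\,\pi)$'' without verifying that $xy>1$ with $x,y>0$ in the regime $\lambda_0<1/2$, which is what forces the explicit $+\pi$ in the statement.
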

\begin{IEEEproof}
It is easy to show that
\begin{multline}
\label{eq:G_plus_minus_temp}
{\cal G}(\lambda_0,n_1,n_2) = {\cal G}_{+}(\lambda_0,n_1,n_2) + {\cal G}_{-}(\lambda_0,n_1,n_2) = \frac{n_1^2}{3} + \frac{ \lambda_0 n_1^2 n_2}{3} \left[ \log\left(1 - \frac{n_2}{\lambda_0}  + \frac{1}{4\lambda_0^2}\right) -  \log\left(1 + \frac{n_2}{\lambda_0}  + \frac{1}{4\lambda_0^2}\right)  \right] + \\ |n_1| \left( \frac{1}{\lambda_0 \pi^2} + \frac{\lambda_0}{3} \left( 2 n_2^2 -  1 \right) \right) \left[ \arctan \left(\frac{1}{2 \lambda_0 |n_1|} + \frac{n_2}{|n_1|}\right) +  \arctan \left(\frac{1}{2 \lambda_0 |n_1|} - \frac{n_2}{|n_1|}\right)  \right].
\end{multline}
Now, using
\begin{equation}
\label{eq:tangents}
\arctan \left(a\right) +  \arctan \left(b\right) = \arctan \left(\frac{a + b}{1 - a b}\right) \quad (\text{mod} \, \pi)
\end{equation}
the proof follows.
\end{IEEEproof}

Plugging in the values of ${\cal G}(\lambda_0,n_1,n_2)$ and ${\cal G}(\lambda_0,n_2,n_1)$ in $E[U_2]$, it becomes
\begin{multline}
E[U_2] =  \frac{1}{6} + \frac{ \lambda_0}{6} E \left[n_1^2 n_2 \log\left(\frac{ \displaystyle 1 - \frac{4 \lambda_0 n_2}{1 + 4\lambda_0^2}}{\displaystyle 1 + \frac{4 \lambda_0 n_2}{1 + 4\lambda_0^2}}\right) \right] + \frac{ \lambda_0}{6} E \left[ n_2^2 n_1  \log\left(\frac{ \displaystyle 1 - \frac{4 \lambda_0 n_1}{1 + 4\lambda_0^2}}{\displaystyle 1 + \frac{4 \lambda_0 n_1}{1 + 4\lambda_0^2}}\right)  \right]   + \\ \left( \frac{1}{2 \lambda_0 \pi^2} - \frac{\lambda_0}{6} \right)  E \left[|n_1| \arctan \left(\frac{4 \lambda_0 |n_1|}{4 \lambda_0^2 - 1}\right)\right] + \frac{\lambda_0}{3}  E \left[ |n_1| n_2^2 \arctan \left(\frac{4 \lambda_0 |n_1|}{4 \lambda_0^2 - 1}\right) \right]  + \\ \left( \frac{1}{2 \lambda_0 \pi^2} - \frac{\lambda_0}{6} \right) E \left[ |n_2| \arctan \left(\frac{4 \lambda_0 |n_2|}{4 \lambda_0^2 - 1}\right) \right] + \frac{\lambda_0}{3} E \left[ |n_2| n_1^2 \arctan \left(\frac{4 \lambda_0 |n_2|}{4 \lambda_0^2 - 1}\right)\right]  + \\ \left( \frac{1}{2 \lambda_0 \pi} - \frac{\lambda_0 \pi}{6} \right) E\left[|n_1|\right] + \frac{\lambda_0 \pi}{3} E\left[|n_1| n_2^2 \right] + \left( \frac{1}{2 \lambda_0 \pi} - \frac{\lambda_0 \pi}{6} \right) E\left[|n_2|\right] + \frac{\lambda_0 \pi}{3} E\left[|n_2| n_1^2 \right].
\end{multline}
To prove the theorem we need to find each of the above expectations w.r.t. $\nu$, which we will take as uniformly distributed between $0$ and $\pi$. 
\begin{lemma}
The value of the first expectation is
\begin{align}
 E \left[n_1^2 n_2 \log\left(\frac{ \displaystyle 1 - \frac{4 \lambda_0 n_2}{1 + 4\lambda_0^2}}{\displaystyle 1 + \frac{4 \lambda_0 n_2}{1 + 4\lambda_0^2}}\right) \right] = - \lambda_0 + \frac{4}{3} \lambda_0^3.
\end{align}
\end{lemma}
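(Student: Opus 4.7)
The plan is to evaluate the expectation by integration by parts. Setting $a = 4\lambda_0/(1+4\lambda_0^2)$, the argument of the logarithm becomes $(1-a\sin\nu)/(1+a\sin\nu)$, and with $n_1 = \cos\nu$, $n_2 = \sin\nu$, and $\nu$ uniform on $[0,\pi]$, the target expectation equals
\begin{equation}
E = \frac{1}{\pi}\int_0^\pi \cos^2\nu\,\sin\nu\,\log\frac{1-a\sin\nu}{1+a\sin\nu}\,d\nu.
\end{equation}
I would then choose $u = \log\bigl((1-a\sin\nu)/(1+a\sin\nu)\bigr)$, so that $du = -2a\cos\nu\,d\nu/(1-a^2\sin^2\nu)$, and $dv = \cos^2\nu\sin\nu\,d\nu$, which integrates (via $w=\cos\nu$) to $v = -\cos^3\nu/3$. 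Since $\sin\nu$ vanishes at both endpoints, the boundary term drops out, leaving
\begin{equation}
E = -\frac{2a}{3\pi}\int_0^\pi \frac{\cos^4\nu}{1-a^2\sin^2\nu}\,d\nu.
\end{equation}

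The next step is to perform polynomial long division of $\cos^4\nu = (1-\sin^2\nu)^2$ by $1-a^2\sin^2\nu$, viewing $\sin^2\nu$ as the variable. This decomposition reads
\begin{equation}
\frac{\cos^4\nu}{1-a^2\sin^2\nu} = \frac{(1-a^2)^2/a^4}{1-a^2\sin^2\nu} + \frac{2}{a^2} - \frac{1}{a^4} - \frac{\sin^2\nu}{a^2}.
\end{equation}
Applying the standard integrals $\int_0^\pi d\nu/(1-a^2\sin^2\nu) = \pi/\sqrt{1-a^2}$ (valid for $|a|<1$, which holds since $a = 4\lambda_0/(1+4\lambda_0^2) \le 1$) and $\int_0^\pi \sin^2\nu\,d\nu = \pi/2$, one collapses the integral into a closed-form expression in $a$:
\begin{equation}
E = -\frac{2(1-a^2)^{3/2}}{3a^3} - \frac{1}{a} + \frac{2}{3a^3}.
\end{equation}

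Finally, I would substitute back $a = 4\lambda_0/(1+4\lambda_0^2)$. The key simplification is that $1-a^2 = (1-4\lambda_0^2)^2/(1+4\lambda_0^2)^2$, so (for the bandpass range $\lambda_0 \in (0,1/2)$) $\sqrt{1-a^2} = (1-4\lambda_0^2)/(1+4\lambda_0^2)$, and the three terms of $E$ become the rational expressions $-(1-4\lambda_0^2)^3/(96\lambda_0^3)$, $-(1+4\lambda_0^2)/(4\lambda_0)$, and $(1+4\lambda_0^2)^3/(96\lambda_0^3)$. Expanding each as a Laurent polynomial in $\lambda_0$, each produces terms of orders $\lambda_0^{-3},\lambda_0^{-1},\lambda_0,\lambda_0^{3}$. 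The one place requiring care, and the genuine content of the lemma, is verifying that the singular pieces cancel: the $\lambda_0^{-3}$ contributions from the first and third terms are opposite in sign and cancel identically, and the $\lambda_0^{-1}$ contributions combine with the $-1/a$ term to vanish as well. What remains is precisely $-\lambda_0 + \tfrac{4}{3}\lambda_0^3$, as claimed. The rest of the argument is only bookkeeping; the main obstacle is the algebraic cancellation of the apparent singularities at $\lambda_0 = 0$, which reflects the fact that $E$ must in fact be analytic (and odd) in $\lambda_0$.
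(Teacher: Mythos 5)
Your proof is correct, and it takes a genuinely different route from the paper's. The paper decomposes the trigonometric prefactor via $\cos^2\nu\sin\nu=\tfrac14\sin\nu+\tfrac14\sin 3\nu$, splits the expectation into two integrals $I_1,I_2$ of the form $\int_0^\pi\sin(m\nu)\log\bigl(\tfrac{1-a\sin\nu}{1+a\sin\nu}\bigr)\,d\nu$, and evaluates each by appeal to integral tables, obtaining $I_1=\tfrac{1}{2a}[\sqrt{1-a^2}-1]=-\lambda_0$ and $I_2=\tfrac{1}{6a^3}\bigl(4-3a^2-\tfrac{a^4-5a^2+4}{\sqrt{1-a^2}}\bigr)=\tfrac43\lambda_0^3$. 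You instead integrate by parts to eliminate the logarithm entirely, reducing everything to the single rational trigonometric integral $\int_0^\pi\cos^4\nu/(1-a^2\sin^2\nu)\,d\nu$, which you dispatch by polynomial division in $\sin^2\nu$ and the elementary identity $\int_0^\pi d\nu/(1-a^2\sin^2\nu)=\pi/\sqrt{1-a^2}$. I have checked your intermediate closed form $E=-\tfrac{2(1-a^2)^{3/2}}{3a^3}-\tfrac1a+\tfrac{2}{3a^3}$ and the back-substitution (using $1-a^2=(1-4\lambda_0^2)^2/(1+4\lambda_0^2)^2$ and $a^3=64\lambda_0^3/(1+4\lambda_0^2)^3$), and the cancellation of the $\lambda_0^{-3}$ and $\lambda_0^{-1}$ pieces does leave exactly $-\lambda_0+\tfrac43\lambda_0^3$. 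What your approach buys is self-containedness: it avoids the table lookups the paper leans on and makes the mechanism of the answer transparent (a single algebraic integral plus a Laurent-series cancellation). The only caveat, which you essentially acknowledge, is that the steps $\int_0^\pi d\nu/(1-a^2\sin^2\nu)=\pi/\sqrt{1-a^2}$ and $\sqrt{1-a^2}=(1-4\lambda_0^2)/(1+4\lambda_0^2)$ require $\lambda_0<1/2$ strictly (so that $a<1$); this matches the paper's own implicit restriction of frequencies to $(0,1/2)$, and the paper's formulas for $I_1,I_2$ carry the same restriction.
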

\begin{IEEEproof}
The expectation may be rewritten as
\begin{align}
 E \left[n_1^2 n_2 \log\left(\frac{ \displaystyle 1 - \frac{4 \lambda_0 n_2}{1 + 4\lambda_0^2}}{\displaystyle 1 + \frac{4 \lambda_0 n_2}{1 + 4\lambda_0^2}}\right) \right] &= \frac{1}{\pi} \int_{0}^{\pi} \cos^2 (\nu) \sin(\nu) \log\left(\frac{ \displaystyle 1 - \frac{4 \lambda_0 \sin(\nu)}{1 + 4\lambda_0^2}}{\displaystyle 1 + \frac{4 \lambda_0 \sin(\nu)}{1 + 4\lambda_0^2}}\right) \, d \nu \nonumber \\
 &= \underbrace{\frac{1}{4 \pi} \int_{0}^{\pi} \sin(\nu) \log\left(\frac{ \displaystyle 1 - \frac{4 \lambda_0 \sin(\nu)}{1 + 4\lambda_0^2}}{\displaystyle 1 + \frac{4 \lambda_0 \sin(\nu)}{1 + 4\lambda_0^2}}\right) \, d \nu}_{I_1} + \underbrace{\frac{1}{4 \pi} \int_{0}^{\pi} \sin(3\nu)  \log\left(\frac{ \displaystyle 1 - \frac{4 \lambda_0 \sin(\nu)}{1 + 4\lambda_0^2}}{\displaystyle 1 + \frac{4 \lambda_0 \sin(\nu)}{1 + 4\lambda_0^2}}\right) \, d \nu}_{I_2}.
\end{align}
For the sake of notational simplicity let us define $a = 4 \lambda_0/(1 + 4 \lambda_0^2)$. We may therefore write 
\begin{equation}
I_1 = \frac{1}{4 \pi} \int_{0}^{\pi} \sin(\nu) \log\left(\frac{ \displaystyle 1 - a \sin(\nu)}{\displaystyle 1 + a \sin(\nu)}\right) \, d \nu,
\end{equation}
and, using for instance \cite{abramowitz}, it is easy to show that
\begin{equation}
I_1 = \frac{1}{2 a} \left[ \sqrt{1 - a^2} - 1\right] = - \lambda_0.
\end{equation}
On the other hand, we have
\begin{equation}
I_2 = \frac{1}{4 \pi} \int_{0}^{\pi} \sin(3 \nu) \log\left(\frac{ \displaystyle 1 - a \sin(\nu)}{\displaystyle 1 + a \sin(\nu)}\right) \, d \nu,
\end{equation}
whose solution is
\begin{equation}
I_2 = \frac{1}{6 a^3} \left(4- 3 a^2 - \frac{a^4 - 5 a^2 + 4}{\sqrt{1 - a^2}} \right) = \frac{4}{3} \lambda_0^3,
\end{equation}
which concludes the proof.
\end{IEEEproof}
\begin{lemma}
The value of the second expectation is
\begin{align}
E \left[n_2^2 n_1 \log\left(\frac{ \displaystyle 1 - \frac{4 \lambda_0 n_1}{1 + 4\lambda_0^2}}{\displaystyle 1 + \frac{4 \lambda_0 n_1}{1 + 4\lambda_0^2}}\right) \right] = - \lambda_0 + \frac{4}{3} \lambda_0^3.
\end{align}
\end{lemma}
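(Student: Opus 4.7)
The plan is to derive the result by exploiting the symmetry between $n_1 = \cos\nu$ and $n_2 = \sin\nu$, thereby reducing the computation to the previous lemma rather than repeating a lengthy integral evaluation. The target expectation has the same structure as the preceding one, with the roles of $n_1$ and $n_2$ swapped, and since $\nu$ is uniform on $[0,\pi]$ this swap can be effected by a change of variables.

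With $a = 4\lambda_0/(1+4\lambda_0^2)$, I will first rewrite the target as
\begin{equation*}
I = \frac{1}{\pi}\int_{0}^{\pi} \sin^{2}\nu \,\cos\nu\, \log\!\left(\frac{1-a\cos\nu}{1+a\cos\nu}\right) d\nu.
\end{equation*}
Under the reflection $\nu \mapsto \pi-\nu$, the factor $\cos\nu$ flips sign and the logarithm flips sign as well (the argument of the log passes to its reciprocal), so the integrand is invariant. Splitting the interval at $\pi/2$ and using this invariance gives $I = (2/\pi)\int_{0}^{\pi/2}\sin^{2}\nu\,\cos\nu\,\log((1-a\cos\nu)/(1+a\cos\nu))\,d\nu$.

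Next, I will apply the substitution $\mu = \pi/2-\nu$ to the integral evaluated in the previous lemma. Under this change, $\cos\nu$ becomes $\sin\mu$ and $\sin\nu$ becomes $\cos\mu$, so the first lemma's integrand transforms exactly into $\sin^{2}\mu\,\cos\mu\,\log((1-a\cos\mu)/(1+a\cos\mu))$ with $\mu$ ranging over $[-\pi/2,\pi/2]$. Since this transformed integrand is manifestly even in $\mu$, the integral equals $(2/\pi)\int_{0}^{\pi/2}$ of the same integrand that appears in the simplification of $I$. Hence $I$ coincides with the value established in the previous lemma, namely $-\lambda_0 + \tfrac{4}{3}\lambda_0^{3}$.

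I do not expect any real obstacle here: the entire argument reduces to careful bookkeeping of signs under $\nu \mapsto \pi-\nu$ and the half-range reduction $\mu = \pi/2-\nu$. As a sanity check, one could mirror the previous lemma's proof verbatim by writing $\sin^{2}\nu = (1-\cos 2\nu)/2$, splitting into two pieces $I_1'$ and $I_2'$, and applying the Abramowitz integral identities that were used there; this yields the same answer but is considerably less transparent than the symmetry argument proposed above.
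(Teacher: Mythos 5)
Your proof is correct. The reflection $\nu\mapsto\pi-\nu$ does leave the integrand $\sin^2\nu\,\cos\nu\,\log\bigl((1-a\cos\nu)/(1+a\cos\nu)\bigr)$ invariant (both $\cos\nu$ and the logarithm flip sign, so their product does not), and the substitution $\mu=\pi/2-\nu$ applied to the first lemma's integral, followed by the evenness of the resulting integrand on $[-\pi/2,\pi/2]$, identifies the two expectations exactly. Since $a=4\lambda_0/(1+4\lambda_0^2)\le 1$ for the admissible frequencies, the logarithm is well defined throughout and no regularity issues arise. The paper disposes of this lemma with the single remark that ``the proof follows along similar lines as the previous lemma,'' which in context means repeating the computation: expand $\sin^2\nu\,\cos\nu=\tfrac14\cos\nu-\tfrac14\cos 3\nu$, and evaluate the two resulting table integrals just as $I_1$ and $I_2$ were evaluated before. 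Your route is genuinely different and preferable: it proves the two expectations are \emph{literally equal} by a change of variables, so the second value is inherited from the first with no new integral evaluations and no opportunity for sign or coefficient slips. What the paper's implicit approach buys is only uniformity of presentation; what yours buys is economy and a transparent explanation of \emph{why} the two answers coincide. Your closing remark that one could alternatively mirror the earlier proof verbatim is exactly the paper's intended argument, so you have in effect supplied both.
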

\begin{IEEEproof}
The proof follows along similar lines as the previous lemma.
\end{IEEEproof}
\begin{lemma}
The value of the third expectation is
\begin{equation}
E \left[|n_1| \arctan \left(\frac{4 \lambda_0 |n_1|}{4 \lambda_0^2 - 1}\right)\right] = - 2 \lambda_0.
\end{equation}
\end{lemma}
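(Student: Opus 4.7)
\begin{IEEEproof}[Proof proposal]
The plan is to convert the expectation into a standard trigonometric integral, apply integration by parts to peel off the $\arctan$, and then evaluate the resulting rational trigonometric integral in closed form. Since $\nu$ is uniform on $[0,\pi]$ and the integrand $|n_1|\arctan(4\lambda_0|n_1|/(4\lambda_0^2-1))$ depends on $\nu$ only through $|\cos\nu|$, the substitution $\nu\mapsto\pi-\nu$ shows that the halves $[0,\pi/2]$ and $[\pi/2,\pi]$ contribute equally, so
\begin{equation*}
E\!\left[|n_1|\arctan\!\left(\tfrac{4\lambda_0|n_1|}{4\lambda_0^2-1}\right)\right]
=\frac{2}{\pi}\int_0^{\pi/2}\cos\nu\,\arctan\!\left(\frac{4\lambda_0\cos\nu}{4\lambda_0^2-1}\right)d\nu,
\end{equation*}
which removes the absolute value and lets us handle the rest as an ordinary integral on a quarter period.

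Next I would integrate by parts with $u=\arctan(4\lambda_0\cos\nu/(4\lambda_0^2-1))$ and $dv=\cos\nu\,d\nu$, so $v=\sin\nu$. The boundary term vanishes at both endpoints ($\sin 0=0$ at $\nu=0$; $\arctan 0=0$ at $\nu=\pi/2$), leaving
\begin{equation*}
\int_0^{\pi/2}\cos\nu\,\arctan\!\left(\frac{4\lambda_0\cos\nu}{4\lambda_0^2-1}\right)d\nu
=4\lambda_0(4\lambda_0^2-1)\int_0^{\pi/2}\frac{\sin^2\nu\,d\nu}{(4\lambda_0^2-1)^2+16\lambda_0^2\cos^2\nu},
\end{equation*}
after differentiating the arctan and simplifying. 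The key algebraic observation that makes this integral tractable is the identity $(4\lambda_0^2-1)^2+16\lambda_0^2=(4\lambda_0^2+1)^2$, which means the standard evaluation $\int_0^{\pi/2}d\nu/(a^2+b^2\cos^2\nu)=\pi/(2a\sqrt{a^2+b^2})$ yields $\pi/[2|4\lambda_0^2-1|(4\lambda_0^2+1)]$ with a clean form. Writing $\sin^2\nu=1-\cos^2\nu$ and reducing the $\cos^2\nu$ term via $\cos^2\nu/(a^2+b^2\cos^2\nu)=(1-a^2/(a^2+b^2\cos^2\nu))/b^2$, both standard integrals collapse together; the factor $4\lambda_0^2+1$ cancels, and the remaining integral is $\pi/[4(1-4\lambda_0^2)]$ (in the regime $\lambda_0<1/2$ of interest, where the arctan argument is negative and consistent with a negative expectation).

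Substituting back, the $(4\lambda_0^2-1)$ factor in the prefactor cancels $(1-4\lambda_0^2)$ in the denominator up to a sign, giving
\begin{equation*}
\int_0^{\pi/2}\cos\nu\,\arctan\!\left(\frac{4\lambda_0\cos\nu}{4\lambda_0^2-1}\right)d\nu=-\pi\lambda_0,
\end{equation*}
and multiplying by $2/\pi$ produces the claimed value $-2\lambda_0$. The main obstacle is bookkeeping the sign of $4\lambda_0^2-1$: since $\lambda_0$ is a normalized spatial frequency lying in $(0,1/2)$, one has $|4\lambda_0^2-1|=1-4\lambda_0^2$, and it is this sign flip (not any deep identity) that produces the overall minus. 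Everything else is routine once one recognizes the factorization $(4\lambda_0^2-1)^2+16\lambda_0^2=(4\lambda_0^2+1)^2$ that makes the denominator of the post–integration-by-parts integrand cooperate with the standard formula.
\end{IEEEproof}
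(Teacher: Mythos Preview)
Your argument is correct. Both you and the paper begin with the same symmetry reduction to $[0,\pi/2]$, but from there the paper simply invokes the tabulated identity
\[
\frac{2}{\pi}\int_0^{\pi/2}\cos(\nu)\arctan(b\cos\nu)\,d\nu=\frac{1}{b}\left(\sqrt{b^2+1}-1\right),
\]
with $b=4\lambda_0/(4\lambda_0^2-1)$, and then substitutes to obtain $-2\lambda_0$. You instead \emph{derive} this identity by integrating by parts and evaluating the resulting rational trigonometric integral explicitly, exploiting the factorization $(4\lambda_0^2-1)^2+16\lambda_0^2=(4\lambda_0^2+1)^2$. Your route is more self-contained and exposes why the answer collapses to such a simple form (the factorization is exactly what makes $\sqrt{b^2+1}$ rational in $\lambda_0$), at the cost of a few extra lines of algebra; the paper's route is terser but relies on an external table. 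Your handling of the sign via $|4\lambda_0^2-1|=1-4\lambda_0^2$ for $\lambda_0\in(0,1/2)$ is the same consideration the paper implicitly uses when simplifying $\sqrt{b^2+1}$.
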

\begin{IEEEproof}
The expectation is given by
\begin{equation}
E \left[|n_1| \arctan \left(\frac{4 \lambda_0 |n_1|}{4 \lambda_0^2 - 1}\right)\right] = \frac{1}{\pi} \int_{0}^{\pi} |\cos(\nu)| \arctan \left(\frac{4 \lambda_0 |\cos(\nu)|}{4 \lambda_0^2 - 1}\right) \, d \nu.
\end{equation}
It is clear that the integrand is symmetric with respect to $\pi/2$, which yields
\begin{equation}
\frac{1}{\pi} \int_{0}^{\pi} |\cos(\nu)| \arctan \left(\frac{4 \lambda_0 |\cos(\nu)|}{4 \lambda_0^2 - 1}\right) \, d \nu = \frac{2}{\pi} \int_{0}^{\pi/2} \cos(\nu) \arctan \left(\frac{4 \lambda_0 \cos(\nu)}{4 \lambda_0^2 - 1}\right) \, d \nu,
\end{equation}
or equivalently
\begin{equation}
\frac{2}{\pi} \int_{0}^{\pi/2} \cos(\nu) \arctan \left(\frac{4 \lambda_0 \cos(\nu)}{4 \lambda_0^2 - 1}\right) \, d \nu = \frac{2}{\pi} \int_{0}^{\pi/2} \cos(\nu) \arctan \left(b \cos(\nu)\right) \, d \nu,
\end{equation}
where $b = 4 \lambda_0/(4 \lambda_0^2 - 1)$. Now, using \cite{abramowitz}, the integral can be written as
\begin{equation}
 \frac{2}{\pi} \int_{0}^{\pi/2} \cos(\nu) \arctan \left(b \cos(\nu)\right) \, d \nu =  \frac{1}{b} \left(\sqrt{b^2 + 1} - 1\right).
\end{equation}
Substituting the value of $b$, the proof follows.
\end{IEEEproof}
\begin{lemma}
The fourth expectation is given by
\begin{equation}
E \left[ |n_1| n_2^2 \arctan \left(\frac{4 \lambda_0 |n_1|}{4 \lambda_0^2 - 1}\right) \right] = - \frac{\lambda_0}{2} - \frac{2}{3} \lambda_0^3.
\end{equation}
\end{lemma}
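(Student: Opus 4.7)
The plan is to compute this expectation directly by differentiating under the integral sign with respect to the parameter $b = 4\lambda_0/(4\lambda_0^2 - 1)$. Since $\nu$ is uniform on $[0,\pi]$ and the integrand $|\cos\nu|\sin^2\nu\arctan(b|\cos\nu|)$ is symmetric about $\pi/2$, I would first reduce the expectation to
$$E\bigl[|n_1| n_2^2 \arctan(b|n_1|)\bigr] = \frac{2}{\pi} F(b), \qquad F(b) := \int_0^{\pi/2} \cos\nu\, \sin^2\nu\, \arctan(b\cos\nu)\, d\nu.$$
The point of using $F(b)$ rather than applying the preceding lemmas to $|n_1|n_2^2 = |n_1| - |n_1|^3$ is that the derivative $F'(b)$ has a clean rational form in $\cos^2\nu$, whereas a decomposition requires handling $|n_1|^3$ separately.

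Next, differentiating under the integral gives $F'(b) = \int_0^{\pi/2} \cos^2\nu\sin^2\nu/(1+b^2\cos^2\nu)\,d\nu$. Setting $x = \cos^2\nu$ and performing polynomial long division of $x - x^2$ by $1 + b^2 x$ yields
$$\frac{\cos^2\nu(1-\cos^2\nu)}{1+b^2\cos^2\nu} = -\frac{\cos^2\nu}{b^2} + \frac{b^2+1}{b^4} - \frac{b^2+1}{b^4\,(1+b^2\cos^2\nu)}.$$
Integrating termwise with $\int_0^{\pi/2}\cos^2\nu\,d\nu = \pi/4$ and the classical $\int_0^{\pi/2} d\nu/(1+b^2\cos^2\nu) = \pi/(2\sqrt{1+b^2})$ (via $\tan\nu = t$), I obtain a closed-form $F'(b)$. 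I then antidifferentiate: each term is elementary except $\int \sqrt{1+b^2}/b^4\,db$, which via $b = \tan\theta$ reduces to $\int \cos\theta/\sin^4\theta\,d\theta = -1/(3\sin^3\theta)$, giving $(1+b^2)^{3/2}/(6b^3)$ after resubstitution. The constant of integration is fixed by $F(0)=0$, and a Taylor expansion $F(b) = \pi b/16 + O(b^3)$ matches $F'(0) = \int_0^{\pi/2}\cos^2\nu\sin^2\nu\,d\nu = \pi/16$ as a consistency check. The result is
$$F(b) = -\frac{\pi}{4b} + \frac{\pi\bigl[(1+b^2)^{3/2}-1\bigr]}{6b^3}.$$

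Finally, I would substitute $b = 4\lambda_0/(4\lambda_0^2-1)$, working in the same regime as Lemma~\ref{lem:integral_g_2} where $4\lambda_0^2 < 1$. The key algebraic identity $(4\lambda_0^2-1)^2 + 16\lambda_0^2 = (4\lambda_0^2+1)^2$ yields $\sqrt{1+b^2} = (4\lambda_0^2+1)/(1-4\lambda_0^2)$, and expanding $(1+a)^3 - (1-a)^3 = 2a(3+a^2)$ with $a = 4\lambda_0^2$ simplifies $(1+b^2)^{3/2} - 1$. After combining the two terms over a common denominator, the bracket $3(1-4\lambda_0^2) - (3+16\lambda_0^4)$ telescopes to $-4\lambda_0^2(3+4\lambda_0^2)$, giving $F(b) = -\pi\lambda_0(3+4\lambda_0^2)/12$, and multiplication by $2/\pi$ produces exactly $-\lambda_0/2 - 2\lambda_0^3/3$.

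The main obstacle is bookkeeping at two spots: pinning down the constant of integration correctly via the $b\to 0$ expansion (so that the singular $-\pi/(4b)$ cancels against the leading term of $\pi(1+b^2)^{3/2}/(6b^3)$), and tracking signs at the substitution step where $b < 0$, so that $(4\lambda_0^2 - 1)^3 = -(1-4\lambda_0^2)^3$ is applied correctly. Everything else is a routine evaluation analogous to the proofs of the preceding lemmas.
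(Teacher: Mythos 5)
Your proof is correct, and it takes a genuinely different route from the paper's. The paper reduces the integral via the identity $\cos\nu\sin^2\nu=\tfrac14(\cos\nu-\cos 3\nu)$ to the two pieces $I_3$ and $I_4$, reuses the preceding lemma for $I_3$, and quotes a tabulated closed form (from Abramowitz and Stegun) for $\int_0^{\pi/2}\cos(3\nu)\arctan(b\cos\nu)\,d\nu$. You instead differentiate $F(b)=\int_0^{\pi/2}\cos\nu\sin^2\nu\arctan(b\cos\nu)\,d\nu$ under the integral sign, which turns the problem into the elementary rational integral $\int_0^{\pi/2}\cos^2\nu\sin^2\nu/(1+b^2\cos^2\nu)\,d\nu$, and then antidifferentiate in $b$ with the constant pinned by $F(0)=0$. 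I checked the key steps: the long division, the antiderivative $\int\sqrt{1+b^2}\,b^{-4}\,db=-(1+b^2)^{3/2}/(3b^3)$, the cancellation of the $-\pi/(4b)$ singularity in the small-$b$ expansion, and the final substitution with $b<0$ for $4\lambda_0^2<1$ all work out, landing on $-\lambda_0/2-2\lambda_0^3/3$. Your approach buys self-containedness (no table lookup for the $\cos 3\nu$ integral) and, notably, it makes explicit the sign bookkeeping for $b=4\lambda_0/(4\lambda_0^2-1)<0$, a point the paper's proof passes over silently; the paper's approach is shorter on the page because it delegates the hard integral to a reference and reuses the previous lemma.
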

\begin{IEEEproof}
The expectation is given by
\begin{equation}
E \left[ |n_1| n_2^2 \arctan \left(\frac{4 \lambda_0 |n_1|}{4 \lambda_0^2 - 1}\right) \right] = \frac{1}{\pi} \int_{0}^{\pi} |\cos(\nu)| \sin^2(\nu) \arctan \left(\frac{4 \lambda_0 |\cos(\nu)|}{4 \lambda_0^2 - 1}\right) \, d \nu,
\end{equation}
and taking into account the symmetry around $\pi/2$, it may be rewritten as
\begin{multline}
E \left[ |n_1| n_2^2 \arctan \left(\frac{4 \lambda_0 |n_1|}{4 \lambda_0^2 - 1}\right) \right] = \frac{2}{\pi} \int_{0}^{\pi/2} \cos(\nu) \sin^2(\nu) \arctan \left(\frac{4 \lambda_0 \cos(\nu)}{4 \lambda_0^2 - 1}\right) \, d \nu
= \\ \underbrace{\frac{1}{2 \pi} \int_{0}^{\pi/2} \cos(\nu) \arctan \left(\frac{4 \lambda_0 \cos(\nu)}{4 \lambda_0^2 - 1}\right) \, d \nu}_{I_3} -  \underbrace{\frac{1}{2 \pi} \int_{0}^{\pi/2} \cos(3 \nu) \arctan \left(\frac{4 \lambda_0 \cos(\nu)}{4 \lambda_0^2 - 1}\right) \, d \nu}_{I_4}.
\end{multline}
The solution to $I_3$ follows from the previous lemma, and the solution to $I_4$ is
\begin{align}
I_4 = \frac{1}{12 b^3} \left( 3 b^2+ 4 - \frac{{b}^{4} + 5{b}^{2} + 4}{\sqrt{{b}^{2}+1}}   \right),
\end{align}
which follows from \cite{abramowitz}.
\end{IEEEproof}

Similarly, we find that
\begin{align}
E \left[ |n_2| \arctan \left(\frac{4 \lambda_0 |n_2|}{4 \lambda_0^2 - 1}\right) \right] &= E \left[ |\sin (
\nu)| \arctan \left(\frac{4 \lambda_0 |\sin (\nu)|}{4 \lambda_0^2 - 1}\right) \right] = -2 \lambda_0,\\ 
E \left[ |n_2| n_1^2 \arctan \left(\frac{4 \lambda_0 |n_2|}{4 \lambda_0^2 - 1}\right)\right] &= E \left[ |\sin (\nu)| \cos^2 (\nu) \arctan \left(\frac{4 \lambda_0 |\sin (\nu)|}{4 \lambda_0^2 - 1}\right)\right] = -\frac{\lambda_0}{2} - \frac{2 \lambda_0^3}{3}.
\end{align}
The last expectations are given by
\begin{align}
E\left[|n_1|\right] &= E\left[|\cos(\nu)|\right] = \frac{2}{\pi}, \\
E\left[|n_1| n_2^2 \right] &= E\left[|\cos(\nu)| \sin^2(\nu) \right]  = \frac{2}{3 \pi},\\
E\left[|n_2|\right] &= E\left[|\sin(\nu)|\right] = \frac{2}{\pi}, \\ 
E\left[|n_2| n_1^2 \right] &= E\left[|\sin(\nu)| \cos^2(\nu) \right]  = \frac{2}{3 \pi}.
\end{align}
Finally, taking into account all individual integrals and after a lot of tedious algebra, the proof follows.

\section{Proof of Theorem \ref{th:U2_unidirectional}}
\label{th:U2_unidirectional_proof}

We shall start with the Fourier transform of $f(\x)$, which is
\begin{equation}
F(\lambdab) = \sum_{n, n'}f(\x_{n, n'})e^{-2\pi i\lambdab^T\x_{n, n'}},
\end{equation}
where $\k = 2 \pi \lambdab$. Using the spectral representation of the random field, it becomes\footnote{Contrary to the spectral representation in previous sections, the limits of the integral are $\pm 1/2$ rather than $\pm \infty$ since we are considering sampled (discrete) random fields.}
\begin{align}
F(\lambdab)  &=\int_{0}^{1/2}d{\cal Z}_f(\lambda)e^{i \pi (N-1) [(\lambda n_1- \lambda_1) + (\lambda n_2 - \lambda_2) ]} \frac{\sin(N\pi (\lambda n_1- \lambda_1))}{\sin(\pi(\lambda n_1- \lambda_1))}\frac{\sin(N\pi(\lambda n_2-\lambda_2))}{\sin(\pi(\lambda n_2 - \lambda_2))} \nonumber \\
&\phantom{=} + \int_{-1/2}^{0}d{\cal Z}_f(\lambda) e^{i \pi (N-1) [(\lambda n_1- \lambda_1) + (\lambda n_2 - \lambda_2) ]} \frac{\sin(N\pi(\lambda n_1- \lambda_1))}{\sin(\pi( \lambda n_1- \lambda_1))}\frac{\sin(N\pi(\lambda n_2- \lambda_2))}{\sin(\pi( \lambda n_2 - \lambda_2))} \nonumber\\
&=F^+(\lambdab)+F^-(\lambdab).
\end{align}
Moreover, it will be useful to know (to easily compare with $\hat{r}_{gg}(\0)$ and $\hat{r}_{hh}(\0)$)
\begin{multline}
E\left[\hat{r}_{ff}(\0)\right] ={r}_{ff}(\0)= \frac{1}{N^2} E\left[\sum_{n, n'} f^2(\x_{n, n'})\right] = \frac{1}{N^4} \sum_{\lambdab} \left\{ E\left[\left|F^{+}(\lambdab)\right|^2\right] + E\left[\left|F^{-}(\lambdab)\right|^2\right] \right. \\ \left. \phantom{E\left[\left|F^{+}(\lambdab)\right|^2\right]} + E\left[F^{+ \ast}(\lambdab) F^{-}(\lambdab)\right] + E\left[F^{+}(\lambdab) F^{- \ast}(\lambdab)\right]  \right\},
\end{multline}
where we have applied the Parseval-Rayleigh relationship, and the expectations, unless otherwise stated, assume a fixed direction. Taking into account that the spectral process is proper, i.e.,
\begin{align}
\cov\{d{\cal Z}_f(\lambda),d{\cal Z}_f(\lambda) \}&=S_{ff}(k)\;d\lambda\;d\lambda \delta(\lambda-\lambda),\\
\cov\{d{\cal Z}_f(\lambda),d{\cal Z}_f(-\lambda) \}&=\cov\{d{\cal Z}_f(\lambda),d{\cal Z}_f^\ast(\lambda) \}=0,
\end{align}
where we have used the fact that $f(\x)$ is real, we find that
\begin{equation}
E\left[\left|F^-(\lambdab)\right|^2\right] = E\left[\left|F^+(\lambdab)\right|^2\right],
\end{equation}
and
\begin{equation}
E \left[F^+(\lambdab)F^{-\ast}(\lambdab)\right]= E \left[F^{+\ast}(\lambdab)F^{-}(\lambdab)\right] = 0.
\end{equation}
This yields
\begin{equation}
\label{eq:temp_doubt}
E\left[\hat{r}_{ff}(\0)\right] = \frac{2}{N^4}\sum_{\lambdab}\int_{0}^{1/2}S(\lambda) d\lambda \frac{\sin^2(\pi N (\lambda n_1-\lambda_1))}
{\sin^2(\pi (\lambda n_1-\lambda_1))}\frac{\sin^2(\pi N (\lambda n_2-\lambda_2))}{\sin^2(\pi(\lambda n_2 - \lambda_2))}.
\end{equation}

Let us continue with the periodic discrete Riesz transform
\begin{equation}
\tilde{G}(\lambdab) = -i \frac{\lambda_1}{\sqrt{\lambda_1^2+ \lambda_2^2}}F(\lambdab) =  -i \frac{\lambda_1}{\sqrt{\lambda_1^2+ \lambda_2^2}} \left[F^+(\lambdab)+F^-(\lambdab) \right],
\end{equation}
and its square
\begin{equation}
\left|\tilde{G}(\lambdab)\right|^2 =  \frac{\lambda_1^2}{\lambda_1^2+\lambda_2^2}\left|F^+(\lambdab)+F^-(\lambdab) \right|^2,
\end{equation}
which allows us to write
\begin{multline}
\hat{r}_{\tilde{g} \tilde{g}}(\0) = \frac{1}{N^2} \sum_{n, n'} \tilde{g}^2(\x_{n, n'}) = \frac{1}{N^4}\sum_{\lambdab}\left|\tilde{G}(\lambdab)\right|^2 = \\ \frac{1}{N^4}\sum_{\lambdab}\frac{\lambda_1^2}{\lambda_1^2+\lambda_2^2}\left( \left|F^+(\lambdab)\right|^2+F^+(\lambdab)F^{-\ast}(\lambdab)
+F^{+\ast}(\lambdab)F^{-}(\lambdab)+\left|F^-(\lambdab)\right|^2\right).
\end{multline}
On the other hand, using $g(\x) = n_1 s(\x)$, we find that
\begin{equation}
\hat{r}_{g g}(\0) =  \frac{1}{N^2} \sum_{n, n'} g^2(\x_{n, n'}) = \frac{n_1^2}{N^4}\sum_{\lambdab} \left| S(\lambdab)\right|^2.
\end{equation}
Moreover, it is easy to show that $S(\lambdab) = - i F^{+}(\lambdab) + i F^{-}(\lambdab)$, which yields
\begin{equation}
\hat{r}_{gg}(\bm{0}) = \frac{n_1^2}{N^4}\left[\sum_{\lambdab} \left(\left|F_N^+(\lambdab)\right|^2+\left|F_N^-(\lambdab)\right|^2\right) \right]+{\cal
O}\left(N^{-2} \right).
\end{equation}
The expectation of the error between the covariance of the infinite-length discrete-space and the periodic discrete Riesz transforms becomes
\begin{multline}
E[\Delta r_{gg}] = E[\hat{r}_{\tilde{g} \tilde{g}}(\mathbf{0})] - E[\hat{r}_{gg}(\mathbf{0})] = \\
\frac{2}{N^4}\sum_{\lambdab}\int_{0}^{1/2}\left(\frac{\lambda_1^2}{\lambda_1^2+\lambda_2^2}-n_1^2\right) \frac{\sin^2(\pi N (\lambda n_1- \lambda_1))} {\sin^2(\pi (\lambda n_1- \lambda_1))}\frac{\sin^2(\pi N (\lambda n_2-\lambda_2))}{\sin^2(\pi(\lambda n_2 - \lambda_2))} S(\lambda) \, d\lambda.
\end{multline}
Similar terms may be obtained for $E[\Delta r_{hh}]$ and $E[\Delta r_{gh}]$. 

Using the Taylor series of $\hat{\mathcal{U}}$ given in \eqref{eq:U_taylor}, $U_2$ is given by
\begin{equation}
U_2 = \frac{2 n_2^2 \Delta \hat{r}_{\tilde{g} \tilde{g}} + 2 n_1^2
\Delta \hat{r}_{\tilde{h} \tilde{h}} - 4 n_1 n_2 \Delta \hat{r}_{\tilde{g} \tilde{h}}}{\hat{r}_{ff} (\0)} + {\cal O}\left(\max( \Delta r^{3}_{\tilde{(\cdot)} \tilde{(\cdot)}  })\right).
\end{equation}
Let us now write $\hat{r}_{ff} (\0) = E\left[\hat{r}_{ff}(\0)\right] + \delta \hat{r}_{ff}(\0)$ and perform a Taylor series expansion of $1 + \delta \hat{r}_{ff}(\0)/E\left[\hat{r}_{ff}(\0)\right]$ to find that
\begin{equation}
U_2 = \frac{2 n_2^2 \Delta \hat{r}_{\tilde{g} \tilde{g}} + 2 n_1^2
\Delta \hat{r}_{\tilde{h} \tilde{h}} - 4 n_1 n_2 \Delta \hat{r}_{\tilde{g} \tilde{h}}}{E\left[\hat{r}_{ff}(\0)\right]} \left(1 - \frac{\delta \hat{r}_{ff}(\0)}{E\left[\hat{r}_{ff}(\0)\right]} \right) + {\cal O}\left(\max( \Delta r^{3}_{\tilde{(\cdot)} \tilde{(\cdot)}  })\right).
\end{equation}
Now, taking the expectation of $U_2$ it follows that
\begin{equation}
E\left[U_2\right] = \frac{2 n_2^2 E[\Delta \hat{r}_{\tilde{g} \tilde{g}}] + 2 n_1^2
E[\Delta \hat{r}_{\tilde{h} \tilde{h}}] - 4 n_1 n_2 E[\Delta \hat{r}_{\tilde{g} \tilde{h}}]}{E\left[\hat{r}_{ff}(\0)\right]} + {\cal O}\left(\max( \Delta r^{3}_{\tilde{(\cdot)} \tilde{(\cdot)}  })\right).
\end{equation}
Plugging in the values for $\Delta \hat{r}_{\tilde{g} \tilde{g}}, \ldots$, it becomes
\begin{equation}
E\left[U_2\right] 
=\frac{2}{E\left[\hat{r}_{ff}(\0)\right]} \int_0^{1/2} \left[\sum_{\lambda_1=-N/2}^{N/2-1} \sum_{\lambda_2=-N/2}^{N/2-1} C(\lambda_1,\lambda_2)\right] S(\lambda) d \lambda + \mathcal{O}(1/N^2),
\end{equation}
where $C(\lambda_1,\lambda_2)$ is defined in Lemma \ref{lem:det_plane_wave}, with the exception that $\lambda_0$ is substituted by $\lambda$. Now, using Lemma \ref{lem:det_plane_wave_2}, it is possible to rewrite $E\left[U_2\right]$ as
\begin{equation}
E\left[U_2\right] 
=\frac{2}{N E\left[\hat{r}_{ff}(\0)\right]} \int_0^{1/2} \left[ \sin^{2}(\pi c_2) {\cal G}(\lambda,n_1,n_2) + \sin^{2}(\pi c_1) {\cal G}(\lambda,n_2,n_1)\right] S(\lambda) d\lambda + \mathcal{O}(1/N^2).
\end{equation}
Finally, taking the expectation also with respect to the direction, it becomes clear, using Lemmas \ref{lem:integral_g}, \ref{lem:integral_g_2} and Theorem \ref{th:U2_plane_wave}, that the expectation is
\begin{equation}
E\left[U_2\right] 
=\frac{1}{N} \frac{ \displaystyle \int_0^{1/2} \left[ \frac{4}{ \pi^2} \frac{1}{\lambda} - \frac{4}{9} \lambda +   \frac{1}{3} - \frac{4}{\pi^2} \right] S(\lambda) d\lambda}{\displaystyle \int_0^{1/2} S(\lambda) d\lambda} + \mathcal{O}(1/N^2),
\end{equation}
where we have used
\begin{equation}
E\left[\hat{r}_{ff}(\0)\right] = {r}_{ff}(\0) =  2 \int_0^{1/2} S(\lambda) d\lambda.
\end{equation}
This is equivalent to \eqref{eq:temp_doubt} because the method of moments estimator is unbiased for the variance.
        
\section*{Acknowledgments}

The work of S. Olhede was supported by the U. K. government, Engineering and Physical Sciences Research Council, under the High Dimensional Models for Multivariate Time Series Analysis leadership fellowship (EP/I005250/1). The work of P. Schreier was supported by the Alfried Krupp von Bohlen und Halbach Foundation, under its program ``Return of German scientists from abroad''. We would like to thank Kevin W. Lewis for the Venus topography data set, which can be found at \burl{http://geo.pds.nasa.gov/missions/magellan/shadr_topo_grav/index.htm}. Finally, we thank the anonymous referees for helpful suggestions on exposition and additional references, which improved the clarity of the paper.

\bibliographystyle{IEEEtran}
\bibliography{uni_bib}

\end{document}